\pgfplotsset{compat=1.7}
\newcolumntype{L}[1]{>{\raggedright\arraybackslash}p{#1}}
\newcolumntype{C}[1]{>{\centering\arraybackslash}m{#1}}
\newcolumntype{R}[1]{>{\raggedleft\arraybackslash}p{#1}}
\newcommand{\DTV}[2]{d_{\mathrm{TV}}\left({#1},{#2}\right)}
\newcommand{\e}{\mathrm{e}}
\renewcommand{\epsilon}{\varepsilon}
\newcommand{\oracle}{T}
\newtheorem{theorem}{Theorem}[section]
\newtheorem*{claim*}{Claim}
\newtheorem{condition}[theorem]{Condition}
\newtheorem{lemma}[theorem]{Lemma}
\newtheorem{corollary}[theorem]{Corollary}
\theoremstyle{definition}
\newtheorem{definition}[theorem]{Definition}
\newtheorem{problem}[theorem]{Problem}
\newtheorem{remark}[theorem]{Remark}
\newtheorem*{remark*}{Remark}
\newtheorem{assumption}{Assumption}
\def\Pr{\mathop{\mathbf{Pr}}\nolimits}
 \newcommand{\tuple}[1]{\left(#1\right)} 
 \newcommand{\tp}{\tuple}
\newcommand{\dvg}[3]{D_{#3}\left({{#1}\,\|\,{#2}}\right)}
\newcommand{\hdvg}[3]{\hat{D}_{#3}\left({{#1}\,\|\,{#2}}\right)}
\newcommand{\tdvg}[3]{\tilde{D}_{#3}\left({{#1}\,\|\,{#2}}\right)}
\def\*#1{\mathbf{#1}} 
\def\+#1{\mathcal{#1}} 
\def\-#1{\mathrm{#1}} 
\renewcommand{\Pr}[2][]{ \ifthenelse{\isempty{#1}}
  {\mathbf{Pr}\left[#2\right]} {\mathbf{Pr}_{#1}\left[#2\right]} } 
\newcommand{\E}[2][]{ \ifthenelse{\isempty{#1}}
  {\mathbf{\mathbf{E}}\left[#2\right]}
  {\mathbf{\mathbf{E}}_{#1}\left[#2\right]} }
  \newcommand{\Var}[2][]{ \ifthenelse{\isempty{#1}}
  {\mathbf{\mathbf{Var}}\left[#2\right]}
  {\mathbf{\mathbf{Var}}_{#1}\left[#2\right]} }
\crefname{theorem}{Theorem}{Theorems}
\crefname{observation}{Observation}{Observations}
\crefname{claim}{Claim}{Claims}
\crefname{condition}{Condition}{Conditions}
\crefname{algorithm}{Algorithm}{Algorithms}
\crefname{property}{Property}{Properties}
\crefname{example}{Example}{Examples}
\crefname{fact}{Fact}{Facts}
\crefname{lemma}{Lemma}{Lemmas}
\crefname{corollary}{Corollary}{Corollaries}
\crefname{definition}{Definition}{Definitions}
\crefname{remark}{Remark}{Remarks}
\crefname{proposition}{Proposition}{Propositions}
\crefname{equation}{equation}{equations}
\crefname{enumi}{}{}
\crefname{enumii}{}{}
\crefname{enumiii}{}{}
\crefname{enumiv}{}{}
\title{On approximating the $f$-divergence between two Ising models}
\author{\ifthenelse{\boolean{DoubleBlind}}{Author(s)}
{Weiming Feng \footnote{School of Computing and Data Science, The University of Hong Kong.\\ \makebox[1.78em][l]{}Emails: \texttt{wfeng@hku.hk} and \texttt{fyc0130@connect.hku.hk}.
} \and Yucheng Fu \footnotemark[1]
}}
\date{}
\begin{document}

\maketitle

\begin{abstract}
    The $f$-divergence is a fundamental notion that measures the difference between two distributions. In this paper, we study the problem of approximating the $f$-divergence between two Ising models, which is a generalization of recent work on approximating the TV-distance. Given two Ising models $\nu$ and $\mu$, which are specified by their interaction matrices and external fields, the problem is to approximate the $f$-divergence $\dvg{\nu}{\mu}{f}$ within an arbitrary relative error $\e^{\pm \varepsilon}$. For $\chi^\alpha$-divergence with a constant integer $\alpha$, we establish both algorithmic and hardness results. The algorithm works in a parameter regime that matches the hardness result. Our algorithm can be extended to other $f$-divergences such as $\alpha$-divergence, Kullback-Leibler divergence, R\'enyi divergence, Jensen-Shannon divergence, and squared Hellinger distance.
\end{abstract}

\section{Introduction}
Let $\nu$ and $\mu$ be two distributions with the same support $\Omega$. Let $f: \mathbb{R}_{> 0} \to \mathbb{R}_{\geq 0}$ be a convex function such that $f(1) = 0$ and $f$ is strictly convex around 1. The $f$-divergence of $\nu$ from $\mu$ is defined by
\begin{align}
\dvg{\nu}{\mu}{f} = \E[X\sim \mu]{f\left(\frac{\nu(X)}{\mu(X)}\right)} = \sum_{\sigma \in \Omega}\mu(\sigma)f\left(\frac{\nu(\sigma)}{\mu(\sigma)} \right).
\end{align}
The $f$-divergence is a very general notion that measures the difference between two distributions. For instance, $f(x) = \frac{1}{2}|x-1|$ gives the \emph{total variation distance (TV-distance)}, $f(x) = \frac{1}{2}(x-1)^2$ gives the $\chi^2$-\emph{divergence}, and $f(x) = x \ln x - x + 1$ gives the \emph{Kullback-Leibler (KL) divergence}.

The \emph{Ising model} is a fundamental graphical model in statistical physics, probability theory, and machine learning.
Let $G=(V,E)$ be a graph. Let $J \in \mathbb{R}^{V \times V}$ be a \emph{symmetric interaction matrix} such that $J_{uv} \neq 0$ only if $\{u,v\} \in E$. Let $h \in \mathbb{R}^V$ be the \emph{external fields vector}. An Ising model specified by $(G,J,h)$ defines a \emph{Gibbs distribution} $\mu$ with support $\Omega = \{-1,+1\}^V$ such that 
\begin{align*}
    \forall \sigma \in \{-1,+1\}^V, \quad \mu(\sigma) = \frac{w_\mu(\sigma)}{Z_\mu} = \frac{\exp \left( \frac{1}{2}\sigma^T J \sigma + h^T \sigma\right)}{Z_\mu}, \quad\text{where } Z_\mu = \sum_{\sigma \in \{-1,+1\}^V} w_\mu(\sigma).
\end{align*}
The function $w_\mu: \Omega \to \mathbb{R}_{\geq 0}$ is called the \emph{weight function} of the Ising model and the normalization factor $Z_\mu$ is called the \emph{partition function} of the Ising model.

Recently, the problem of computing the TV-distance between two high-dimensional distributions has received increasing attention.
One interesting result~\cite{BGMMPV23} has proved that even for a pair of product distributions~(Ising models on an empty graph), the \emph{exact} computation of TV-distance is \#P-hard. 
Later on, polynomial time \emph{approximation algorithms} were proposed for product distributions~\cite{FGJW23,FengLL24}. 
For general Ising models, both algorithmic and hardness results were established~\cite{BGMMPV24ICLR,feng2025approximating} for approximating the TV-distance.

In this paper, we consider a more general problem of approximating the $f$-divergence between two Ising models. The problem is defined as follows.

\begin{problem}\label{prob:Ising}
Approximating the $f$-divergence for two Ising models. 
\begin{itemize}
    \item \textbf{Input}: Two Ising models $(G, J^\nu,h^\nu)$ and $(G,J^\mu,h^\mu)$ specifying two Gibbs distributions $\nu$ and $\mu$ respectively\footnote{\Cref{prob:Ising} assumes that $\nu$ and $\mu$ have the same underlying graph $G$. This assumption does not lose generality because the definition allows $J^\nu_{uv}$ and $J^\mu_{uv}$ to be 0 even if $\{u,v\} \in E$.}, a function $f$ defining the $f$-divergence, and an error bound parameter $\varepsilon$;
    \item \textbf{Output}: A number $\hat D \in \mathbb{R}$ such that $\e^{-\varepsilon}\dvg{\nu}{\mu}{f} \leq \hat D \leq \e^\varepsilon \dvg{\nu}{\mu}{f}$.
\end{itemize}
\end{problem}

A randomized algorithm is said to be an \emph{FPRAS} (fully polynomial randomized approximation scheme) for \Cref{prob:Ising} if it runs in time polynomial in $n = |V|$ and $1/\varepsilon$ and with probability at least $\frac{2}{3}$, the output approximates the value of the $f$-divergence within relative error $\e^{\pm \varepsilon}$.

Our algorithmic result is a reduction from $f$-divergence approximation to sampling and approximate counting, which are two fundamental computational tasks for the Ising model. 
There is long-line of research on developing efficient algorithms~\cite{levin2017markov} for sampling and approximate counting. We assume the following abstract oracles for sampling and approximate counting.

\begin{definition}[sampling and approximate counting oracles]\label{def:sample-approx-count-oracle}
    Let $(G,J^\mu,h^\mu)$ be an Ising model with Gibbs distribution $\mu$ and partition function $Z_\mu$. Let $\oracle^{\mathrm{sp}}_G,\oracle^{\mathrm{ct}}_G:\mathbb{R}_{>0} \rightarrow [|V|+|E|,\infty)$ be two non-increasing functions. Given any error bound $\varepsilon > 0$,
    \begin{itemize}
        \item The sampling oracle for $(G,J^\mu,h^\mu)$ with cost function $\oracle^{\mathrm{sp}}_G$ returns a random sample $X\in \{-1,+1\}^V$ in time $\oracle^{\mathrm{sp}}_G(\varepsilon)$ with $\DTV{X}{\mu} \leq \varepsilon$, where $\DTV{X}{\mu}$ is the total variation distance between $X$ and $\mu$.
        \item The approximate counting oracle for $(G,J^\mu,h^\mu)$ with cost function $\oracle^{\mathrm{ct}}_G$ returns a random number $\hat{Z}_\mu$ in time $\oracle^{\mathrm{ct}}_G(\varepsilon)$ with $\Pr{\e^{-\varepsilon}Z_\mu \leq \hat{Z}_\mu \leq \e^{\varepsilon}Z_\mu} \geq 0.99$.
    \end{itemize}
\end{definition}
For functions $\oracle^{\mathrm{sp}}_G(\varepsilon)$ and $\oracle^{\mathrm{ct}}_G(\varepsilon)$, we add the index $G$ to emphasize that the running time also depends on parameters of graph $G$ such as the number of vertices/edges and the maximum degree. We assume the oracles need to read the whole graph $G$ so that the cost is at least $|V| + |E|$.

We also require the following mild assumption on the Ising models.

\begin{definition}[marginal lower bound]\label{def:marginal-lower-bound}
  Let $b \geq 0$ be a constant. A Gibbs distribution $\mu$ is said to satisfy the $b$-marginal lower bound if for any $\Lambda \subseteq V$, any pinning $\sigma \in \{-1,+1\}^\Lambda$, any $v \in V \setminus \Lambda$, and any $c \in \{-1,+1\}$, it holds that $\mu^\sigma_v(c) \geq b$, where $\mu^\sigma_v(\cdot)$ denotes the marginal distribution on $v$ projected from $\mu$ conditional on that the configuration on $\Lambda$ is fixed as $\sigma$.
\end{definition}

The marginal lower bound condition is a natural and common assumption for the Ising model. The assumption is widely used in the literature of sampling and approximate counting~\cite{CLV21}, learning theory~\cite{Bresler15}, and TV-distance approximation~\cite{feng2025approximating}.

\subsection{Algorithm and hardness results for \texorpdfstring{$\chi^\alpha$}{chi-alpha}-divergence approximation}

Let $\alpha \geq 1$ be a constant integer. The $\dvg{\nu}{\mu}{f} = \dvg{\nu}{\mu}{\chi^\alpha}$ is called the $\chi^\alpha$-divergence if $f(x) = \frac{1}{2}|x-1|^\alpha$.
We give the following approximation algorithm for the $\chi^\alpha$-divergence between two Ising models in \Cref{prob:Ising}. Given two input Ising models $(G,J^\nu,h^\nu)$ and $(G,J^\mu,h^\mu)$, define the following family of Ising models on the graph $G$:
\begin{align}\label{eq:family-of-isings}
    \+F(\nu,\mu,\alpha) = \left\{ \text{Ising model } (G, J^{(k)}, h^{(k)}) \mid \substack{ \textstyle J^{(k)} \triangleq kJ^\nu - (k-1)J^\mu \\ \textstyle h^{(k)} \triangleq kh^\nu - (k-1)h^\mu} \text{ for integer } 0 \leq k \leq \alpha \right\}.
\end{align}
We remark that $(G,J^{(1)},h^{(1)})$ and $(G,J^{(0)},h^{(0)})$ are the same as the input Ising models $(G,J^\nu,h^\nu)$ and $(G,J^\mu,h^\mu)$ respectively. The following theorem says that if all Ising models in $\+F$ admit sampling and approximate counting oracles, then the $\chi^\alpha$-divergence between two input Ising models can be approximated in polynomial time.

\begin{theorem}\label{thm:approx-chi-alpha-divergence}
Let $\alpha \geq 1$ be a constant integer and $b \in (0,1)$ be a constant.
There exists an FPRAS that solves \Cref{prob:Ising} for $\chi^\alpha$-divergence in time 
$$O_{\alpha,b}\tp{\oracle^{\mathrm{ct}}_G\tp{\frac{\eta_{\alpha,b} \cdot \varepsilon}{(n+m)^\alpha}} + \frac{(n+m)^{2\alpha}}{\varepsilon^2} \cdot \oracle^{\mathrm{sp}}_G\left(\frac{\eta_{\alpha,b} \cdot \varepsilon^2}{(n+m)^{2\alpha}}\right)},$$ 
where $n = |V|$, $m = |E|$, and $\eta_{\alpha,b} > 0$ is a small constant depending only on $\alpha$ and $b$, if two input Ising Gibbs distributions $\mu$ and $\nu$ are both $b$-marginally bounded and all Ising models in $\+F(\nu,\mu,\alpha)$ admit sampling and approximate counting oracles with cost functions $\oracle^{\mathrm{sp}}_G(\cdot)$ and $\oracle^{\mathrm{ct}}_G(\cdot)$ respectively.
\end{theorem}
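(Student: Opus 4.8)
The backbone is an exact identity linking the $\chi^\alpha$-divergence to the family $\+F(\nu,\mu,\alpha)$. Write $\rho(\sigma)=\nu(\sigma)/\mu(\sigma)$. Since $w_\nu(\sigma)^k/w_\mu(\sigma)^{k-1}$ is precisely the unnormalized weight of the $k$-th Ising model $(G,J^{(k)},h^{(k)})\in\+F$, the $k$-th moment of $\rho$ under $\mu$ factors through that model's partition function:
\begin{align*}
R_k\;:=\;\E[\sigma\sim\mu]{\rho(\sigma)^k}\;=\;\sum_{\sigma}\frac{\nu(\sigma)^k}{\mu(\sigma)^{k-1}}\;=\;\frac{Z_\mu^{\,k-1}Z_{(k)}}{Z_\nu^{\,k}}\qquad(0\le k\le\alpha),
\end{align*}
with $R_0=R_1=1$, and the Gibbs distribution of the $k$-th model is the \emph{tilted} measure $\mu^{(k)}(\sigma)\propto\mu(\sigma)\rho(\sigma)^k$, so the hypotheses supply both sampling and counting access to every $\mu^{(k)}$. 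For even $\alpha$ this gives $\dvg{\nu}{\mu}{\chi^\alpha}=\tfrac12\E[\mu]{(\rho-1)^\alpha}=\tfrac12\sum_{k=0}^{\alpha}\binom{\alpha}{k}(-1)^{\alpha-k}R_k$; for odd $\alpha$ I would first split $|\rho-1|^\alpha=(\rho-1)^\alpha+2(\rho-1)^{\alpha-1}(1-\rho)_+$ (with $(1-\rho)_+:=\max\{1-\rho,0\}$), handle the first summand exactly as in the even case, and handle the one-sided second summand by the techniques of the $\alpha=1$ (TV-distance) case after reweighting $\mu$ by the $\mu^{(k)}$.

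One is tempted simply to approximate the $Z_{(k)}$ with the counting oracle and substitute, but this fails whenever $\dvg{\nu}{\mu}{\chi^\alpha}$ is tiny (it may be exponentially small in $n$): the identity is an alternating sum, so $\e^{\pm\delta}$-approximating each $R_k$ yields only an \emph{additive} $O(\delta)$-error on the total. The plan is therefore to use the counting oracles only to pin down the global normalizations --- $O_\alpha(1)$ calls returning $\e^{\pm\delta}$-estimates of $Z_{(0)},\dots,Z_{(\alpha)}$ with $\delta=\Theta(\eta_{\alpha,b}\varepsilon/(n+m)^\alpha)$, which suffices to evaluate $\rho(\sigma)$ and the mixture weights below to within $\e^{\pm O(\alpha\delta)}$ for any queried $\sigma$ (the factor $w_\nu(\sigma)/w_\mu(\sigma)$ being computed exactly from the input matrices) --- and to estimate $\dvg{\nu}{\mu}{\chi^\alpha}$ itself by Monte Carlo.

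The core is a Monte-Carlo estimator with \emph{polynomial relative variance}. The naive estimator $\tfrac12(\rho(\sigma)-1)^\alpha$ with $\sigma\sim\mu$, and indeed importance sampling from any \emph{fixed} mixture of $\mu^{(0)},\dots,\mu^{(\alpha)}$, can have relative variance of order $1/\dvg{\nu}{\mu}{\chi^\alpha}$, since the ``ideal'' importance-sampling proposal $\propto\mu\,|\rho-1|^\alpha$ has total mass $\Theta(\dvg{\nu}{\mu}{\chi^\alpha})$ and a unit-mass mixture cannot imitate it. To beat this I would telescope along a fixed vertex order $\prec$: with $\rho_v$ the conditional-marginal ratio at $v$ (so $\rho=\prod_v\rho_v$),
\begin{align*}
\rho-1\;=\;\sum_{v\in V}\Big(\prod_{u\prec v}\rho_u\Big)(\rho_v-1),
\end{align*}
raise this to the $\alpha$-th power and expand into a sum over $\alpha$-tuples of vertices. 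Because $\E[\sigma_v\sim\mu]{\rho_v-1\mid\sigma_{\prec v}}=0$, every tuple whose $\alpha$ vertices are distinct vanishes in expectation under $\mu$; integrating the surviving (``clustered'') tuples from the highest vertex downward rewrites each one as an expectation of a product of \emph{bounded} local quantities --- conditional local increments that are uniformly bounded because of the $b$-marginal lower bound --- against tilted prefix-marginals derived from the distributions $\mu^{(k)}\in\+F$ (or from their restrictions to prefixes, which remain Ising models with the $b$-bound), up to bounded reweighting factors. Assembling the pieces yields an unbiased estimator, sampled from an explicit mixture of such tilted distributions, whose proposal genuinely scales with $\dvg{\nu}{\mu}{\chi^\alpha}$, so its relative variance is $O_{\alpha,b}((n+m)^{2\alpha})$. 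Averaging $\Theta((n+m)^{2\alpha}/\varepsilon^2)$ independent copies --- each drawn from the sampling oracles at TV-precision $\Theta(\eta_{\alpha,b}\varepsilon^2/(n+m)^{2\alpha})$, small enough that the accumulated sampling bias stays an $\varepsilon$-fraction of the target --- and applying Chebyshev's inequality with the median trick gives an $\e^{\pm\varepsilon}$-approximation with probability $\ge\tfrac23$; adding the oracle-error, sampling-bias, and statistical contributions gives the stated running time.

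The step I expect to be the main obstacle is exactly this relative-variance bound: one must show that, after the telescoping-plus-tilting, the estimator's second moment stays within a $\mathrm{poly}(n,m)$ factor of its squared mean \emph{uniformly in the magnitude of $\dvg{\nu}{\mu}{\chi^\alpha}$} --- the exponentially-small regime being the dangerous one. This requires (i) the combinatorics of which vertex-tuples survive and how the surviving products regroup after the $\alpha$-fold expansion, (ii) the uniform boundedness of the local conditional increments supplied by the $b$-marginal lower bound, and (iii) a choice of mixture weights making the natural proposal of each surviving term dominated, up to $\mathrm{poly}(n,m)$, by the single mixture one actually samples from. The odd-$\alpha$ one-sided term additionally calls for porting the $\alpha=1$ analysis into this reweighted setting, which should be routine once (i)--(iii) are in place.
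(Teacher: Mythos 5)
Your proposal correctly identifies the backbone identity $R_k=\E[\sigma\sim\mu]{\rho(\sigma)^k}=Z_\mu^{k-1}Z_k/Z_\nu^k$ and the binomial expansion for even $\alpha$, which matches equations \eqref{eq:expansion-for-even-alpha}--\eqref{eq:k-ising-model} of the paper. However, there is a genuine gap: you correctly worry that $\dvg{\nu}{\mu}{\chi^\alpha}$ may be exponentially small (so an $\e^{\pm\delta}$-approximation of each $R_k$ in the alternating sum gives useless relative error), but the resolution you then reach for --- a telescoping martingale decomposition of $\rho-1$ into conditional-marginal increments, raised to the $\alpha$-th power, followed by a tilted importance-sampling argument --- is not what the paper does, is not carried through, and would in any case require sampling and counting oracles for \emph{pinned restrictions} of the Ising models in $\+F$, which the hypotheses do not supply.

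The paper dissolves the ``exponentially small divergence'' concern much more simply, by a case split on the parameter distance $d_{\mathrm{par}}(\nu,\mu)$ against a threshold $\theta=\frac{1}{10(n+3m)}$. If $d_{\mathrm{par}}(\nu,\mu)\le\theta$, the two Ising models are so close that $W=w_\nu(X)/w_\mu(X)$ concentrates deterministically in an $\e^{\pm O((n+m)\theta)}$ window (Lemma \ref{lem:bound-for-W}); then a direct plug-in estimator $\hat D=\frac{1}{T}\sum_i f(\hat W_i/\bar W)$ using only the sampling oracle for $\mu$ achieves relative error $\e^{\pm\varepsilon}$ with $T=\mathrm{poly}(n/\varepsilon)$ samples, and no counting oracle or family $\+F$ is needed in this case (Theorem \ref{lem:general-small-parameter-distance}). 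If $d_{\mathrm{par}}(\nu,\mu)>\theta$, the divergence is \emph{not} exponentially small: Lemma \ref{lem:chi-alpha-lower-bound} gives $\dvg{\nu}{\mu}{\chi^\alpha}\ge\Omega_{\alpha,b}(\theta^\alpha)$, and --- this is the key point you missed --- Lemma \ref{lem:error-bound} shows that
\begin{align*}
\sum_{\sigma}\mu(\sigma)\left(\frac{\nu(\sigma)}{\mu(\sigma)}+1\right)^\alpha\;\le\;\frac{1}{B_{\alpha,b}(\theta)}\,\dvg{\nu}{\mu}{\chi^\alpha},
\quad\text{with }\frac{1}{B_{\alpha,b}(\theta)}=\mathrm{poly}(n).
\end{align*}
Since the left side is exactly $\sum_k\binom{\alpha}{k}R_k$, i.e., the total unsigned magnitude of the alternating-sum terms, this bounds the additive error of the ``naive'' estimator by a $\mathrm{poly}(n)$ multiple of the target divergence. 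Thus in this regime an $\e^{\pm\delta}$-approximation of each $R_k$ with $\delta=\Theta_{\alpha,b}(\varepsilon/(n+m)^\alpha)$ already yields $\e^{\pm\varepsilon}$ relative error, and for general (odd) $\alpha$ one instead splits each summand over $\{\nu(\sigma)>\mu(\sigma)\}$ and $\{\nu(\sigma)<\mu(\sigma)\}$ and estimates each half by sampling from $\pi^{(k)}$ together with indicator variables, with the same Lemma \ref{lem:error-bound} controlling both the misclassification and concentration errors. Your one-sided split $|\rho-1|^\alpha=(\rho-1)^\alpha+2(\rho-1)^{\alpha-1}(1-\rho)_+$ is algebraically correct but is not what the paper uses; the paper's indicator-based split is simpler and keeps the individual estimators nonnegative and bounded. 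In short: the key missing idea is that the hard (exponentially small) regime coincides with the easy (near-parameter, concentration) regime, and the two are separated by the parameter-distance threshold; once that is in place, the naive expansion you discarded is in fact the right algorithm.
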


Let us consider a simplified case of the Ising model.
Suppose $G$ has the max degree $\Delta$.
For an Ising model $(G,J,h)$, if for any $e = \{u,v\} \in E$, if $J_{uv} = J_{vu} = \frac{\ln \beta}{2}$ take a unified value. In this case, an Ising model admits sampling and approximate counting oracles with cost function $\oracle^{\mathrm{sp}}_G(\varepsilon) = \mathrm{poly}(n, \log \frac{1}{\varepsilon})$ and $\oracle^{\mathrm{ct}}_G(\varepsilon) = \mathrm{poly}(n, \frac{1}{\varepsilon})$ if one of the following two conditions holds:
\begin{itemize}
    \item Uniqueness condition: $\frac{\Delta-2}{\Delta} \leq \beta \leq \frac{\Delta}{\Delta - 2}$ and an arbitrary external field $h \in \mathbb{R}^V$~\cite{CCYZ25}.
    \item Uniqueness condition or ferromagnetic condition with zero external field: $\beta \geq \frac{\Delta-2}{\Delta}$ and the zero external field $h = \boldsymbol{0}$~\cite{CCYZ25,JS93}.
\end{itemize}

Consider \Cref{prob:Ising} where two input Ising models both have unified values $\frac{\ln \beta_\nu}{2}$ and $\frac{\ln \beta_\mu}{2}$ in the interaction matrices $J^\nu$ and $J^\mu$. Assume $\beta_\nu$, $\beta_\mu$, $\Vert h^\nu \Vert_\infty$, $\Vert h^\mu \Vert_\infty$, and the max degree $\Delta$ are all constants. Then the marginal lower bound condition is satisfied. We have the following corollary.

\begin{corollary}\label{cor:approx-chi-alpha-divergence}
Let $\alpha \geq 1$ be a constant integer. For Ising models with unified values in interaction matrices, there exists an FPRAS that solves \Cref{prob:Ising} for $\chi^\alpha$-divergence in time $\mathrm{poly}(n, \frac{1}{\varepsilon})$ if $\beta_\nu$, $\beta_\mu$, $\Vert h^\nu \Vert_\infty$, $\Vert h^\mu \Vert_\infty$, and $\Delta$ are all constants and one of the following three conditions holds:
\begin{itemize}
    \item Both $\beta_\mu$ and $(\frac{\beta_\nu}{\beta_\mu})^{\alpha} \beta_\mu$ are in $[\frac{\Delta-2}{\Delta}, \frac{\Delta}{\Delta - 2}]$.
    \item $\beta_\nu \geq \beta_\mu \geq \frac{\Delta-2}{\Delta}$ and $h^\nu = h^\mu = \boldsymbol{0}$.
    \item $\beta_\nu < \beta_\mu$, $(\frac{\beta_\nu}{\beta_\mu})^{\alpha} \beta_\mu \geq \frac{\Delta-2}{\Delta}$, and $h^\nu = h^\mu = \boldsymbol{0}$.
\end{itemize}
\end{corollary}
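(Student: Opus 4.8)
The plan is to derive the corollary directly from \Cref{thm:approx-chi-alpha-divergence}, so the real work is to verify its two hypotheses in this restricted setting: that $\nu$ and $\mu$ are $b$-marginally bounded for a constant $b$, and that every Ising model in $\+F(\nu,\mu,\alpha)$ admits sampling and approximate counting oracles with cost functions $\oracle^{\mathrm{sp}}_G(\varepsilon) = \mathrm{poly}(n,\log\frac1\varepsilon)$ and $\oracle^{\mathrm{ct}}_G(\varepsilon) = \mathrm{poly}(n,\frac1\varepsilon)$. The marginal bound part is immediate from the paragraph preceding the corollary: with $\beta_\nu,\beta_\mu,\lVert h^\nu\rVert_\infty,\lVert h^\mu\rVert_\infty,\Delta$ all constant, the conditional single-site marginals of $\nu$ and $\mu$ are bounded away from $0$ and $1$ by a constant.

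First I would describe $\+F(\nu,\mu,\alpha)$ explicitly. Because $J^\nu$ and $J^\mu$ have unified edge values $\frac{\ln\beta_\nu}{2}$ and $\frac{\ln\beta_\mu}{2}$, the affine combination $J^{(k)} = kJ^\nu - (k-1)J^\mu$ again has a unified edge value $\frac12\bigl(k\ln\beta_\nu - (k-1)\ln\beta_\mu\bigr) = \frac12\ln\beta^{(k)}$, where $\beta^{(k)} \triangleq \bigl(\frac{\beta_\nu}{\beta_\mu}\bigr)^{k}\beta_\mu$, and $\lVert h^{(k)}\rVert_\infty \le \alpha\bigl(\lVert h^\nu\rVert_\infty + \lVert h^\mu\rVert_\infty\bigr)$ is a constant. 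The key structural fact is that $k \mapsto \beta^{(k)}$ is geometric, hence monotone in $k$: nondecreasing if $\beta_\nu \ge \beta_\mu$ and nonincreasing if $\beta_\nu < \beta_\mu$. Therefore the two endpoint values $\beta^{(0)} = \beta_\mu$ and $\beta^{(\alpha)} = \bigl(\frac{\beta_\nu}{\beta_\mu}\bigr)^{\alpha}\beta_\mu$ sandwich all intermediate $\beta^{(k)}$.

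Next comes the case analysis on the three hypotheses. In the first case, both $\beta^{(0)}$ and $\beta^{(\alpha)}$ lie in $[\frac{\Delta-2}{\Delta},\frac{\Delta}{\Delta-2}]$, so by monotonicity every $\beta^{(k)}$ does; together with the constant-norm (possibly nonzero) field $h^{(k)}$, each model of $\+F(\nu,\mu,\alpha)$ satisfies the uniqueness condition and admits the claimed polynomial oracles by~\cite{CCYZ25}. In the second case all $h^{(k)} = \boldsymbol 0$ and $\beta^{(k)} \ge \beta^{(0)} = \beta_\mu \ge \frac{\Delta-2}{\Delta}$ (monotone increasing); in the third case all $h^{(k)} = \boldsymbol 0$ and $\beta^{(k)} \ge \beta^{(\alpha)} = \bigl(\frac{\beta_\nu}{\beta_\mu}\bigr)^{\alpha}\beta_\mu \ge \frac{\Delta-2}{\Delta}$ (monotone decreasing). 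In both of these cases every model of $\+F(\nu,\mu,\alpha)$ satisfies the ``uniqueness or zero-field ferromagnetic'' condition and again admits the polynomial oracles, by~\cite{CCYZ25,JS93}.

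Finally I would substitute into the running time of \Cref{thm:approx-chi-alpha-divergence}. Since $\Delta$ is constant, $m = |E| \le \frac{n\Delta}{2} = O(n)$, so $(n+m)^{\alpha} = O_\alpha(n^\alpha)$, and $\eta_{\alpha,b}$ is a constant; plugging $\oracle^{\mathrm{ct}}_G(\varepsilon') = \mathrm{poly}(n,1/\varepsilon')$ and $\oracle^{\mathrm{sp}}_G(\varepsilon') = \mathrm{poly}(n,\log(1/\varepsilon'))$ into the bound makes both summands $\mathrm{poly}(n,\frac1\varepsilon)$, which yields the FPRAS. There is no substantial obstacle here; the only points requiring care are checking that the ``unified value'' property survives the affine combination defining $J^{(k)}$, and that the monotone sandwiching correctly places every $\beta^{(k)}$ in the parameter regime covered by the cited sampling and counting results.
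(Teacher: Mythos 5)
Your proposal is correct and follows exactly the argument the paper intends (the paper leaves the corollary as an unproven consequence of \Cref{thm:approx-chi-alpha-divergence} and the preceding discussion): you correctly observe that $\beta^{(k)} = (\beta_\nu/\beta_\mu)^k\beta_\mu$ is geometric and hence monotone in $k$, so the two endpoint conditions sandwich all of $\+F(\nu,\mu,\alpha)$ inside the regime where~\cite{CCYZ25,JS93} give the needed oracles, and the remaining checks (marginal bound, $m=O(n)$, substitution into the running time) go through as you state.
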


The corollary requires different conditions for the case $\beta_\nu \geq \beta_\mu$ and the case $\beta_\nu < \beta_\mu$. It is reasonable because for $\alpha > 1$, the $\chi^\alpha$-divergence is \emph{not} symmetric $\dvg{\nu}{\mu}{\chi^\alpha} \neq \dvg{\mu}{\nu}{\chi^\alpha}$, and thus the roles of two input Ising models are different. 

In \Cref{thm:approx-chi-alpha-divergence}, we show that $\chi^\alpha$-divergence can be approximated if the sampling and approximate counting oracles for all Ising models in $\+F(\nu,\mu,\alpha)$ exist. In a special case when $\alpha = 1$, $\+F(\nu,\mu,1)$ only contains two input Ising models $(G,J^\nu,h^\nu)$ and $(G,J^\mu,h^\mu)$, where we recover the result in~\cite{feng2025approximating} for approximating the TV-distance. However, for general $\alpha > 1$, the family $\+F(\nu,\mu,\alpha)$ contains other Ising models. It is natural to ask the following questions.
\begin{tcolorbox}[
    title=Questions,
    colback=white,
    colframe=black,
    coltitle=black,
    colbacktitle=gray!20,
    fonttitle=\bfseries,
    boxrule=1pt,
    arc=3pt
]
\begin{itemize}[leftmargin=0.3cm]
    \item Is it possible to approximate $\chi^\alpha$-divergence if we only assume the sampling and approximate counting oracles for the two input Ising models $(G,J^\nu,h^\nu)$ and $(G,J^\mu,h^\mu)$ exist?
    \vspace{-0.18cm}
    \item A more direct question: are the oracle assumptions in \Cref{thm:approx-chi-alpha-divergence} necessary?
\end{itemize}
\end{tcolorbox}


The above two questions are answered by the following hardness result.
Let us restrict our attention to Ising models $(G,J,h)$ with zero external field and interaction matrix with unified values. Formally, $h = \boldsymbol{0}$ and $J_{uv} = J_{vu} = \frac{\ln \beta}{2}$ for all $\{u,v\} \in E$. Denote the model by $(G,\beta)$. Let $(G,\beta_\nu)$ and $(G,\beta_\mu)$ denote two input Ising models in \Cref{prob:Ising}.

\begin{theorem}\label{thm:hardness-without-F}
Fix an integer $\alpha \geq 2$, an integer $\Delta \geq 3$, and two parameters $\beta_\mu > \beta_\nu \geq \frac{\Delta-2}{\Delta}$ such that $(\frac{\beta_\nu}{\beta_\mu})^{\alpha} \beta_\mu < \frac{\Delta-2}{\Delta}$. 
Unless $\mathsf{NP} = \mathsf{RP}$, there is no FPRAS for $\chi^\alpha$-divergence for two Ising models $(G,\beta_\nu)$ and $(G,\beta_\mu)$ on $\Delta$-regular graph $G$.
\end{theorem}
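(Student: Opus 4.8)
The plan is to reduce from the hardness of approximating the partition function of the zero-field antiferromagnetic Ising model on $\Delta$-regular graphs below the tree uniqueness threshold. Set $\beta^{(\alpha)} := (\beta_\nu/\beta_\mu)^{\alpha}\beta_\mu$. By hypothesis $\beta^{(\alpha)} < \frac{\Delta-2}{\Delta}$, and the Ising model $(G,J^{(\alpha)},h^{(\alpha)})$ appearing in $\+F(\nu,\mu,\alpha)$ in \eqref{eq:family-of-isings} is precisely the zero-field Ising model on $G$ with unified interaction $\frac12\ln\beta^{(\alpha)}$. By the known hardness of antiferromagnetic two-spin systems in the non-uniqueness regime (Sly--Sun; Galanis--\v{S}tefankovi\v{c}--Vigoda), there is no FPRAS for the partition function $Z_{\beta^{(\alpha)}}(\cdot)$ over $\Delta$-regular graphs unless $\mathsf{NP} = \mathsf{RP}$. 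I would show that an FPRAS for the $\chi^\alpha$-divergence between $(G,\beta_\nu)$ and $(G,\beta_\mu)$ would produce one for $Z_{\beta^{(\alpha)}}$.

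First I would record the exact relation between the divergence and the partition functions of the models in $\+F(\nu,\mu,\alpha)$. For $0\le k\le\alpha$ write $m_k := \E[X\sim\mu]{(\nu(X)/\mu(X))^{k}} = Z_\mu^{k-1}Z_{(k)}/Z_\nu^{k}$, so that $m_0 = m_1 = 1$. Expanding $(x-1)^{\alpha}$ (and, when $\alpha$ is odd, splitting $\{-1,+1\}^{V}$ according to whether $\nu(\sigma)\ge\mu(\sigma)$, so that on the complement each power $(\nu(\sigma)/\mu(\sigma))^{j}<1$ contributes at most $2^{\alpha}$ in total) gives $2\,\dvg{\nu}{\mu}{\chi^\alpha} = \sum_{k=0}^{\alpha}\binom{\alpha}{k}(-1)^{\alpha-k}m_k + E$, where $|E|\le 2^{\alpha+1}$ and $E = 0$ if $\alpha$ is even. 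Next I would establish a \textbf{domination} phenomenon for $g(k) := \ln m_k$. Using $w_\nu(\sigma)/w_\mu(\sigma) = (\beta_\nu/\beta_\mu)^{m/2}(\beta_\mu/\beta_\nu)^{\mathrm{cut}(\sigma)}$ with $m = |E|$, a short computation in which all terms proportional to $m$ cancel yields $g(k) = (k-1)\phi(t) - k\,\phi(s+t) + \phi(ks+t)$, where $\phi(\lambda) := \ln\sum_{\sigma}\exp(-\lambda\,\mathrm{cut}(\sigma))$ is convex, $s := \ln\beta_\nu - \ln\beta_\mu < 0$, and $t := \ln\beta_\mu$. Since $\phi$ is convex and $g(0) = g(1) = 0$, the function $k\mapsto g(k)$ is convex and nondecreasing on $[1,\infty)$; hence $m_\alpha = \max_{0\le k\le\alpha}m_k$ and $m_k\le m_{\alpha-1}$ for all $k\le\alpha-1$. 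Moreover $g(\alpha) - g(\alpha-1)$ is a fixed positive multiple of a double integral of $\phi''$ over arguments lying in the compact interval $[\ln\beta^{(\alpha)},\ln\beta_\mu]$, and $\phi''(\lambda)$ equals the variance of $\mathrm{cut}$ under the Ising model with parameter $\e^{\lambda}$.

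To make this gap exponentially large I would not feed an arbitrary hard instance $G_0$ into the reduction, but rather the $\Delta$-regular graph $G := G_0 \sqcup H$, where $H$ is the disjoint union of $n_0 := |V(G_0)|$ copies of $K_{\Delta+1}$. Since $\mathrm{cut}$ is additive over connected components, $\phi''_G = \phi''_{G_0} + \phi''_{H}$ and $\phi''_{H}(\lambda) = n_0\cdot v(\e^{\lambda})$, where $v(\beta)>0$ is the variance of $\mathrm{cut}$ in a single-copy $K_{\Delta+1}$ Ising model with parameter $\beta$; as $v$ is continuous and strictly positive on the compact interval $[\beta^{(\alpha)},\beta_\mu]$ (a full-support distribution on $K_{\Delta+1}$ cannot make $\mathrm{cut}$ almost surely constant), it is bounded below by a positive constant. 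Hence $g(\alpha) - g(\alpha-1) = \Omega(n_0)$, so (using $g(\alpha-1)\ge 0$) $m_{\alpha-1}/m_\alpha \le 2^{-\Omega(n)}$ and $m_\alpha \ge 2^{\Omega(n)}$, where $n := |V(G)| = \Theta(n_0)$. Plugging this and $m_k\le m_{\alpha-1}$ into the identity of the previous paragraph gives $2\,\dvg{\nu}{\mu}{\chi^\alpha} = m_\alpha\,(1\pm 2^{-\Omega(n)})$. Now run the assumed FPRAS on $(G,\beta_\nu),(G,\beta_\mu)$ with a small constant accuracy $\delta$ to obtain an $\e^{\pm\delta}$-estimate of the divergence, hence (for large $n$) an $\e^{\pm 2\delta}$-estimate of $m_\alpha(G)$. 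Since $\beta_\nu,\beta_\mu \ge \frac{\Delta-2}{\Delta}$ with zero external field, both lie in the uniqueness regime or (if larger than $1$) in the ferromagnetic regime, so $Z_\nu(G_0)$ and $Z_\mu(G_0)$ admit FPRASes; combining these with the $O_\alpha(1)$-time exact values of $Z_{\beta_\nu}(K_{\Delta+1})$, $Z_{\beta_\mu}(K_{\Delta+1})$ and $Z_{(\alpha)}(K_{\Delta+1})$ and with the multiplicativity $Z_\bullet(G) = Z_\bullet(G_0)\cdot Z_\bullet(K_{\Delta+1})^{n_0}$, one recovers $Z_{(\alpha)}(G_0) = Z_{\beta^{(\alpha)}}(G_0)$ from $m_\alpha(G)$ within relative error $\e^{\pm O_\alpha(\delta)}$, with probability at least $2/3$ after a standard boosting of the subroutines. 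Choosing $\delta$ small contradicts the hardness of $Z_{\beta^{(\alpha)}}$.

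I expect the main obstacle to be the odd-$\alpha$ case: one must check carefully that the absolute value perturbs the divergence only by an additive $O_\alpha(1)$ — which is then negligible against $m_\alpha = 2^{\Omega(n)}$ — while leaving the convexity and domination structure of $g$ intact. A secondary but essential point is the uniform-in-$\lambda$ variance lower bound, which is exactly what forces the insertion of the $K_{\Delta+1}$ gadget rather than working with $G_0$ directly. Verifying $m_0 = m_1 = 1$, the cancellation of the $|E|$-terms in $g(k)$, the convexity of $\phi$, and the bookkeeping of the relative errors is routine.
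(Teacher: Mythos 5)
Your proposal is correct, but it follows a genuinely different route from the paper's. The paper never touches the signed binomial expansion directly: it first invokes \Cref{lem:error-bound} (via \Cref{cor:error-bound-for-chi-alpha-divergence}) to establish that $\dvg{\nu}{\mu}{\chi^\alpha}$ is within a \emph{constant factor} of the all-positive sum $\sum_\sigma \mu(\sigma)(\nu(\sigma)/\mu(\sigma)+1)^\alpha = \sum_{k=0}^{\alpha}\binom{\alpha}{k}m_k$, and then proves a weak domination $m_{k+1}\ge \tfrac12 m_k$ for $0\le k\le\alpha-1$ by a direct term-by-term pairing argument over configurations (equations \eqref{eq:domination-inequality-for-beta-k}--\eqref{eq:domination-inequality-for-beta-0}). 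Because all coefficients are positive, a bounded consecutive ratio already gives $m_\alpha\le\sum_k\binom{\alpha}{k}m_k\le 3^\alpha m_\alpha$, so the divergence sits within a constant of $m_\alpha = Z(G,\beta)Z_\mu^{\alpha-1}/Z_\nu^\alpha$ \emph{for every} $\Delta$-regular $G$, and the reduction needs no instance modification. Your route instead works with the alternating-sign expansion of $|\cdot-1|^\alpha$, so mere bounded ratios do not prevent cancellation, and you must force $m_\alpha/m_{\alpha-1}\ge 2^{\Omega(n)}$. You do this cleanly via the log-convexity $g(k)=\ln m_k$ (correctly reducing to the convexity of the free energy $\phi$ and using $g(0)=g(1)=0$), but because the second derivative $\phi''$ can be $o(n)$ on an adversarial $G_0$ you need the $K_{\Delta+1}$ padding and the resulting bookkeeping to transfer hardness back to $G_0$ through partition-function multiplicativity; the odd-$\alpha$ absolute value is correctly dispatched as an additive $O(1)$ error. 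Both arguments are valid; the paper's use of \Cref{lem:error-bound} is what lets it sidestep the gadget and the sign analysis, while your convexity/variance mechanism is more self-contained but technically heavier.
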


The above theorem considers two Ising models with zero external field and $\beta_\nu,\beta_\mu \geq \frac{\Delta-2}{\Delta}$. Two input Ising models both admit polynomial time sampling and approximate counting oracles because they are either in the uniqueness regime $\beta \in [\frac{\Delta-2}{\Delta}, \frac{\Delta}{\Delta - 2}]$~\cite{CCYZ25} or the ferromagnetic regime $\beta \geq 1$~\cite{JS93}. However, if  $(\frac{\beta_\nu}{\beta_\mu})^{\alpha} \beta_\mu < \frac{\Delta-2}{\Delta}$, approximating $\chi^\alpha$-divergence is still hard. For a concrete example, the problem is hard when $\alpha \geq 2$, $\beta_\nu = \frac{\Delta-2}{\Delta}$, and $\beta_\mu = 2 \beta_\nu = \frac{2(\Delta - 2)}{\Delta}$. 

\Cref{thm:hardness-without-F} also shows that the condition in \Cref{thm:approx-chi-alpha-divergence} is necessary. 
Note that since $\Delta,\beta_\nu$ and $\beta_\mu$ are all constants and the external fields are zero, the hardness instances satisfy the marginal lower bound condition with $b = b(\Delta,\beta_\nu,\beta_\mu) = \Omega(1)$.
Consider two Ising models $(G,\beta_\nu)$ and $(G,\beta_\mu)$ on $\Delta$-regular graph $G$ such that $\beta_\mu > \beta_\nu \geq \frac{\Delta-2}{\Delta}$. For this family of instances, our algorithmic result in \Cref{cor:approx-chi-alpha-divergence} together with the hardness result in \Cref{thm:hardness-without-F} discovers the following \emph{computational phase transition} for approximating $\dvg{\nu}{\mu}{\chi^\alpha}$ when $\alpha \geq 2$:
\begin{itemize}
    \item FPRAS exists when $(\frac{\beta_\nu}{\beta_\mu})^{\alpha} \beta_\mu \geq \frac{\Delta-2}{\Delta}$;
    \item Unless $\mathsf{NP} = \mathsf{RP}$, there is no FPRAS when $(\frac{\beta_\nu}{\beta_\mu})^{\alpha} \beta_\mu < \frac{\Delta-2}{\Delta}$.
\end{itemize}

\Cref{thm:hardness-without-F} is proved via the hardness results of approximating partition functions of anti-ferromagnetic Ising models beyond the uniqueness regime~\cite{SS14,GSV16}. See \Cref{sec:proof-hardness} for details.

\subsection{Algorithms for other \texorpdfstring{$f$}{f}-divergences}

We say an Ising model $(G,J,h)$ admits polynomial time sampling and approximate counting oracles if it admits sampling and approximate counting oracles with cost function $\oracle^{\mathrm{sp}}_G(\varepsilon) = \mathrm{poly}(\frac{n}{\varepsilon})$ and $\oracle^{\mathrm{ct}}_G(\varepsilon) =\mathrm{poly}(\frac{n}{\varepsilon})$ respectively.
Let $(G,J^\nu,h^\nu)$ and $(G,J^\mu,h^\mu)$ denote two input Ising models. We give algorithms for approximating the $f$-divergence $\dvg{\nu}{\mu}{f}$ for various divergence functions $f$.
For many functions $f$, the divergence $\dvg{\nu}{\mu}{f}$ is \emph{not} symmetric for $\nu$ and $\mu$. We will refer to $(G,J^\nu,h^\nu)$ as the first input Ising model and $(G,J^\mu,h^\mu)$ as the second input Ising model.

By setting different functions $f$ in $\dvg{\nu}{\mu}{f}$, we can obtain the following divergences:
\begin{align*}
f(x) = \begin{cases}
x \ln x - x + 1, & \text{Kullback-Leibler divergence;} \\
-\ln x + x - 1, & \text{R\'enyi divergence;} \\
\frac{1}{2}\tp{x \ln x - (x+1)\ln \frac{x+1}{2}} & \text{Jensen-Shannon divergence.}
\end{cases}
\end{align*}


\begin{theorem}\label{thm:kl-divergence}
There exists an FPRAS for \Cref{prob:Ising} with KL-divergence, R\'enyi-divergence, and Jensen-Shannon divergence if two input Ising models are both marginally bounded and both admit polynomial time sampling and approximate counting oracles.
\end{theorem}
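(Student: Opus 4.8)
The plan is to reduce all three cases to a single core task. For the R\'enyi divergence, $f(x)=-\ln x+x-1$ gives $\dvg{\nu}{\mu}{f}=\E[X\sim\mu]{\ln\tfrac{\mu(X)}{\nu(X)}}=\dvg{\mu}{\nu}{\mathrm{KL}}$, i.e.\ it is exactly the KL-divergence with the two input models interchanged; since the oracle hypotheses of \Cref{thm:kl-divergence} are symmetric in $\nu$ and $\mu$, it suffices to treat the KL-divergence. For the Jensen--Shannon divergence, $f_{\mathrm{JS}}$ satisfies $f_{\mathrm{JS}}(1)=0$, $f_{\mathrm{JS}}''>0$, $f_{\mathrm{JS}}(0)=\tfrac12\ln 2$, and $f_{\mathrm{JS}}(x)=\Theta(x)$ as $x\to\infty$; the analysis below for KL goes through after first splitting off the region $\{\nu(X)/\mu(X)>3/2\}$, on which the $\mu$-expectation of $f_{\mathrm{JS}}$ equals a $\nu$-expectation of a quantity bounded by $O(L)$ ($L$ defined below), from its complement, on which $f_{\mathrm{JS}}$ is bounded. (Equivalently one can use $\dvg{\nu}{\mu}{\mathrm{JS}}=\tfrac12\dvg{\nu}{m}{\mathrm{KL}}+\tfrac12\dvg{\mu}{m}{\mathrm{KL}}$ with $m=\tfrac12(\nu+\mu)$: although $m$ is not an Ising model one can sample from it by a fair coin toss plus one call to the oracle of $\nu$ or $\mu$, and evaluate the ratios $\nu(\sigma)/m(\sigma),\mu(\sigma)/m(\sigma)$ up to the single unknown $Z_\nu/Z_\mu$.)

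So fix the task of approximating $\dvg{\nu}{\mu}{\mathrm{KL}}=\E[X\sim\nu]{\ln\tfrac{\nu(X)}{\mu(X)}}$. Write $\ln\tfrac{\nu(X)}{\mu(X)}=g(X)-\ln\tfrac{Z_\nu}{Z_\mu}$, where $g(X)\triangleq\ln\tfrac{w_\nu(X)}{w_\mu(X)}=\tfrac12 X^\top(J^\nu-J^\mu)X+(h^\nu-h^\mu)^\top X$ is computable in $O(n+m)$ time and $\abs{g(X)}\le L\triangleq\tfrac12\sum_{u,v}\abs{J^\nu_{uv}-J^\mu_{uv}}+\sum_v\abs{h^\nu_v-h^\mu_v}$. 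The two ingredients are (i) approximating $\ln\tfrac{Z_\nu}{Z_\mu}$ to a prescribed additive error from the two approximate-counting oracles, and (ii) estimating $\E[X\sim\nu]{g(X)}$ by averaging $g$ over samples from the sampling oracle of $\nu$. The subtlety is that $g$ is only bounded by $L$, which may be $\Theta(n+m)$, whereas $\dvg{\nu}{\mu}{\mathrm{KL}}$ --- hence the additive error we may commit --- can be far smaller, and the two summands $\E[\nu]{g}$ and $\ln\tfrac{Z_\nu}{Z_\mu}$ may individually dwarf their difference. The structural fact that rescues the reduction is a Poincar\'e-type estimate: under the $b$-marginal lower bound (in the relevant regimes, where $\mu$ and the geometric interpolants $\mu_s\propto w_\mu^{1-s}w_\nu^s$ admit polynomial-time sampling and satisfy approximate tensorization of variance), the quadratic functional $g$ obeys $\Var[\mu_s]{g}\asymp\sum_{u\neq v}(J^\nu_{uv}-J^\mu_{uv})^2+\sum_v(h^\nu_v-h^\mu_v)^2$ uniformly in $s$, up to $\mathrm{poly}(\Delta,1/b)$ factors. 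Combined with the identity $\dvg{\nu}{\mu}{\mathrm{KL}}=\int_0^1 s\,\Var[\mu_s]{g}\,\mathrm{d}s$ this yields both $\dvg{\nu}{\mu}{\mathrm{KL}}\asymp\Var[\nu]{g}$ and, via Cauchy--Schwarz, $L^2\le\mathrm{poly}(n,m,\Delta,1/b)\cdot\dvg{\nu}{\mu}{\mathrm{KL}}$ whenever $\nu\neq\mu$ (the case $\nu=\mu$ --- off-diagonal interactions and external fields coincide --- is detected directly and returns $0$).

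With this in hand the algorithm splits by the size of $L$. If $L\ge 1$ (equivalently $\dvg{\nu}{\mu}{\mathrm{KL}}\ge 1/\mathrm{poly}$), the naive estimator of (i)--(ii) already works: averaging $g$ over $\mathrm{poly}(n,m,\Delta,1/b)/\varepsilon^2$ samples from $\nu$ gives $\E[\nu]{g}$ to additive error $\sqrt{\Var[\nu]{g}/N}\lesssim\sqrt{\dvg{\nu}{\mu}{\mathrm{KL}}/N}\le\varepsilon\,\dvg{\nu}{\mu}{\mathrm{KL}}$, and $\ln\tfrac{Z_\nu}{Z_\mu}$ is taken to the same additive accuracy. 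If $L<1$, then $\|\nu(X)/\mu(X)-1\|_\infty\le e^{2L}-1=O(1)$, so the series $f_{\mathrm{KL}}(x)=\sum_{k\ge 2}\tfrac{(-1)^k}{k(k-1)}(x-1)^k$ converges on the range; truncating after $K=O\big(1+\tfrac{\log(1/\varepsilon)}{\log(1/L)}\big)$ terms (geometric tail, with the $k$-th central moment $\lesssim L^{k-2}\dvg{\nu}{\mu}{\mathrm{KL}}$) reduces the problem to estimating $\E[X\sim\mu]{(\nu(X)/\mu(X)-1)^k}$ for $k\le K$. Each such moment is rewritten as the expectation of a (signed, but controlled) function of a constant number of independent samples from $\nu$ and $\mu$ --- for $k=2$ one has the non-negative identity $\chi^2(\nu\,\|\,\mu)=\E[X,Y\sim\nu]{\sinh^2\!\tp{\tfrac{g(Y)-g(X)}{2}}}$ (for independent $X,Y$), using only the oracle of $\nu$, and for $k\ge 3$ analogous identities obtained by expanding the partition-function ratios $Z^{(j)}/(Z_\nu^j Z_\mu^{1-j})$ into products of $\nu$-expectations of $e^{\pm g}$. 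The Poincar\'e estimate ensures that the ratio of the second moment to the squared mean of each such estimator is $\mathrm{poly}(n,m,\Delta,1/b)$ (the numerator is $\lesssim L^2$ times the mean, while the mean is $\gtrsim L^2/\mathrm{poly}$), so $\mathrm{poly}(n,m,\Delta,1/b)\cdot K^2/\varepsilon^2$ samples give the required relative error; a median over $O(\log(1/\delta))$ independent repetitions boosts the success probability. The main obstacle is precisely the Poincar\'e/tensorization estimate and the attendant variance bounds, which must be established from the $b$-marginal lower bound (together with closure of the underlying regime under pinning and geometric interpolation); the rest is routine Monte-Carlo bookkeeping, and the R\'enyi and Jensen--Shannon cases follow from the reductions of the first paragraph.
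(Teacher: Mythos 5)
Your high-level reduction of R\'enyi to KL with the roles of $\nu,\mu$ swapped is correct, and the decomposition $\ln\frac{\nu(X)}{\mu(X)} = g(X) - \ln\frac{Z_\nu}{Z_\mu}$ is exactly the starting point the paper uses in \Cref{sec:alg-large-parameter-distance}. But from there you take a route that relies on a structural estimate you do not actually establish, and which the hypotheses of \Cref{thm:kl-divergence} do not furnish. Specifically, your entire argument hinges on a Poincar\'e/approximate-tensorization-of-variance estimate for the geometric interpolants $\mu_s \propto w_\mu^{1-s}w_\nu^s$, asserted to follow ``under the $b$-marginal lower bound (in the relevant regimes, where $\mu$ and the geometric interpolants \ldots admit polynomial-time sampling and satisfy approximate tensorization of variance).'' The theorem assumes sampling/counting oracles only for $\nu$ and $\mu$ --- nothing about the interpolants, and the $b$-marginal lower bound alone does not imply approximate tensorization (it is a much weaker condition). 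So the key inequality $L^2 \le \mathrm{poly}(n,m,\Delta,1/b)\cdot\dvg{\nu}{\mu}{\mathrm{KL}}$, and the equivalence ``$L\ge 1 \Leftrightarrow \dvg{\nu}{\mu}{\mathrm{KL}}\ge 1/\mathrm{poly}$,'' are both unproven and, as stated, out of reach from the stated hypotheses. There is a second, smaller gap in the $L<1$ branch: $L<1$ only gives $\frac{\nu(\sigma)}{\mu(\sigma)} \in [\e^{-2L},\e^{2L}] \subsetneq (0, \e^2)$, so $|\nu(\sigma)/\mu(\sigma)-1|$ can exceed $1$ and the Taylor series $\sum_{k\ge 2}\frac{(-1)^k}{k(k-1)}(x-1)^k$ need not converge; and the $k\ge 3$ ``moment identities'' you invoke implicitly estimate partition-function ratios like $Z^{(j)}/Z_\nu$ by averaging $\e^{\pm g}$ over $\nu$-samples, which has exponentially large relative variance in general --- precisely the pathology the paper's machinery is built to avoid.

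The paper sidesteps all of this by dichotomizing on the \emph{parameter distance} $d_{\mathrm{par}}(\nu,\mu)$ (the $\ell_\infty$-type quantity of \Cref{sec:overview}), not on your $\ell_1$-type $L$. When $d_{\mathrm{par}}(\nu,\mu)>\theta=\frac{1}{10(n+3m)}$, \Cref{lem:f-lower-bound} --- proved only from the marginal lower bound and the data-processing inequality, with no Poincar\'e input --- already gives $\dvg{\nu}{\mu}{\mathrm{KL}} \ge f\left(1+\frac{b^2}{2}\theta\right) = \Omega(1/\mathrm{poly}(n))$, so crude Monte-Carlo averaging of $g$ (bounded by $O(n\log\frac1b)$ via the marginal lower bound) with $T=\mathrm{poly}(n,1/\varepsilon)$ samples plus a counting-oracle estimate of $\ln\frac{Z_\nu}{Z_\mu}$ yields the required additive error $\varepsilon/\mathrm{poly}(n)$. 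When $d_{\mathrm{par}}(\nu,\mu)\le\theta$, the weight ratio $w_\nu/w_\mu$ is within $\e^{\pm(n+3m)d_{\mathrm{par}}}$ of $1$ (\Cref{lem:bound-for-W}), so the abstract small-parameter-distance algorithm of \Cref{lem:general-small-parameter-distance} applies; one only needs to verify \Cref{cond:general-small-parameter-distance} for each of KL, R\'enyi, and JS, which \Cref{lem:sufficient-condition-for-small-parameter-distance} does via uniform bounds on $f''$ near $1$. This dichotomy is strictly more elementary and uses exactly the hypotheses at hand. You should replace your Poincar\'e machinery with this dichotomy, or, if you wish to keep your route, you would need to explicitly add (and justify) the tensorization hypotheses, at which point you would be proving a different theorem.
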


Compared with the result in \Cref{thm:approx-chi-alpha-divergence}, the above theorem only requires the sampling and approximate counting oracles for the two input Ising models exist.


Next, consider the $\alpha$-divergence.
When $\alpha = 0$, the $\alpha$-divergence refers to the KL-divergence. When $\alpha = 1$, the $\alpha$-divergence refers to the R\'enyi-divergence.
The $\alpha$-divergences for other values of $\alpha$ are defined as follows.
Let $\alpha \neq 0, 1$. The $\alpha$-divergence is defined as $f(x) = \frac{x^\alpha - \alpha x - (1-\alpha)}{\alpha(\alpha-1)}$. Recall that in \eqref{eq:family-of-isings}, $J^{(\alpha)} = \alpha J^\nu - (\alpha-1)J^\mu$ and $h^{(\alpha)} = \alpha h^\nu - (\alpha-1) h^\mu$. Here, $\alpha$ can be a \emph{real} number.

\begin{theorem}\label{thm:alpha-divergence}
    Let $\alpha \neq 0, 1$ be a constant real. 
    There exists an FPRAS for \Cref{prob:Ising} with $\alpha$-divergence if two input Ising models are both marginally bounded, the second input Ising model $(G, J^\mu, h^\mu)$ admits a polynomial time sampling oracle, and both two input Ising models together with $(G,J^{(\alpha)}, h^{(\alpha)})$ admit polynomial time approximate counting oracles.
\end{theorem}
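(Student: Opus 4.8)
The plan is to turn the $\alpha$-divergence into a ratio of Ising partition functions and then feed the result to the moment estimator behind \Cref{thm:approx-chi-alpha-divergence}. Write $f(x)=\frac{x^{\alpha}-\alpha x-(1-\alpha)}{\alpha(\alpha-1)}$ for the $\alpha$-divergence function and $R(X)=\nu(X)/\mu(X)$. Since $\E[X\sim\mu]{R(X)}=1=\E[X\sim\mu]{1}$, unfolding the definition gives
\[ \alpha(\alpha-1)\,\dvg{\nu}{\mu}{f}=\E[X\sim\mu]{R(X)^{\alpha}}-1=:M_{\alpha}-1. \]
I would then identify $M_{\alpha}$ with a ratio of partition functions: writing $R(X)=(Z_{\mu}/Z_{\nu})\,w_{\nu}(X)/w_{\mu}(X)$ and using that the exponent $\tfrac12\sigma^{T}J\sigma+h^{T}\sigma$ is affine in $(J,h)$, one obtains the pointwise identity $w_{\mu}(\sigma)^{1-\alpha}w_{\nu}(\sigma)^{\alpha}=w^{(\alpha)}(\sigma)$, where $w^{(\alpha)}$ and $Z_{(\alpha)}$ are the weight function and partition function of the Ising model $(G,J^{(\alpha)},h^{(\alpha)})$ from \eqref{eq:family-of-isings}; this identity is meaningful for the \emph{real} number $\alpha$ because $J^{(\alpha)}=\alpha J^{\nu}-(\alpha-1)J^{\mu}$ and $h^{(\alpha)}=\alpha h^{\nu}-(\alpha-1)h^{\mu}$ are defined for real $\alpha$. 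Summing over $\sigma$,
\[ M_{\alpha}=\sum_{\sigma}\mu(\sigma)^{1-\alpha}\nu(\sigma)^{\alpha}=\frac{Z_{\mu}^{\,\alpha-1}\,Z_{(\alpha)}}{Z_{\nu}^{\,\alpha}}, \]
so the problem reduces to approximating $M_{\alpha}-1$ within relative error $\e^{\pm\varepsilon}$.

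The key observation is that $M_{\alpha}-1=M_{\alpha}-M_{0}$ (since $M_{0}=\E[X\sim\mu]{1}=1$), i.e.\ a linear combination of the tilted moments $M_{t}=\E[X\sim\mu]{R(X)^{t}}=Z_{\mu}^{\,t-1}Z_{(t)}/Z_{\nu}^{\,t}$ whose coefficients sum to zero — precisely the structure the proof of \Cref{thm:approx-chi-alpha-divergence} is designed to estimate (the quantity $2\chi^{\alpha}$ there is likewise a sum-to-zero combination of tilted moments). I would therefore instantiate that estimator with the combination $M_{\alpha}-M_{0}$. Concretely, fixing a vertex ordering $v_{1},\dots,v_{n}$, self-reduction rewrites $M_{\alpha}-1$ as a telescoping sum of $n$ per-vertex contributions; after normalization the $i$-th contribution is an average over $\tau$, drawn from a suitable tilted/conditioned reference distribution, of the single-vertex quantity $\sum_{c\in\{\pm1\}}\mu_{v_{i}}^{\tau}(c)^{1-\alpha}\nu_{v_{i}}^{\tau}(c)^{\alpha}-1=\alpha(\alpha-1)\,\dvg{\nu_{v_{i}}^{\tau}}{\mu_{v_{i}}^{\tau}}{f}$, which is bounded by a constant depending only on $\alpha$ and the marginal lower bound $b$, together with normalization factors; arranging these contributions so that each is estimable to relative accuracy — the delicate point discussed below — is where the marginal lower bound enters. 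This is also why the family $\+F(\nu,\mu,\alpha)$ of \eqref{eq:family-of-isings} collapses here to $\{(G,J^{(0)},h^{(0)}),(G,J^{(1)},h^{(1)}),(G,J^{(\alpha)},h^{(\alpha)})\}=\{\mu,\nu,\pi_{\alpha}\}$, where $\pi_{\alpha}$ denotes the Gibbs distribution of $(G,J^{(\alpha)},h^{(\alpha)})$.

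Running the estimator needs three ingredients. (i) Approximate samples from $\mu$ and from its pinnings: supplied by the sampling oracle for $(G,J^{\mu},h^{\mu})$, pinning merely restricting the graph and shifting external fields while preserving the $b$-marginal lower bound. (ii) Single-vertex conditional marginals of $\mu$, $\nu$ and $\pi_{\alpha}$ to relative accuracy: read off from the ratios $Z^{\tau,v_{i}=-1}/Z^{\tau,v_{i}=+1}$ on the pinned (still Ising) models via the approximate counting oracles for $(G,J^{\mu},h^{\mu})$, $(G,J^{\nu},h^{\nu})$ and $(G,J^{(\alpha)},h^{(\alpha)})$. (iii) Approximate samples from pinnings of $\pi_{\alpha}$: obtained from the counting oracle for $(G,J^{(\alpha)},h^{(\alpha)})$ by the standard counting-to-sampling self-reduction, so that no sampling oracle for $\pi_{\alpha}$ is assumed — matching the statement. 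The one auxiliary fact needed here is that $\pi_{\alpha}$, and all its pinnings, are $b'$-marginally bounded for a constant $b'=b'(\alpha,b)>0$: comparing a configuration with its single-spin flip at a vertex $v$, the ratio of the corresponding $\pi_{\alpha}$-weights lies in $[(b/(1-b))^{|1-\alpha|+|\alpha|},\,((1-b)/b)^{|1-\alpha|+|\alpha|}]$ because $\mu$ and $\nu$ are $b$-marginally bounded. Setting all oracle precisions to $\mathrm{poly}(\varepsilon/n)$ and the Monte Carlo sample sizes to $\mathrm{poly}(n/\varepsilon)$, a union bound together with standard error propagation gives the claimed FPRAS. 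The signs of $M_{\alpha}-1$ and of each single-vertex $\alpha$-divergence flip between the regimes $\alpha\in(0,1)$ (where $\alpha(\alpha-1)<0$) and $\alpha\notin[0,1]$, but the pieces stay bounded and consistently signed, so this is only bookkeeping.

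The main obstacle — the same one met in \Cref{thm:approx-chi-alpha-divergence} and the TV-distance results it generalizes — is producing a \emph{relative}, not merely additive, approximation even when $\dvg{\nu}{\mu}{f}$ is exponentially small. A naive Monte Carlo estimate of $M_{\alpha}$ against the constant baseline $1$ has variance-to-squared-mean ratio as large as $M_{2\alpha}/(M_{\alpha}-1)^{2}$, which is not polynomially bounded, and feeding exponentially small accuracy parameters to the counting oracles costs exponential time. The substantive work is therefore in the self-reduction: showing that $M_{\alpha}-1$ decomposes into polynomially many pieces each of which is estimable to relative error $\varepsilon/\mathrm{poly}(n)$ once the marginal lower bound is exploited. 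Since this is exactly the machinery established for \Cref{thm:approx-chi-alpha-divergence}, instantiating it with $M_{\alpha}-M_{0}$ completes the proof.
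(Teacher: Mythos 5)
You correctly identify the central algebraic identity
\[
\alpha(\alpha-1)\,\dvg{\nu}{\mu}{f}=\E[X\sim\mu]{R(X)^{\alpha}}-1=\frac{Z_{\mu}^{\,\alpha-1}Z_{(\alpha)}}{Z_{\nu}^{\,\alpha}}-1,
\]
which is indeed the starting point of the paper's proof. But from there you take a different — and in fact more complicated — route than the paper does, and the route as you describe it has a genuine gap.

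The paper's proof of \Cref{thm:alpha-divergence} (in \Cref{sec:other-f-divergences}) does \emph{not} use a per-vertex self-reduction. It uses the same two-case split as the rest of the paper: compare $d_{\mathrm{par}}(\nu,\mu)$ with $\theta=\tfrac{1}{10(n+3m)}$. In the small-parameter-distance case, it runs the abstract algorithm of \Cref{lem:general-small-parameter-distance}, after verifying \Cref{cond:general-small-parameter-distance} for $f(x)=\frac{x^{\alpha}-\alpha x-(1-\alpha)}{\alpha(\alpha-1)}$ via \Cref{lem:sufficient-condition-for-small-parameter-distance}; this case only needs a sampling oracle for $\mu$. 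In the large-parameter-distance case, \Cref{lem:f-lower-bound} together with a Taylor expansion of $f$ around $1$ gives $\dvg{\nu}{\mu}{f}\geq 1/\mathrm{poly}(n)$, equivalently $\bigl|M_{\alpha}-1\bigr|\geq 1/\mathrm{poly}(n)$, and then the three counting oracles with precision $\delta=\varepsilon/\mathrm{poly}(n)$ already give a relative approximation of $M_{\alpha}$ that is strong enough to give a relative approximation of $M_{\alpha}-1$. That's the whole argument; no telescoping, no sampling from pinnings of $\pi_{\alpha}$, no Monte Carlo at all in the large-distance branch.

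Your self-reduction sketch does not resolve the issue you yourself flag as ``the main obstacle.'' You claim the per-vertex contributions $\alpha(\alpha-1)\dvg{\nu_{v_i}^{\tau}}{\mu_{v_i}^{\tau}}{f}$ are ``bounded by a constant'' and ``consistently signed,'' which gives you an upper bound and fixed sign — but for a relative-error guarantee you also need each term to be estimable to \emph{relative} accuracy $\varepsilon/\mathrm{poly}(n)$ from $\mathrm{poly}(n/\varepsilon)$-precision counting queries, and that requires a \emph{lower} bound on each term. A single-vertex $\alpha$-divergence between two Bernoulli conditionals can be exponentially small even under the $b$-marginal lower bound (the marginals being bounded away from $0$ and $1$ does not prevent them from being exponentially close to each other), and then your estimator for that term is a near-cancellation that poly-precision counting oracles cannot resolve relatively. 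You also assert that ``this is exactly the machinery established for \Cref{thm:approx-chi-alpha-divergence},'' but the large-distance control there is \Cref{lem:error-bound}, a lower bound specific to $\chi^{\alpha}$-divergence in terms of $\sum_{\sigma}\mu(\sigma)(\nu(\sigma)/\mu(\sigma)+1)^{\alpha}$; it is not a generic sum-to-zero moment estimator and it does not transfer to $\alpha$-divergence as stated. What you are missing is precisely the parameter-distance dichotomy: that is what converts ``$M_{\alpha}-1$ can be exponentially small'' into a tractable case (handled by a different, sampling-based algorithm) and a case where $M_{\alpha}-1\geq 1/\mathrm{poly}(n)$ and the partition-function identity alone suffices.
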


The above theorem works for constant real $\alpha$ (assuming the computational model supports real arithmetic). Unlike $\chi^\alpha$-divergence, in addition to $\mu$ and $\nu$ we only require that the approximate counting oracle for $(G,J^{(\alpha)}, h^{(\alpha)})$ exists.

The squared Hellinger distance is defined by setting $f(x) = \frac{1}{2} (\sqrt{x} - 1)^2$. 
Given two input Ising models $(G,J^\nu,h^\nu)$ and $(G,J^\mu,h^\mu)$, define an averaged Ising model as $(G,J^\textsf{avg},h^\textsf{avg})$, where $J^\textsf{avg} = \frac{1}{2} (J^\nu + J^\mu)$ and $h^\textsf{avg} = \frac{1}{2} (h^\nu + h^\mu)$. We have the following theorem.

\begin{theorem}\label{thm:squared-hellinger-distance}
    There exists an FPRAS for \Cref{prob:Ising} with squared Hellinger distance if two input Ising models are both marginally bounded, one of the input Ising models admits a polynomial time sampling oracle, and two input Ising models together with the averaged Ising model all admit polynomial time approximate counting oracles.
\end{theorem}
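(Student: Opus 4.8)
The plan is to deduce \Cref{thm:squared-hellinger-distance} from \Cref{thm:alpha-divergence} by observing that the squared Hellinger distance is, up to a constant factor, the $\alpha$-divergence at $\alpha = \tfrac12$. First I would record the elementary identity between the two divergence functions. Let $f_{\mathrm H}(x) = \tfrac12(\sqrt{x}-1)^2$ denote the squared Hellinger function, and let $f_{1/2}(x) = \frac{x^{1/2} - \tfrac12 x - \tfrac12}{(\tfrac12)(\tfrac12 - 1)}$ be the $\alpha$-divergence function at $\alpha = \tfrac12$. A one-line computation gives $f_{1/2}(x) = 2x - 4\sqrt{x} + 2 = 2(\sqrt{x}-1)^2$, so $f_{\mathrm H} = \tfrac14 f_{1/2}$ and hence, for every pair of Ising models, $\dvg{\nu}{\mu}{f_{\mathrm H}} = \tfrac14\,\dvg{\nu}{\mu}{f_{1/2}}$. (As a sanity check, this is consistent with the averaged model appearing in the hypothesis: $\dvg{\nu}{\mu}{f_{\mathrm H}} = 1 - \sum_\sigma \sqrt{\nu(\sigma)\mu(\sigma)}$, and $\sqrt{w_\nu(\sigma)w_\mu(\sigma)} = w_{\textsf{avg}}(\sigma)$ gives $\sum_\sigma \sqrt{\nu(\sigma)\mu(\sigma)} = Z_{\textsf{avg}}/\sqrt{Z_\nu Z_\mu}$, so $\dvg{\nu}{\mu}{f_{\mathrm H}} = 1 - Z_{\textsf{avg}}/\sqrt{Z_\nu Z_\mu}$.)

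Next I would match the auxiliary Ising model demanded by \Cref{thm:alpha-divergence} at $\alpha = \tfrac12$ to the averaged model. By \eqref{eq:family-of-isings} with a real index, $J^{(1/2)} = \tfrac12 J^\nu - (\tfrac12 - 1)J^\mu = \tfrac12(J^\nu + J^\mu) = J^{\textsf{avg}}$ and likewise $h^{(1/2)} = h^{\textsf{avg}}$, so $(G, J^{(1/2)}, h^{(1/2)})$ is exactly the averaged model $(G, J^{\textsf{avg}}, h^{\textsf{avg}})$. Therefore the hypotheses of \Cref{thm:alpha-divergence} with $\alpha = \tfrac12$ — marginal boundedness of both inputs, polynomial-time approximate counting oracles for $\nu$, $\mu$, and $(G, J^{(1/2)}, h^{(1/2)})$, and a polynomial-time sampling oracle for the second input — coincide exactly with those listed in \Cref{thm:squared-hellinger-distance} once the averaged model is identified with $(G, J^{(1/2)}, h^{(1/2)})$.

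The one point needing care, and the only place I anticipate any subtlety, is that \Cref{thm:alpha-divergence} requests the sampling oracle specifically for the second input $(G, J^\mu, h^\mu)$, whereas \Cref{thm:squared-hellinger-distance} only assumes that one of the two inputs has one. This is handled by symmetry: $\dvg{\nu}{\mu}{f_{\mathrm H}} = 1 - \sum_\sigma \sqrt{\nu(\sigma)\mu(\sigma)}$ is symmetric in its two arguments (equivalently $\dvg{\nu}{\mu}{f_{1/2}} = \dvg{\mu}{\nu}{f_{1/2}}$, the $\alpha \leftrightarrow 1-\alpha$ reflection at $\alpha = \tfrac12$), and the auxiliary model $\tfrac12(J^\nu + J^\mu)$ is invariant under swapping the two inputs, so I may relabel the inputs so that the one carrying the sampling oracle occupies the "second" slot. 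The resulting algorithm is then: relabel if necessary, run the FPRAS of \Cref{thm:alpha-divergence} with $\alpha = \tfrac12$ to obtain $\hat{D}'$ with $\mathrm{e}^{-\varepsilon}\,\dvg{\nu}{\mu}{f_{1/2}} \le \hat{D}' \le \mathrm{e}^{\varepsilon}\,\dvg{\nu}{\mu}{f_{1/2}}$, and output $\hat{D} = \tfrac14 \hat{D}'$; multiplying by the positive constant $\tfrac14$ preserves both the $\mathrm{e}^{\pm\varepsilon}$ relative error and the $\tfrac23$ success probability, and it leaves the running time polynomial in $n$ and $1/\varepsilon$. Beyond this bookkeeping — the constant in $f_{\mathrm H} = \tfrac14 f_{1/2}$, the identity $J^{(1/2)} = J^{\textsf{avg}}$, and the symmetry remark — there is no substantive obstacle.
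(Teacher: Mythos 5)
Your reduction is correct, and it is a genuinely different route from the paper. You derive \Cref{thm:squared-hellinger-distance} as a corollary of \Cref{thm:alpha-divergence} by the identity $f_{\mathrm H}=\tfrac14 f_{1/2}$ (which I have checked: $f_{1/2}(x)=\frac{x^{1/2}-\frac12 x-\frac12}{(1/2)(-1/2)}=2(\sqrt x-1)^2$), the observation $J^{(1/2)}=J^{\mathsf{avg}}$ and $h^{(1/2)}=h^{\mathsf{avg}}$, and the symmetry of $\dvg{\nu}{\mu}{f_{\mathrm H}}=1-\sum_\sigma\sqrt{\nu(\sigma)\mu(\sigma)}$ (together with the swap-invariance of the averaged model) to move the sampling oracle into the ``second'' slot that \Cref{thm:alpha-divergence} requires. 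The paper instead proves the Hellinger theorem directly and in parallel with the $\alpha$-divergence theorem: for the small-parameter-distance regime it verifies \Cref{cond:general-small-parameter-distance} separately for $f_{\mathrm H}$ via \Cref{lem:sufficient-condition-for-small-parameter-distance} (its table lists $L,U,F$ for each divergence), and for the large-parameter-distance regime it uses the closed-form expression $\dvg{\nu}{\mu}{f_{\mathrm H}}=1-\bar Z/\sqrt{Z_\nu Z_\mu}$ and estimates the ratio of partition functions directly; your sanity check recovers exactly this identity. What your reduction buys is modularity and brevity, and it also makes the role of the averaged model transparent as the $k=\tfrac12$ member of the family $\+F$; what the paper's direct proof buys is independence (it does not need \Cref{thm:alpha-divergence} as a prerequisite, and the $\alpha=\tfrac12$ instance of that theorem has a sign subtlety in the $0<\alpha<1$ branch that the direct Hellinger computation sidesteps). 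Since \Cref{thm:alpha-divergence} is stated and proved for all real $\alpha\neq 0,1$, your use of it as a black box at $\alpha=\tfrac12$ is legitimate and there is no circularity.
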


The squared Hellinger distance is symmetric $\dvg{\nu}{\mu}{f} = \dvg{\mu}{\nu}{f}$. Hence, we only require one of the input Ising models admits a polynomial time sampling oracle.
As a corollary, if two input Ising models are both in the uniqueness regime, then their averaged Ising model is also in the uniqueness regime and FPRAS exists for the squared Hellinger distance. For two Ising models $(G,\beta_\nu)$ and $(G,\beta_\mu)$ with zero external field, FPRAS exists when both $\beta_\nu,\beta_\mu \geq \frac{\Delta-2}{\Delta}$.

\subsection{Related work and open problems}

\paragraph{Related work}
The problem of computing the $f$-divergence between two (either discrete or continuous) distributions is well-motivated by the applications in machine learning and statistics. There are many works, both theoretical and practical, studying the algorithms in various settings. Many works study the error between the true divergence and the divergence computed by certain empirical distributions, e.g.,~\cite{BJPS11,KanamoriSS12, RubensteinBDRT19,SreekumarG22}.
Embedding-base technique was also studied in~\cite{AbdullahKMVV16}, where algorithm embeds a large sample space into a smaller one while preserving the $f$-divergence.
We cannot directly use these techniques to Ising model as the size of the sample space is $2^n$, which gives an exponential time algorithm.
For graphical models, \cite{LeePPW23}~shows that certain divergences can be computed exactly when the graphical model has a \emph{bounded treewidth}.

Approximating the $f$-divergence is related to the testing problem, in which one or both two distributions are given by the access to different types of sampling oracles, and the algorithm is required to test certain divergence is large or small. There is a long line of work studying the testing problem. See \cite{Canonne15,Canonne22} for a comprehensive survey.

The TV-distance $\DTV{\cdot}{\cdot}$ is a special case of $f$-divergences. There are many theoretical works studying the approximation algorithms (either additive error or relative error) and hardness of approximation for the TV-distance between various types of high-dimensional distributions. For example, distributions specified by circuits~\cite{sahai2003complete}, hidden Markov models~\cite{Kiefer18}, product distributions~\cite{BGMMPV23,FGJW23,FengLL24,kontorovich2025tensorization}, graphical models~\cite{BGMV20,BGMMPV24ICML,feng2025approximating}, and high-dimensional Gaussian distributions~\cite{BGMV20,BFS25}. Recently, there are some works focusing on approximating $1 - \DTV{\cdot}{\cdot}$ for two product distributions ~\cite{kontorovich2024sharp,bhattacharyya2025algorithms}.

\paragraph{Open problems}
A natural direction is to consider more general distributions and more general divergences. 
Here are a few examples.
For the Ising model, how to remove the marginal lower bound assumption? There are many other types of graphical models, such as Markov random fields with high-order interactions and Bayesian networks.
It is interesting to consider other types of distributions, e.g., distributions generated by probabilistic circuits~\cite{amarilli2024circus}.
For the divergence, we focus on $\chi^\alpha$-divergence when $\alpha$ is an integer. One can also consider the case when $\alpha$ is a real number and give an algorithm that works in the tight parameter regime.
Another question is to give a more general algorithm that works for a larger family of $f$-divergences.

The algorithm given in this paper is randomized. An open question is to find deterministic algorithms by exploring the deterministic approximate counting algorithms for the Ising model~\cite{LLY13,LSS19}. Currently, deterministic algorithms are known for approximating the TV-distance between two product distributions~\cite{BGMMPV23,FengLL24}.

We can also study the same problem in different settings.
In our setting, we assume that the input gives the description of two Ising models. It is interesting to consider another (perhaps more practical) setting that the algorithm can only access one or two models through random samples. 
This setting is closely related to the \emph{Ising testing problem} in~\cite{DaskalakisDK18}. 
One can even consider abstract distributions with certain properties (e.g. approximate tensorization of $f$-divergence) and the distribution is given by the access to some oracles (e.g., oracle for querying conditional marginal distributions or querying probability masses).  This abstract setting was considered by recent work in testing~\cite{BlancaCSV25,gay2025sampling}.

\section{Algorithm overview}\label{sec:overview}

We give an overview of our algorithm for approximating the $\chi^\alpha$-divergence between two Ising models.
We start with the following definition of the \emph{parameter distance} between two Ising models.

\begin{definition}[\text{parameter distance~\cite{feng2025approximating}}]
    For two Ising models $(G,J^\nu,h^\nu)$ and $(G,J^\mu,h^\mu)$, the parameter distance $d_{\mathrm{par}}(\nu,\mu)$ is defined by
    \begin{align*}
        d_{\mathrm{par}}(\nu,\mu) \triangleq \max\left\{\left\lVert J^\nu-J^\mu\right\rVert _{\max},\max_v{\frac{\left| h^\nu(v)-h^\mu(v)\right|}{\deg (v) +1}}\right\},
    \end{align*}
    where $\deg(v)$ is the degree of $v$ in $G$.
\end{definition}

In~\cite{feng2025approximating}, the parameter distance is used to give a lower bound of TV-distance $\DTV{\nu}{\mu}$, with which the previous work gave an FPRAS for TV-distance.
The following lemma says such a lower bound can be generalized to $\chi^\alpha$-divergence.
The lemma is proved in~\Cref{sec:parameter-distance-f-divergence}, where the proof works for a general family of $f$-divergence.

\begin{lemma}[\text{$\chi^\alpha$-divergence lower bound}]\label{lem:chi-alpha-lower-bound}
    For $f = \frac{1}{2} |x-1|^\alpha$, where $\alpha \geq 1$, it holds that
    \begin{align*}
        \dvg{\nu}{\mu}{\chi^\alpha} \geq \frac{b^{2\alpha}}{2} \cdot d_{\mathrm{par}}(\nu,\mu)^\alpha.
    \end{align*}
\end{lemma}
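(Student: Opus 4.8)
The plan is to reduce the $\chi^\alpha$-divergence lower bound to the known TV-distance lower bound from \cite{feng2025approximating}, via an elementary inequality relating $\chi^\alpha$-divergence to TV-distance under a marginal lower bound hypothesis. Concretely, I would first recall (or re-derive) the parameter-distance bound for TV-distance: $\DTV{\nu}{\mu} \geq c_b \cdot d_{\mathrm{par}}(\nu,\mu)$ for an appropriate constant $c_b$ depending on $b$ (this is the lemma underlying the FPRAS of \cite{feng2025approximating}). The bulk of the work is then to show $\dvg{\nu}{\mu}{\chi^\alpha} \geq g(b) \cdot \DTV{\nu}{\mu}^\alpha$ for a suitable $g(b)>0$, after which composing the two inequalities and bookkeeping the constants yields the stated $\frac{b^{2\alpha}}{2} d_{\mathrm{par}}(\nu,\mu)^\alpha$.

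For the key inequality $\dvg{\nu}{\mu}{\chi^\alpha} \geq g(b)\,\DTV{\nu}{\mu}^\alpha$, I would argue as follows. Write $\DTV{\nu}{\mu} = \sum_{\sigma:\,\nu(\sigma)>\mu(\sigma)} (\nu(\sigma)-\mu(\sigma)) = \E[X\sim\mu]{\left(\tfrac{\nu(X)}{\mu(X)}-1\right)^+}$, while $\dvg{\nu}{\mu}{\chi^\alpha} = \tfrac12\,\E[X\sim\mu]{\left|\tfrac{\nu(X)}{\mu(X)}-1\right|^\alpha}$. Let $r(X) = \nu(X)/\mu(X)$. The marginal lower bound guarantees both $\mu$ and $\nu$ assign probability at least $b^n$ to every configuration (product of conditional marginals), but more usefully it controls the ratio $r(X)$: one should be able to bound $|r(X)-1|$ from above (and below) by something like a function of $d_{\mathrm{par}}$, or at least conclude that on the set where $r>1$ the ratio cannot be too enormous, giving a bound of the form $|r(X)-1|^\alpha \geq h(b)\,|r(X)-1|$ on the relevant part of the space. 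Combined with $x\mapsto x^\alpha$ being convex and Jensen's inequality applied to the positive-part random variable $(r(X)-1)^+$ under $\mu$, this gives
\begin{align*}
\dvg{\nu}{\mu}{\chi^\alpha} \;\geq\; \tfrac12\,\E[X\sim\mu]{\left((r(X)-1)^+\right)^\alpha} \;\geq\; \tfrac12\left(\E[X\sim\mu]{(r(X)-1)^+}\right)^\alpha \;=\; \tfrac12\,\DTV{\nu}{\mu}^\alpha,
\end{align*}
where the middle step is Jensen for the convex function $t\mapsto t^\alpha$ on $[0,\infty)$ (valid since $\alpha\geq 1$). Notice this already removes the need for the crude $h(b)$ factor in this step; the $b$-dependence then enters only through the TV-distance lower bound.

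Putting the pieces together: $\dvg{\nu}{\mu}{\chi^\alpha} \geq \tfrac12 \DTV{\nu}{\mu}^\alpha \geq \tfrac12 \left(c_b\, d_{\mathrm{par}}(\nu,\mu)\right)^\alpha$, and it remains to check that the constant $c_b$ from the TV-distance lemma of \cite{feng2025approximating} satisfies $c_b^\alpha \geq b^{2\alpha}$, i.e.\ $c_b \geq b^2$; if the cited bound is stated with a different constant one absorbs the discrepancy or re-derives the TV lower bound with the explicit constant $b^2$ (a single-vertex coupling / marginal-perturbation argument showing that changing one interaction parameter by $\delta$ moves some conditional marginal by $\Omega(b^2\delta)$). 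The main obstacle I anticipate is precisely this last bookkeeping: making sure the constant coming out of the parameter-distance-to-TV-distance step is at least $b^2$ rather than some weaker polynomial in $b$, which may require reproving that lemma with care rather than citing it as a black box. The reduction to TV-distance and the Jensen step are routine; the delicate part is the explicit constant in the one-variable marginal perturbation estimate.
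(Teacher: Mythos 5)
Your overall strategy matches the paper's: reduce to the TV-distance lower bound from \cite{feng2025approximating} and then relate $\chi^\alpha$-divergence to a power of TV-distance via convexity. However, as written your argument does not reach the claimed constant $\tfrac{b^{2\alpha}}{2}$; it only gives $\tfrac{b^{2\alpha}}{2^{\alpha+1}}$, and you have misdiagnosed where the missing factor of $2^\alpha$ should come from.

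Concretely: the cited TV-distance lemma (Lemma 15 of \cite{feng2025approximating}, which is \Cref{lem:tv-lower-bound} here) has constant $c_b = \tfrac{b^2}{2}$, not $b^2$. There is no reason to expect you can sharpen that lemma, and the paper does not. Your chain
\[
\dvg{\nu}{\mu}{\chi^\alpha} \;\geq\; \tfrac12\,\E[X\sim\mu]{\bigl((r(X)-1)^+\bigr)^\alpha} \;\geq\; \tfrac12\,\DTV{\nu}{\mu}^\alpha
\]
is correct but lossy, because it discards the entire contribution of the set $\{r<1\}$ before applying Jensen. Composing with $\DTV{\nu}{\mu}\geq\tfrac{b^2}{2}d_{\mathrm{par}}$ gives $\tfrac{b^{2\alpha}}{2^{\alpha+1}}d_{\mathrm{par}}^\alpha$, a factor $2^\alpha$ short.

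The paper recovers the $2^\alpha$ at the Jensen step, not at the TV-distance lemma, and this is specific to $\chi^\alpha$. Passing to the two-point coarsening $(p,q)\mapsto(\nu K,\mu K)$, data processing gives $\dvg{\nu}{\mu}{\chi^\alpha}\geq q\,f(p/q)+(1-q)f\bigl(\tfrac{1-p}{1-q}\bigr)$. Since $f(x)=\tfrac12|x-1|^\alpha$ is symmetric about $1$, one can rewrite $f\bigl(\tfrac{1-p}{1-q}\bigr)=f\bigl(\tfrac{1+p-2q}{1-q}\bigr)$; then Jensen on the two-point mixture lands at the point $1+2(p-q)=1+2\DTV{\nu}{\mu}$, giving $\dvg{\nu}{\mu}{\chi^\alpha}\geq f(1+2\DTV{\nu}{\mu})=2^{\alpha-1}\DTV{\nu}{\mu}^\alpha$. (Equivalently, applying Jensen conditionally on $\{r>1\}$ and $\{r<1\}$ separately and then optimizing over $q=\mu(r>1)$ yields the same factor $2^{\alpha-1}$.) Then $2^{\alpha-1}\bigl(\tfrac{b^2}{2}d_{\mathrm{par}}\bigr)^\alpha=\tfrac{b^{2\alpha}}{2}d_{\mathrm{par}}^\alpha$, which is the stated bound. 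So the gap in your proof is the step where you drop the negative part; the fix is to keep both parts and exploit the symmetry $f(1-x)=f(1+x)$, rather than to try to improve the TV-distance lemma.
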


 Let $\theta = \frac{1}{\mathrm{poly}(n)}$ be a threshold. We estimate $\dvg{\nu}{\mu}{\chi^\alpha}$ in the following two cases.
\begin{itemize}
    \item \textbf{Case} $d_{\mathrm{par}}(\nu,\mu) > \theta$. This case is trivial for TV-distance because there is a very simple additive-error approximation algorithm for the TV-distance~\cite{BGMV20}. Since TV-distance itself is lower bounded by ${1}/{\mathrm{poly}(n)}$, the additive error approximation can be easily transferred to relative error approximation. However, this simple algorithm only works for the TV-distance. Instead, we propose a new algorithm for approximating the general $\chi^\alpha$-divergence. The algorithm is outlined in \Cref{sec:hard-parameter-distance}.

    \item \textbf{Case} $d_{\mathrm{par}}(\nu,\mu) \leq \theta$. In this case, two Ising models are similar to each other. Previous work \cite{feng2025approximating} has explored the similarity of two models and designed an algorithm for approximating their TV-distance. 
    We substantially generalize their algorithm to approximate a family  $f$-divergence (including $\chi^\alpha$-divergence). The algorithm is outlined in~\Cref{sec:small-parameter-distance}.
\end{itemize}

\subsection{The algorithm for instances with large parameter distance}\label{sec:hard-parameter-distance}

We first state the challenge of approximating the $\chi^\alpha$-divergence for general $\alpha$ compared to the TV-distance.
It is well-known that the TV-distance can be written as follows:
\begin{align}\label{eq:tv-distance-expectation}
    \DTV{\nu}{\mu} = \frac{1}{2} \E[X \sim \mu]{\left| \frac{\nu(X)}{\mu(X)} - 1 \right|} = \frac{1}{2}\sum_{\sigma \in \Omega}|\mu(\sigma)-\nu(\sigma)| \overset{(\star)}{=} \sum_{\sigma: \nu(\sigma) < \mu(\sigma)}\mu(\sigma)\tp{1 - \frac{\nu(\sigma)}{\mu(\sigma)}}.
\end{align}
It was observed in \cite{BGMV20} that the TV-distance can be approximated by draw independent samples $X \sim \mu$ and then taking the average of $Y = \max\{0,1-\frac{\nu(X)}{\mu(X)}\}$\footnote{The value of $Y$ can be computed approximately, which is sufficient for the purpose of approximation.}. The above equation shows $\E[]{Y} = \DTV{\nu}{\mu}$. It is easy to see $|Y| \leq 1$ and thus $\Var[]{Y} \leq 1$. The simple algorithm achieves the additive error approximation, which can be transferred to relative-error approximation because $\DTV{\nu}{\mu} \geq 1/\mathrm{poly}(n)$ is lower bounded in this case.

However, for the $\chi^\alpha$-divergence with general $\alpha$, by the definition,
\begin{align*}
    \dvg{\nu}{\mu}{\chi^\alpha} = \frac{1}{2} \sum\limits_{\sigma \in \Omega} \mu(\sigma) \left| \frac{\nu(\sigma)}{\mu(\sigma)}-1 \right|^\alpha = \frac{1}{2} \sum_{\sigma \in \Omega}   |\mu(\sigma)-\nu(\sigma)| \cdot \left \vert \frac{\nu(\sigma)}{\mu(\sigma)} - 1 \right \vert ^{\alpha - 1}. 
\end{align*}
Due to the extra term $\vert \frac{\nu(\sigma)}{\mu(\sigma)} - 1  \vert ^{\alpha - 1}$, we cannot apply a similar transformation as equation $(\star)$ in~\eqref{eq:tv-distance-expectation}. Hence, it is not clear how to design an unbiased estimator $Y$ such that variance of $Y$ is small.

To overcome this challenge, we propose a new algorithm for approximating the $\chi^\alpha$-divergence. The following lemma gives a lower bound for the $\chi^\alpha$-divergence in this case.


\begin{lemma}\label{lem:error-bound}
    Let $0<\theta,b \leq 1$. If $d_{\mathrm{par}}(\nu,\mu) \geq \theta$ and both $\nu$ and $\mu$ are $b$-marginally bounded, then
    \begin{align*}
        \dvg{\nu}{\mu}{\chi^\alpha} \geq  B_{\alpha,b}(\theta) \cdot \sum\limits_{\sigma \in \{\pm\}^V} \mu(\sigma) \left( \frac{\nu(\sigma)}{\mu(\sigma)} + 1 \right)^\alpha.
    \end{align*}
    where $B_{\alpha,b}(\theta) =  \frac{b^{2\alpha} \theta^\alpha}{2} \left( \left( \frac{b^{2\alpha} \theta^\alpha}{2} \right)^{1/(\alpha+1)} + 2 \right)^{-\alpha} \left( 2 \left( \frac{b^{2\alpha} \theta^\alpha}{2} \right)^{1/(\alpha+1)} + 1 \right)^{-1} = \Theta_{\alpha,b}(\theta^\alpha)$.
\end{lemma}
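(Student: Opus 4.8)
Write $r(\sigma) = \nu(\sigma)/\mu(\sigma)$ (well-defined and strictly positive everywhere, since Ising weights are positive), $D := \dvg{\nu}{\mu}{\chi^\alpha} = \tfrac12\sum_{\sigma}\mu(\sigma)|r(\sigma)-1|^\alpha$, and let $S := \sum_{\sigma}\mu(\sigma)(r(\sigma)+1)^\alpha$ be the quantity on the right-hand side of the claimed inequality. The plan is to prove $D \geq B_{\alpha,b}(\theta)\cdot S$ by a region decomposition controlled by a single threshold. The essential point to recognize first is why such an inequality can hold at all: since $\E[\mu]{r}=\sum_\sigma\nu(\sigma)=1$, Jensen gives $S\geq 2^\alpha$ no matter what, while $D$ can be $0$ (take $\nu=\mu$). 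So the bulk of $S$ is genuinely of constant order $2^\alpha$ and cannot be bounded on its own; the only reason the lemma is true is that the hypothesis $d_{\mathrm{par}}(\nu,\mu)\geq\theta$ supplies, through \Cref{lem:chi-alpha-lower-bound}, the a priori bound $D\geq L$ where $L := \tfrac{b^{2\alpha}\theta^\alpha}{2}$ (and $L\leq\tfrac12$ because $b,\theta\leq 1$). The key move is to charge that bulk against this lower bound on $D$ itself.

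First I would fix a threshold $t\in(0,1)$ (to be optimized at the end) and split $\{\pm1\}^V = \Omega_{\mathrm{far}}\cup\Omega_{\mathrm{near}}$ according to whether $|r(\sigma)-1|\geq t\,(r(\sigma)+1)$ or not. On $\Omega_{\mathrm{far}}$ we have $(r+1)^\alpha\leq t^{-\alpha}|r-1|^\alpha$, so the contribution of $\Omega_{\mathrm{far}}$ to $S$ is at most $t^{-\alpha}\sum_\sigma\mu(\sigma)|r-1|^\alpha = 2D/t^\alpha$. On $\Omega_{\mathrm{near}}$, the inequality $|r-1|<t(r+1)$ rearranges (treating $r\geq 1$ and $r<1$ separately) to $r<\frac{1+t}{1-t}$, hence $r+1<\frac{2}{1-t}$, so the contribution of $\Omega_{\mathrm{near}}$ to $S$ is at most $\frac{2^\alpha}{(1-t)^\alpha}$ using $\mu(\Omega_{\mathrm{near}})\leq 1$. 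Now I absorb this constant term using $D\geq L$, namely $\frac{2^\alpha}{(1-t)^\alpha}\leq\frac{2^\alpha}{(1-t)^\alpha L}\,D$. Summing the two pieces yields $S\leq D\big(\tfrac{2}{t^\alpha}+\tfrac{2^\alpha}{(1-t)^\alpha L}\big)$, i.e. $D\geq S\big/\big(\tfrac{2}{t^\alpha}+\tfrac{2^\alpha}{(1-t)^\alpha L}\big)$.

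It remains to choose $t$ to make the denominator as small as possible. Taking $t = \frac{L^{1/(\alpha+1)}}{L^{1/(\alpha+1)}+2}\in(0,1)$, so that $1-t = \frac{2}{L^{1/(\alpha+1)}+2}$, a short algebraic simplification collapses $\tfrac{2}{t^\alpha}+\tfrac{2^\alpha}{(1-t)^\alpha L}$ to exactly $\frac{(L^{1/(\alpha+1)}+2)^\alpha}{L}\big(2L^{1/(\alpha+1)}+1\big)$, whose reciprocal is precisely $B_{\alpha,b}(\theta)$; this is the choice engineered to reproduce the stated constant. For the asymptotics, since $0<L^{1/(\alpha+1)}\leq 1$ we get $2^\alpha\leq(L^{1/(\alpha+1)}+2)^\alpha\leq 3^\alpha$ and $1\leq 2L^{1/(\alpha+1)}+1\leq 3$, hence $\frac{L}{3^{\alpha+1}}\leq B_{\alpha,b}(\theta)\leq\frac{L}{2^\alpha}$ and $B_{\alpha,b}(\theta)=\Theta_{\alpha,b}(L)=\Theta_{\alpha,b}(\theta^\alpha)$. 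The main obstacle is conceptual rather than computational: one has to see that the near-$1$ part of $S$ is of the unavoidable order $2^\alpha$ and therefore must be paid for by the a priori lower bound $D\geq L$ coming from \Cref{lem:chi-alpha-lower-bound}; once that is in place, the rest is a routine one-parameter optimization matched to $B_{\alpha,b}(\theta)$.
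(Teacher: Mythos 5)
Your proof is correct and follows essentially the same region-split-and-optimize strategy as the paper. The paper decomposes by the absolute threshold $M=\{\sigma: |\nu(\sigma)/\mu(\sigma)-1|\le t\}$ while you use the relative threshold $|r(\sigma)-1|\ge t\,(r(\sigma)+1)$; both charge the far region directly against $\dvg{\nu}{\mu}{\chi^\alpha}$ and the constant-order near region against the a priori lower bound from \Cref{lem:chi-alpha-lower-bound}, and your choice of $t$ is, as you note, engineered to reproduce exactly the same $B_{\alpha,b}(\theta)$.
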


The proof of \Cref{lem:error-bound} separates all $\sigma \in \Omega$ into two parts, and separately lower bound their contributions to the $\chi^\alpha$-divergence. The detailed proof is deferred to~\Cref{sec:lower-bound-large-parameter-distance}. 

\Cref{lem:error-bound} gives us a tool to control the error of approximation. Recall that our goal is to design an algorithm that approximates the $\chi^\alpha$-divergence with relative error $\e^{\pm\varepsilon}$. Equivalently, the algorithm should achieve the $O(\varepsilon) \cdot \dvg{\nu}{\mu}{\chi^\alpha}$ additive error. Using the above lemma, it is sufficient to show that the algorithm can achieve the $O(\varepsilon) \cdot B_{\alpha,b}(\theta) \cdot \sum_{\sigma \in \Omega} \mu(\sigma) \left( \frac{\nu(\sigma)}{\mu(\sigma)} + 1 \right)^\alpha$ additive error approximation, which turns out to be much easier to prove.  In addition to algorithms, \Cref{lem:error-bound} also plays an important role in the proof of the hardness result. See \Cref{sec:proof-hardness} for details. 



\subsubsection{The algorithm for even \texorpdfstring{$\alpha$}{alpha}}\label{sec:overview-algorithm-even-alpha}

Our actual algorithm deals with all $\alpha \geq 1$ in a unified way. However, in the overview, we exhibit a simple algorithm for the case where $\alpha$ is even. The simple case will illustrate the intuition why we need to consider a family of Ising models $\+F(\nu,\mu,\alpha)$ in~\eqref{eq:family-of-isings} and why \Cref{lem:error-bound} can help us to control the error of approximation.
When $\alpha$ is even, we can replace $\left \vert \frac{\nu(\sigma)}{\mu(\sigma)} - 1 \right \vert ^{\alpha}$ with $\left( \frac{\nu(\sigma)}{\mu(\sigma)} - 1 \right)^{\alpha}$ and then use the binomial expansion to get the following equation:
\begin{align}
    \dvg{\nu}{\mu}{\chi^\alpha} = \frac{1}{2} \sum\limits_{\sigma \in \Omega} \mu(\sigma) \left( \frac{\nu(\sigma)}{\mu(\sigma)}-1 \right)^\alpha
    = \frac{1}{2} \sum\limits_{k=0}^\alpha \binom{\alpha}{k} (-1)^{\alpha - k} \sum\limits_{\sigma \in \Omega} \frac{\nu^k(\sigma)}{\mu^{k-1}(\sigma)}.\label{eq:expansion-for-even-alpha}
\end{align}

To deal with the term $\frac{\nu^k(\sigma)}{\mu^{k-1}(\sigma)}$, recall that~\eqref{eq:family-of-isings} defines a family of Ising models $(G, J^{(k)}, h^{(k)})$ such that $J^{(k)} \triangleq kJ^\nu - (k-1)J^\mu$ and $h^{(k)} \triangleq kh^\nu - (k-1)h^\mu$.
By our assumption in \Cref{thm:approx-chi-alpha-divergence}, for all $0 \leq k \leq \alpha$, the Ising model $(G, J^{(k)}, h^{(k)})$ admits sampling and approximate counting oracles. Let $Z_k$ denote the \emph{partition function} of the Ising model $(G, J^{(k)}, h^{(k)})$. We remark that $Z_\nu = Z_1$ and $Z_\mu = Z_0$ are the partition functions of input Ising model $(G, J^\nu, h^\nu)$ and $(G, J^\mu, h^\mu)$, respectively. Each term in the summation of~\eqref{eq:expansion-for-even-alpha} can be written as
\begin{equation}\label{eq:k-ising-model}
    \sum\limits_{\sigma \in \Omega} \frac{\nu^k(\sigma)}{\mu^{k-1}(\sigma)} = \frac{Z_\mu^{k-1} }{Z_\nu^k} \sum_{\sigma \in \Omega} \frac{w_\nu^k(\sigma)}{w_\mu^{k-1}(\sigma)} = \frac{Z_\mu^{k-1} }{Z_\nu^k} \sum_{\sigma \in \Omega} \exp\left(\frac{\sigma^T J^{(k)} \sigma}{2} + \sigma^T h^{(k)}\right) = \frac{Z_\mu^{k-1} \cdot Z_k }{Z_\nu^k}.
\end{equation}

The above ratio can be approximated by using approximate counting oracles for $Z_\nu$, $Z_\mu$, and $Z_k$. Note that $0 \leq k \leq \alpha$ is a constant. By choosing the relative approximation error $O(B_{\alpha,b}(\theta) \cdot \varepsilon) = \mathrm{poly}(\varepsilon/n)$ in the oracle small enough, where $B_{\alpha,b}(\theta) = \mathrm{poly}(1/n)$ is the parameter in \Cref{lem:error-bound}, we can obtain a value $W_k \in \e^{\pm O(B_{\alpha,b}(\theta) \cdot \varepsilon)} \cdot \frac{Z_\mu^{k-1} \cdot Z_k }{Z_\nu^k}$ that achieves the following approximation guarantee
\begin{align}\label{eq:k-ising-model-approximation}
\left \vert W_k  - \frac{Z_\mu^{k-1} \cdot Z_k }{Z_\nu^k} \right \vert \leq B_{\alpha,b}(\theta) \cdot \varepsilon \cdot  \frac{Z_\mu^{k-1} \cdot Z_k }{Z_\nu^k},
\end{align}
where in the above inequality, we write the relative error in terms of the additive error. Finally, our algorithm outputs the number $\hat{D}$ such that
\begin{align*}
    \hat{D} = \frac{1}{2} \sum\limits_{k=0}^\alpha \binom{\alpha}{k} (-1)^{\alpha - k}W_k.
\end{align*}
Note that $\hat{d}$ is an interlacing of plus and minus. Using~\eqref{eq:expansion-for-even-alpha} and~\eqref{eq:k-ising-model-approximation}, in the worst case (the additive error of every term takes the worst sign), the error of $\hat{d}$ can be upper bounded as follows
\begin{align*}
    \left| \hat{D} - \dvg{\nu}{\mu}{\chi^\alpha} \right| &\leq \varepsilon \cdot B_{\alpha,b}(\theta) \cdot \sum\limits_{k=0}^\alpha \binom{\alpha}{k} \frac{Z_\mu^{k-1} \cdot Z_k }{Z_\nu^k} = \varepsilon \cdot B_{\alpha,b}(\theta) \cdot \sum\limits_{k=0}^\alpha \binom{\alpha}{k} \sum\limits_{\sigma \in \Omega} \frac{\nu^k(\sigma)}{\mu^{k-1}(\sigma)}\notag\\
    &= \varepsilon \cdot B_{\alpha,b}(\theta) \cdot \sum\limits_{\sigma \in \Omega} \mu(\sigma) \left( \frac{\nu(\sigma)}{\mu(\sigma)} + 1 \right)^\alpha \overset{\text{\Cref{lem:error-bound}}}{\leq} \varepsilon \cdot \dvg{\nu}{\mu}{\chi^\alpha}.
\end{align*}

\subsubsection{The algorithm for general \texorpdfstring{$\alpha$}{alpha}}

For general $\alpha$, we need to consider the effect of the absolute value inside the divergence.
The $\chi^\alpha$-divergence
$\dvg{\nu}{\mu}{\chi^\alpha}
        = \frac{1}{2}\sum_{\sigma\in\Omega} \mu(\sigma) |\frac{\nu(\sigma)}{\mu(\sigma)} - 1|^\alpha$ can be written as
    \begin{align*}
        \dvg{\nu}{\mu}{\chi^\alpha}
        = \frac{1}{2}\sum_{\sigma: \nu(\sigma) > \mu(\sigma)}\mu(\sigma) \tp{\frac{\nu(\sigma)}{\mu(\sigma)} - 1}^\alpha - \frac{1}{2}\sum_{\sigma: \nu(\sigma) < \mu(\sigma)} \mu(\sigma)\tp{\frac{\nu(\sigma)}{\mu(\sigma)} - 1}^\alpha.
    \end{align*}
Using the binomial expansion, the divergence can be written as
\begin{align*}
    \dvg{\nu}{\mu}{\chi^\alpha} = \frac{1}{2}\sum_{k=0}^{\alpha} (-1)^{\alpha - k} \binom{\alpha}{k} \left( \sum\limits_{\sigma: \nu(\sigma) > \mu(\sigma)} \frac{\nu^k(\sigma)}{\mu^{k-1}(\sigma)} + (-1)^\alpha \sum\limits_{\sigma: \nu(\sigma) < \mu(\sigma)} \frac{\nu^k(\sigma)}{\mu^{k-1}(\sigma)} \right).
\end{align*}

Fix an integer $0 \leq k \leq \alpha$.
Let $X$ be a random sample from the Ising model $(G, J^{(k)}, h^{(k)})$. It holds that $\Pr[]{X = \sigma} \propto \frac{\nu^k(\sigma)}{\mu^{k-1}(\sigma)}$.
Define $W^+_k$ and $W^-_k$ to be random variables such that 
\begin{align}\label{eq:definition-for-W-plus-minus}
    W^+_k = \mathbf{1}[\nu(X) > \mu(X)] \cdot \frac{Z_0^{k-1} \cdot Z_k}{Z_1^k} \quad \text{and} \quad 
    W^-_k = \mathbf{1}[\nu(X) < \mu(X)] \cdot \frac{Z_0^{k-1} \cdot Z_k}{Z_1^k}.
\end{align}
With some calculation, we can verify that
\begin{align*}
    \sum\limits_{\sigma: \nu(\sigma) > \mu(\sigma)} \frac{\nu^k(\sigma)}{\mu^{k-1}(\sigma)} + (-1)^\alpha \sum\limits_{\sigma: \nu(\sigma) < \mu(\sigma)} \frac{\nu^k(\sigma)}{\mu^{k-1}(\sigma)} = \E{W^+_k} + (-1)^\alpha \E{W^-_k}.
\end{align*}

Ideally, our algorithm wants to draw $T = \mathrm{poly}(\frac{\log n}{\varepsilon})$ independent random samples of $W^+_k$ and then uses their average value $\hat{W}^+_k$ to approximate $\mathbf{E}[W^+_k]$.
Similarly, the algorithm computes $\hat{W}^-_k$ to approximate $\mathbf{E}[W^-_k]$. Finally, it outputs $\hat D = \frac{1}{2} \sum_{k=0}^{\alpha} (-1)^{\alpha - k} \binom{\alpha}{k} (\hat{W}^+_k + (-1)^\alpha \hat{W}^-_k)$.
However, to show the correctness of the algorithm, we need to deal with the following two types of errors.
\begin{itemize}
    \item The algorithm cannot draw perfect samples of $W^+_k$ or $W^-_k$. One major issue\footnote{There are some other issues, e.g., the samples from the Ising model $(G, J^{(k)}, h^{(k)})$ are approximate and partition functions $Z_0,Z_1,Z_k$ in~\eqref{eq:definition-for-W-plus-minus} can only be approximated.} is that given a sample $X \sim \text{Ising}(G, J^{(k)}, h^{(k)})$, the exact computation of the probability masses $\nu(X)$ and $\mu(X)$ is $\#$P-hard~\cite{JS93}. Hence, the algorithm cannot perfectly distinguish the cases where $\nu(X) > \mu(X)$ or $\nu(X) < \mu(X)$. This issue can be solved by using the approximate counting oracle to approximate the probability masses $\nu(X)$ and $\mu(X)$. By choosing a small enough error parameter $\mathrm{poly}(n/\varepsilon)$ for approximate counting, the algorithm misclassifies $\nu(X) > \mu(X)$ or $\nu(X) < \mu(X)$ only if $|\frac{\nu(X)}{\mu(X)} - 1|  \leq \mathrm{poly}(\varepsilon/n)$. We need to show that such $X$ does not contribute much to the error of approximation.
    \item We use the average of $\hat{W}^+_k$ and $\hat{W}^-_k$ to approximate $\mathbf{E}[W^+_k]$ and $\mathbf{E}[W^-_k]$ respectively. We need to control the concentration error. This error can be bounded via \Cref{lem:error-bound}, using a similar technique as described in the even $\alpha$ case.
\end{itemize}
The detailed algorithm and analysis is in \Cref{sec:algorithm-odd-alpha}.


\subsection{The algorithm for instances with small parameter distance}\label{sec:small-parameter-distance}


In this case, we give a general algorithm for abstract $f$-divergence.
We briefly sketch the algorithm implemented for $\chi^\alpha$-divergence in the overview.
The abstract one is in \Cref{sec:alg-abs}.
By the definition,
\begin{align}\label{eq:expansion-for-small-parameter-distance}
    \dvg{\nu}{\mu}{\chi^\alpha} = \frac{1}{2} \sum\limits_{\sigma \in \{\pm\}^V} \mu(\sigma) \left| \frac{\nu(\sigma)}{\mu(\sigma)}-1 \right|^\alpha = \frac{1}{2} \sum\limits_{\sigma \in \{\pm\}^V} \mu(\sigma) \left| \frac{w_\nu(\sigma)}{w_\mu(\sigma)} \cdot \frac{Z_\mu}{Z_\nu} -1 \right|^\alpha.
\end{align}

Define a random variable $W \triangleq \frac{w_\nu(X)}{w_\mu(X)}$, where $X \sim \mu$.
It can be verified that $\E{W} = \frac{Z_\nu}{Z_\mu}$ and 
\begin{align*}
    \dvg{\nu}{\mu}{\chi^\alpha}  = \E{\left\vert \frac{W}{\E[]{W}} - 1 \right\vert^\alpha}.
\end{align*}
The algorithm draws $T = \mathrm{poly}(n/\varepsilon)$ samples $W_1, \ldots, W_T$ from $W$ independently, and then computes $\bar{W} = \frac{1}{T}\sum_{i=1}^T W_i$ and $\hat{D} = \frac{1}{T}\sum_{i=1}^T \frac{1}{2} \left| \frac{W_i}{\bar W} -1 \right|^\alpha$. 
Since the parameter distance is small, two weight functions $w_\nu(\cdot)$ and $w_\mu(\cdot)$ are very similar. Hence, the ratio $W = \frac{w_\nu(X)}{w_\mu(X)} \approx 1$. Formally,
\begin{align*}
 \forall \sigma \in \Omega, \quad \left| \frac{w_\nu(\sigma)}{w_\mu(\sigma)} - \E{W} \right|^\alpha \leq \mathrm{poly}(n) \cdot d_{\mathrm{par}}(\nu,\mu)^\alpha \overset{\text{\Cref{lem:chi-alpha-lower-bound}}}{\leq}\mathrm{poly}(n) \cdot \dvg{\nu}{\mu}{\chi^\alpha}.
\end{align*}
The above property guarantees that $W$ is well-concentrated around its mean. By choosing the number of samples $T = \mathrm{poly}(n/\varepsilon)$ large enough, with some calculation, we can show that $\hat{D}$ approximates $\dvg{\nu}{\mu}{\chi^\alpha}$ with a relative error of $\e^{\pm O(\varepsilon)}$ with high probability. We remark that $\hat{D}$ is not necessarily an unbiased estimator for $\dvg{\nu}{\mu}{\chi^\alpha}$ but it is still a good approximation.

\subsection{Organization of the paper}
In \Cref{sec:lower-bound}, we prove some lower bounds for a broad family of $f$-divergences.
In \Cref{sec:alg-abs}, we give a general algorithm for general $f$-divergences satisfying an abstract condition.
In \Cref{sec:alg-chi-alpha}, we give the detailed algorithm for $\chi^\alpha$-divergence, where we focus on the large parameter distance case because the small parameter distance case is solved in \Cref{sec:alg-abs}.
In \Cref{sec:other-f-divergences}, we show how to extend our algorithm to other $f$-divergences. Finally, in \Cref{sec:proof-hardness}, we prove the hardness result for approximating $\chi^\alpha$-divergence in the parameter regime stated in \Cref{thm:hardness-without-F}.

\section{Lower bounds of \texorpdfstring{$f$}{f}-divergences}\label{sec:lower-bound}

In this paper, we assume the function $f$ in $f$-divergence satisfies the following assumption.
\begin{assumption}\label{ass:f-assumption}
    The convex continuous function $f:(0,+\infty) \to [0,\infty)$ satisfies that $f(1) = 0$ and 
    for any $x \neq 1$, $f$ is twice differentiable at $x$ such that $f''(x) \geq 0$. Furthermore, the derivative satisfies $f'(x) < 0$ if $0 < x < 1$ and $f'(x) > 0$ if $x > 1$.
\end{assumption}
For any function $f$ satisfying the above assumption, $x = 1$ is the unique point where $f(x) = 0$. It is straightforward to verify that $f$ is strictly convex at around 1. The $f$-divergences we are interested in all satisfy the above assumption.

\subsection{Lower bound in terms of parameter distance}\label{sec:parameter-distance-f-divergence}
Our proof uses the following lower bound of total variation distance.

\begin{lemma}[\text{\cite[Lemma 15]{feng2025approximating}}]\label{lem:tv-lower-bound}
  For two Ising models,  if both $\nu$ and $\mu$ are $b$-marginally bounded, then their total variation distance is lower bounded by
    \begin{align*}
        \DTV{\nu}{\mu} \geq \frac{b^2}{2} \cdot d_{\mathrm{par}}(\nu,\mu).
    \end{align*}
\end{lemma}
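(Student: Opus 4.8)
The plan is to localize the bound by conditioning on all but a small set of vertices, and to split into two cases according to which term attains the maximum in $d_{\mathrm{par}}(\nu,\mu)$; write $d := d_{\mathrm{par}}(\nu,\mu)$. The workhorse is the elementary inequality, valid for any $S \subseteq V$,
\[
  \DTV{\nu}{\mu} \ \geq\ \tfrac12\,\E[\rho\sim\mu_{V\setminus S}]{\DTV{\nu^\rho_S}{\mu^\rho_S}},
\]
where $\nu^\rho_S,\mu^\rho_S$ denote the conditional laws on $S$ given the pinning $\rho$ of $V\setminus S$. This follows by expanding $\nu(\rho,s)=\nu_{V\setminus S}(\rho)\,\nu^\rho_S(s)$, adding and subtracting $\mu_{V\setminus S}(\rho)\,\nu^\rho_S(s)$ inside $\sum_{\rho,s}|\nu(\rho,s)-\mu(\rho,s)|$, applying the triangle inequality, and absorbing the residual term via the data-processing bound $\DTV{\nu_{V\setminus S}}{\mu_{V\setminus S}}\le\DTV{\nu}{\mu}$. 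It therefore suffices to exhibit a set $S$ of size $1$ or $2$ on which $\DTV{\nu^\rho_S}{\mu^\rho_S}=\Omega(b^2 d)$ for \emph{every} pinning $\rho$; crucially, the $b$-marginal bound of \Cref{def:marginal-lower-bound} applies to all such conditional laws.

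\textbf{Case 1: a field attains the maximum}, i.e.\ $\tfrac{|h^\nu(v_0)-h^\mu(v_0)|}{\deg(v_0)+1}=d$ for some $v_0$. Take $S=\{v_0\}$. For any pinning $\rho$, $\nu^\rho_{v_0}$ is the single-spin Gibbs law with effective field $\lambda^\nu(\rho)=h^\nu(v_0)+\sum_{w\sim v_0}J^\nu_{v_0w}\rho_w$, and $\DTV{\nu^\rho_{v_0}}{\mu^\rho_{v_0}}=|p(\lambda^\nu(\rho))-p(\lambda^\mu(\rho))|$ with $p(\lambda)=1/(1+e^{-2\lambda})$. Since $\lVert J^\nu-J^\mu\rVert_{\max}\le d$, the triangle inequality gives $|\lambda^\nu(\rho)-\lambda^\mu(\rho)|\ge(\deg(v_0)+1)d-\deg(v_0)d=d$. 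On the interval between $\lambda^\mu(\rho)$ and $\lambda^\nu(\rho)$, monotonicity of $p$ together with $p(\lambda^\nu(\rho)),p(\lambda^\mu(\rho))\in[b,1-b]$ forces $p(\lambda)\in[b,1-b]$, hence $p'(\lambda)=2p(\lambda)(1-p(\lambda))\ge 2b^2$; integrating yields $|p(\lambda^\nu(\rho))-p(\lambda^\mu(\rho))|\ge 2b^2 d$ for every $\rho$, and the workhorse inequality gives $\DTV{\nu}{\mu}\ge b^2 d$.

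\textbf{Case 2: an interaction attains the maximum}, i.e.\ $|J^\nu_{u_0v_0}-J^\mu_{u_0v_0}|=d$ for some edge $\{u_0,v_0\}$. Take $S=\{u_0,v_0\}$. For any pinning $\rho$ the conditional law of $(\sigma_{u_0},\sigma_{v_0})$ is $\propto\exp(J_{u_0v_0}\sigma_{u_0}\sigma_{v_0}+a\sigma_{u_0}+c\sigma_{v_0})$ for effective fields $a,c$ depending on $\rho$; the key observation is that the log cross-ratio $\ln\frac{\Pr{++}\Pr{--}}{\Pr{+-}\Pr{-+}}$ equals $4J_{u_0v_0}$ regardless of $a,c$. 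Hence $\nu^\rho_S$ and $\mu^\rho_S$ have log cross-ratios differing by exactly $4d$ for every $\rho$. Each of the four conditional masses is $\ge b^2$ (e.g.\ $\Pr{\sigma_{u_0}=+,\sigma_{v_0}=+}=\Pr{\sigma_{u_0}=+}\cdot\Pr{\sigma_{v_0}=+\mid\sigma_{u_0}=+}\ge b\cdot b$, applying \Cref{def:marginal-lower-bound} twice). Since $P\mapsto\ln P(++)+\ln P(--)-\ln P(+-)-\ln P(-+)$ has $\ell_\infty$-gradient at most $1/b^2$ on the region where all coordinates are $\ge b^2$ (a region convex, hence containing the segment $[\mu^\rho_S,\nu^\rho_S]$), the mean value theorem and $\ell_\infty$–$\ell_1$ duality give $4d\le\tfrac1{b^2}\lVert\nu^\rho_S-\mu^\rho_S\rVert_1=\tfrac2{b^2}\DTV{\nu^\rho_S}{\mu^\rho_S}$, so $\DTV{\nu^\rho_S}{\mu^\rho_S}\ge 2b^2 d$, and again $\DTV{\nu}{\mu}\ge b^2 d$.

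In both cases $\DTV{\nu}{\mu}\ge b^2\,d_{\mathrm{par}}(\nu,\mu)\ge\tfrac{b^2}{2}\,d_{\mathrm{par}}(\nu,\mu)$ (in fact with a slightly better constant than claimed). The main obstacle is Case 2: one must find an invariant of the two-spin conditional law that isolates the single parameter $J_{u_0v_0}$ from the confounding effective fields — the log cross-ratio does precisely this — and then convert a gap in that invariant into a TV lower bound, which is where the $b$-marginal bound is used to keep all four masses bounded away from $0$. A secondary point needing care is the workhorse conditioning inequality and, in both cases, the \emph{uniformity over $\rho$} of the conditional-law discrepancy, so that averaging over $\rho\sim\mu_{V\setminus S}$ does not dilute it.
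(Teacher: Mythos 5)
The paper cites this statement verbatim as \cite[Lemma 15]{feng2025approximating} and does not reproduce a proof, so there is no in-paper argument against which to compare your attempt directly; I therefore evaluate the proposal on its own terms.

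Your proof is correct and self-contained. The conditioning (``workhorse'') inequality $\DTV{\nu}{\mu}\geq\tfrac12\,\E[\rho\sim\mu_{V\setminus S}]{\DTV{\nu^\rho_S}{\mu^\rho_S}}$ is derived correctly via the triangle inequality and data processing, and it is precisely what is needed to localize the discrepancy. The case split on which term attains $d_{\mathrm{par}}(\nu,\mu)$ is natural: in the field case, conditioning on all neighbours of $v_0$ leaves a single-site Bernoulli law, the reverse triangle inequality with $\lVert J^\nu - J^\mu\rVert_{\max}\le d$ isolates a gap of at least $d$ in the effective field, and $p'(\lambda)=2p(1-p)\ge 2b(1-b)\ge 2b^2$ on the interval (using the $b$-marginal bound applied to both endpoints and monotonicity of $p$) gives the conclusion. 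In the interaction case, the log cross-ratio $\ln\tfrac{P(++)P(--)}{P(+-)P(-+)}=4J_{u_0v_0}$ is exactly the right pinning-invariant quantity to isolate the edge coupling from the confounding effective fields; the chain-rule estimate $\mu^\tau_{u_0}(+)\cdot\mu^{\tau\cup\{u_0\mapsto+\}}_{v_0}(+)\ge b^2$ (two applications of \Cref{def:marginal-lower-bound}) keeps all four masses bounded away from $0$ on the convex region, so the mean value theorem plus $\ell_\infty$--$\ell_1$ duality converts the cross-ratio gap of $4d$ into $\lVert\nu^\rho_S-\mu^\rho_S\rVert_1\ge 4b^2d$. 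In both cases the conditional TV gap is uniform in $\rho$, which is what makes the workhorse inequality effective. You in fact establish the slightly stronger bound $\DTV{\nu}{\mu}\ge b^2\,d_{\mathrm{par}}(\nu,\mu)$, which implies the stated $\tfrac{b^2}{2}d_{\mathrm{par}}(\nu,\mu)$.
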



\begin{lemma}\label{lem:f-lower-bound}
    Suppose $f$ satisfies \Cref{ass:f-assumption}.
    For two Gibbs distributions $\nu,\mu$ of Ising models,
    \begin{align*}
        \dvg{\nu}{\mu}{f} &\geq \max\{ f(1-\DTV{\nu}{\mu}), f(1+\DTV{\nu}{\mu}) \}\\
        &\geq \max \left\{f\tp{1-\frac{b^2}{2}d_{\mathrm{par}}(\nu,\mu)}, f\tp{1+\frac{b^2}{2}d_{\mathrm{par}}(\nu,\mu)} \right\}.
    \end{align*}
\end{lemma}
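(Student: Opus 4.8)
The plan is to reduce the $f$-divergence lower bound to a statement about a one-dimensional projection, and then apply \Cref{lem:tv-lower-bound} together with \Cref{ass:f-assumption}. The key observation is that $f$-divergence satisfies the \emph{data-processing inequality}: if $T$ is any (possibly randomized) map on the sample space, then $\dvg{\nu}{\mu}{f} \geq \dvg{T_\#\nu}{T_\#\mu}{f}$, where $T_\#\nu, T_\#\mu$ denote pushforward measures. I would choose $T$ to be the map that records which of the two ``sides'' of the TV coupling a configuration lies on — concretely, fix the event $A = \{\sigma : \nu(\sigma) \geq \mu(\sigma)\}$ and let $T(\sigma) = \mathbf{1}[\sigma \in A]$. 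Then $T_\#\nu$ and $T_\#\mu$ are two Bernoulli distributions on $\{0,1\}$, and by the definition of total variation distance, $\nu(A) - \mu(A) = \DTV{\nu}{\mu}$, so $T_\#\nu(1) = \mu(A) + \DTV{\nu}{\mu}$ while $T_\#\mu(1) = \mu(A)$ (and symmetrically for the complement).

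Next I would compute the $f$-divergence between these two Bernoulli distributions directly from the definition:
\begin{align*}
\dvg{T_\#\nu}{T_\#\mu}{f} = \mu(A)\, f\!\left(\frac{\mu(A)+\DTV{\nu}{\mu}}{\mu(A)}\right) + (1-\mu(A))\, f\!\left(\frac{1-\mu(A)-\DTV{\nu}{\mu}}{1-\mu(A)}\right).
\end{align*}
Writing $p = \mu(A)$ and $d = \DTV{\nu}{\mu}$, this is $p\,f(1 + d/p) + (1-p)\,f(1 - d/(1-p))$, a convex combination (with the same convex weights) of values of $f$. Because $f$ is convex and $f(1) = 0$, Jensen's inequality alone is not enough — I need the sharper fact that $f$ is \emph{monotone increasing} for arguments above $1$ and \emph{monotone decreasing} for arguments below $1$, which \Cref{ass:f-assumption} provides. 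In particular, since $1 + d/p \geq 1 + d$ (as $p \leq 1$), we get $f(1 + d/p) \geq f(1 + d)$; and since $1 - d/(1-p) \leq 1 - d$, we get $f(1 - d/(1-p)) \geq f(1 - d)$. Substituting, $\dvg{T_\#\nu}{T_\#\mu}{f} \geq p\,f(1+d) + (1-p)\,f(1-d)$. To finish I would note that by choosing the partition the other way around (i.e.\ grouping by $\{\sigma : \nu(\sigma) \leq \mu(\sigma)\}$), or equivalently by the same computation applied to $1-\mu(A)$, one obtains a lower bound that is again a convex combination of $f(1-d)$ and $f(1+d)$, so taking the better of the two partitions and then bounding each convex combination below by its \emph{smaller} summand is wasteful; instead I would argue that for \emph{any} convex combination $\lambda f(1+d) + (1-\lambda) f(1-d)$ with $\lambda \in [0,1]$, by convexity of $f$ around $1$ and $f(1)=0$ we have this is at least $\max\{f(1+d), f(1-d)\}$ only when $\lambda$ is $0$ or $1$ — so this route gives the weaker bound $\min\{f(1-d),f(1+d)\}$. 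To get the $\max$, I would instead apply the data-processing step twice with two different channels: one that outputs $1$ exactly on a set witnessing $\nu(A)-\mu(A)=d$ chosen so that $\mu(A)\to 0^+$ is approached (splitting ties to make $p$ as small as needed), yielding $\dvg{\nu}{\mu}{f}\ge f(1+d)$ in the limit, and the mirror construction yielding $\ge f(1-d)$; more carefully, using that $f(1+d/p)\to\infty$ or stays bounded depending on $f$, the clean argument is that $p\,f(1+d/p)+(1-p)\,f(1-d/(1-p)) \ge f(1+d)$ holds whenever $p\le 1$ by monotonicity applied to the first term and $f\ge 0$ applied to the second, and symmetrically $\ge f(1-d)$ by discarding the first term and applying monotonicity to the second, hence $\dvg{T_\#\nu}{T_\#\mu}{f}\ge \max\{f(1+d),f(1-d)\}$ directly. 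The final step is to substitute the parameter-distance lower bound $d = \DTV{\nu}{\mu} \geq \frac{b^2}{2} d_{\mathrm{par}}(\nu,\mu)$ from \Cref{lem:tv-lower-bound}, using that $x \mapsto f(1+x)$ is increasing and $x \mapsto f(1-x)$ is increasing in $x \geq 0$ (the latter because $f$ decreases as its argument drops below $1$), which preserves the inequality inside each $f$.

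The main obstacle I anticipate is the bookkeeping around the tie set $\{\sigma : \nu(\sigma) = \mu(\sigma)\}$ and making the ``$\max$'' (rather than a weaker symmetric combination) come out cleanly: one must be careful that discarding the nonnegative second (resp.\ first) term of the Bernoulli $f$-divergence and keeping only the monotonicity bound on the surviving term is legitimate for every admissible $p \in (0,1]$, including the degenerate endpoints. A secondary subtlety is verifying the data-processing inequality for $f$-divergences under \Cref{ass:f-assumption} rather than the textbook convexity-plus-$f(1)=0$ hypothesis — but since \Cref{ass:f-assumption} implies $f$ is convex with $f(1)=0$, the standard proof of data-processing (Jensen applied to the conditional expectation defining the channel) goes through verbatim, so I would cite it rather than reprove it.
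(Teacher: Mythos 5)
Your overall approach---data-processing with a binary channel to reduce to a two-point (Bernoulli) problem, then lower-bound the resulting two-term expression---is exactly the paper's strategy. The difficulty is that the key step in your ``clean argument'' is not justified as stated. Writing $p = \mu(A)$, $d = \DTV{\nu}{\mu}$, you need $p\,f(1 + d/p) \ge f(1+d)$. You invoke ``monotonicity applied to the first term,'' but monotonicity of $f$ gives only $f(1+d/p) \ge f(1+d)$, hence $p\,f(1+d/p) \ge p\,f(1+d)$, which is \emph{weaker} than $f(1+d)$ by a factor of $p < 1$. Monotonicity alone cannot recover that factor; this is precisely the place where convexity must be used, not just the sign of $f'$.

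The paper's resolution is cleaner and worth internalizing. In the Bernoulli expression (in their notation $q\,f(p/q) + (1-q)\,f\!\left(\frac{1-p}{1-q}\right)$), one does \emph{not} discard the second term; one replaces $f\!\left(\frac{1-p}{1-q}\right)$ by $f(1) = 0$ (a valid lower bound) \emph{while keeping the weight} $(1-q)$. Then the surviving expression $q\,f(p/q) + (1-q)\,f(1)$ is a genuine two-point convex combination, and Jensen's inequality gives
\begin{align*}
q\,f\!\left(\tfrac{p}{q}\right) + (1-q)\,f(1) \;\ge\; f\!\left(q\cdot\tfrac{p}{q} + (1-q)\cdot 1\right) \;=\; f\bigl(1 + (p - q)\bigr) \;=\; f\bigl(1 + \DTV{\nu}{\mu}\bigr),
\end{align*}
with the mirror replacement yielding $f(1 - \DTV{\nu}{\mu})$. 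Your earlier remark that ``Jensen's inequality alone is not enough'' is therefore backwards---Jensen \emph{is} the right tool, applied after replacing one function value by $f(1)$ rather than discarding the entire summand. (It so happens that the fact you want, $p\,f(1+d/p) \ge f(1+d)$ for $p \le 1$, is true: it follows from the supporting-line inequality $f(y) \le (y-1)f'(y)$ for $y>1$, equivalently from Jensen as above. But your stated derivation ``by monotonicity'' does not establish it, so as written the proof has a gap at its crux.) The final step---pushing the TV bound through the monotonicity of $x \mapsto f(1+x)$ and $x \mapsto f(1-x)$ and invoking \Cref{lem:tv-lower-bound}---is correct and matches the paper.
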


\begin{proof}[Proof of \Cref{lem:chi-alpha-lower-bound}]
    Let $A \triangleq \{x \in \Omega: \nu(x) > \mu(x)\}$.
    Let $p = \Pr[X \sim \nu]{X \in A}$ and $q = \Pr[X \sim \mu]{X \in A}$.
    It is well-known that the total variation distance between $\nu$ and $\mu$ is $\DTV{\nu}{\mu} = p - q$.
    Since we consider the Ising model, $p,q > 0$.
    Consider a Markov kernel $K: \Omega \to \{0,1\}$ such that given any $\sigma \in \Omega$, $K$ deterministically transforms $\sigma$ to $1$ if and only if $\sigma \in A$. Note that $\nu K$ and $\mu K$ are Bernoulli distributions with parameters $p$ and $q$ respectively. We have
    \begin{align*}
        \DTV{\nu}{\mu} = \DTV{\nu K}{\mu K} = p - q.
    \end{align*}
    By using the data processing inequality for $f$-divergence, we have
    \begin{align*}
        \dvg{\nu}{\mu}{f} &\geq \dvg{\nu K}{\mu K}{f}\\
        &= q \cdot f \tp{\frac{p}{q}} + (1-q) \cdot f\tp{\frac{1-p}{1-q}}\\
        \text{$\left(f\left(\frac{1-p}{1-q}\right) \geq f(1) = 0\right)$} \quad &\geq q \cdot f \tp{\frac{p}{q}} + (1-q) \cdot f(1)\\
        \text{(Jensen's inequality on $f$)} \quad &\geq f\tp{1+p-q} = f\tp{1+\DTV{\nu}{\mu}}.
    \end{align*}
    Similarly, we use $f(p/q) \geq f(1)$ to get $\dvg{\nu}{\mu}{f} \geq f\tp{1-\DTV{\nu}{\mu}}$. 

    Note that $f'(x) < 0$ if $0 < x < 1$ and $f'(x) > 0$ if $x > 1$.
    Combining with \Cref{lem:tv-lower-bound}, we have 
    \begin{align*}
        f(1-\DTV{\nu}{\mu}) \geq f\tp{1-\frac{b^2}{2}d_{\mathrm{par}}(\nu,\mu)} \text{ and }
        f(1+\DTV{\nu}{\mu}) \geq f\tp{1+\frac{b^2}{2}d_{\mathrm{par}}(\nu,\mu)}. &\qedhere
    \end{align*}
\end{proof}

\Cref{lem:chi-alpha-lower-bound} for $\chi^\alpha$-divergence can be proved as follows.

\begin{proof}[Proof of \Cref{lem:chi-alpha-lower-bound}]
For $\chi^\alpha$-divergence, we have $f(x) = \frac{1}{2}|x-1|^\alpha$. We can get a slightly better lower bound than the general result in \Cref{lem:f-lower-bound}. Note that $f(1 - x) = f(1 + x)$. Then $f( \frac{1-p}{1-q}) = f( \frac{1+p-2q}{1-q})$. Using the Jensen's inequality on $q \cdot f \tp{\frac{p}{q}} + (1-q) \cdot f( \frac{1+p-2q}{1-q} )$ implies the a lower bound $\dvg{\nu}{\mu}{\chi^\alpha} \geq f(1+2\DTV{\nu}{\mu}) =  2^{\alpha -1} \DTV{\nu}{\mu}^\alpha$. Then lemma follows from \Cref{lem:tv-lower-bound}.
\end{proof}

\subsection{A lower bound for \texorpdfstring{$\chi^\alpha$}{chi-alpha}-divergence}\label{sec:lower-bound-large-parameter-distance}

\begin{proof}[Proof of \Cref{lem:error-bound}]
    Let $0 \leq t < 1$ be a parameter to be fixed later.
    We partition the whole space $\Omega = \{\pm\}^V$ into $M$ and $\overline{M}= \Omega \setminus M$ such that 
    \begin{align*}
        M = \left\{ \sigma \mid \left| \frac{\nu(\sigma)}{\mu(\sigma)} -1 \right| \leq t \right\}.
    \end{align*}
    Then, we bound the contribution of $\sigma \in M$ and $\sigma \in \overline{M}$ separately. By triangle inequality, we have $\frac{\nu(\sigma)}{\mu(\sigma)} + 1 \leq 2 + \left| \frac{\nu(\sigma)}{\mu(\sigma)} - 1 \right|$.
    For the first case, we have 
    \begin{align*}
        \sum_{\sigma \in M} \mu(\sigma) \left( \frac{\nu(\sigma)}{\mu(\sigma)} + 1 \right)^\alpha &\leq \sum_{\sigma \in M} \mu(\sigma) \left( 2 + t \right)^\alpha \leq \left( 2 + t \right)^\alpha.
    \end{align*}
    By our assumption, $d_{\mathrm{par}}(\nu,\mu) \geq \theta$. Using \Cref{lem:chi-alpha-lower-bound}, we have $\dvg{\nu}{\mu}{\chi^\alpha} \geq \frac{b^{2\alpha}}{2} \theta^\alpha$. Hence,
    \begin{align}\label{eq:M}
        \sum_{\sigma \in M} \mu(\sigma) \left( \frac{\nu(\sigma)}{\mu(\sigma)} + 1 \right)^\alpha \leq \frac{2 \cdot \left( 2 + t \right)^\alpha}{b^{2\alpha}\theta^\alpha} \dvg{\nu}{\mu}{\chi^\alpha}.
    \end{align}
    Consider the other set $\overline{M}$. For all $\sigma \in \overline{M}$, it holds that
    \begin{align*}
        \frac{\frac{\nu(\sigma)}{\mu(\sigma)} + 1}{\left| \frac{\nu(\sigma)}{\mu(\sigma)} - 1 \right|} \leq 1 + \frac{2}{\left| \frac{\nu(\sigma)}{\mu(\sigma)} - 1 \right|} \leq 1+\frac{2}{t}.
    \end{align*}
    Summing over all $\sigma \in \overline{M}$, we have
    \begin{align}\label{eq:M-bar}
        \sum\limits_{\sigma \in \overline{M}} \mu(\sigma) \left( \frac{\nu(\sigma)}{\mu(\sigma)} + 1 \right)^\alpha \leq \left( 1+\frac{2}{t} \right)^\alpha \sum\limits_{\sigma \in \overline{M}} \mu(\sigma) \left| \frac{\nu(\sigma)}{\mu(\sigma)} - 1 \right|^\alpha
        \leq 2 \left( 1+\frac{2}{t} \right)^\alpha \dvg{\nu}{\mu}{\chi^\alpha}.
    \end{align}

    Finally, adding the two inequalities~\eqref{eq:M} and~\eqref{eq:M-bar}, we have
    \begin{align}\label{eq:M-bar-bar}
        \sum\limits_{\sigma \in \Omega} \mu(\sigma) \left( \frac{\nu(\sigma)}{\mu(\sigma)} +1 \right) \leq \tp{\frac{2 \cdot \left( 2 + t \right)^\alpha}{b^{2\alpha}\theta^\alpha} + 2\left( 1+\frac{2}{t} \right)^\alpha} \dvg{\nu}{\mu}{\chi^\alpha} = g(t) \cdot \dvg{\nu}{\mu}{\chi^\alpha}.
    \end{align}
    The above inequality holds for any $0 \leq t \leq 1$. 
    We next find a value of $t$ to minimize the value of $g(t)$.
    Take the derivative of $g$, we get 
    \begin{align*}
        g'(t) = 2\alpha \left( 1+\frac{2}{t} \right)^{\alpha-1} \left( -\frac{2}{t^2} \right) + \frac{2\alpha}{b^{2\alpha }\theta^\alpha} (t+2)^{\alpha - 1} = 2\alpha (t+2)^{\alpha-1} \left( \frac{1}{b^{2\alpha}\theta^\alpha} - \frac{2}{t^{\alpha+1}}\right).
    \end{align*}    
    Thus when $t = \left( \frac{b^{2\alpha} \theta^\alpha}{2} \right)^{1/(\alpha+1)}$, where $ 0 < t < 1 $,the function $g$ obtains its minimum value:
    \begin{align*}
        g\left(t\right) &= \frac{2 \cdot \left( 2 + t \right)^\alpha}{b^{2\alpha}\theta^\alpha} + 2\left( 1+\frac{2}{t} \right)^\alpha
        = \frac{\left( 2 + t \right)^\alpha}{t^{\alpha+1}} + 2\left( 1+\frac{2}{t} \right)^\alpha\\
        &= \left( 1 + \frac{2}{t} \right)^\alpha \left( 2+ \frac{1}{t} \right) = \frac{(t+2)^\alpha(2t+1)}{t^{\alpha+1}}\\ 
        &= \frac{2}{b^{2\alpha} \theta^\alpha} \left( \left( \frac{b^{2\alpha} \theta^\alpha}{2} \right)^{1/(\alpha+1)} + 2 \right)^\alpha \left( 2 \left( \frac{b^{2\alpha} \theta^\alpha}{2} \right)^{1/(\alpha+1)} + 1 \right) = \frac{1}{B_{\alpha,b}(\theta)}.
    \end{align*}
    Combining the above equation with~\eqref{eq:M-bar-bar} implies that $ \sum_{\sigma \in \Omega} \mu(\sigma) \left( \frac{\nu(\sigma)}{\mu(\sigma)} +1 \right) \leq \frac{\dvg{\nu}{\mu}{\chi^\alpha}}{B_{\alpha,b}(\theta)}$. This proves the lower bound of the $\chi^\alpha$-divergence.
\end{proof}

\section{Algorithm for \texorpdfstring{$f$}{f}-divergence with small parameters distance}\label{sec:alg-abs}

In this section, we generalize the algorithm in \cite{feng2025approximating} to the case of small parameter distance. The new algorithm works for general $f$-divergence, where $f$ satisfies the following abstract condition.

\begin{condition}\label{cond:general-small-parameter-distance}
    Let $f$ be a function satisfying \Cref{ass:f-assumption}. There exists an function $F: \mathbb{R}^+ \to \mathbb{R}^+$ with $F(\zeta) \leq \mathrm{poly}(\zeta)$ such that for any $\zeta \geq 1$, any $x\in (-\frac{1}{2\zeta}, 0) \cup (0, \frac{1}{2\zeta})$, 
    \begin{align*}
        \frac{xf'(1 + \zeta x)}{f(1 + x)} \leq F\tp{\zeta}.
    \end{align*} 
\end{condition}

\begin{theorem}\label{lem:general-small-parameter-distance}
    Let $f$ be a function satisfying \Cref{cond:general-small-parameter-distance} with function $F$.
    There exists an algorithm such that given two Ising models $(G, J^\nu, h^\nu)$ and $(G, J^\mu, h^\mu)$, any $0 < \varepsilon < 1$, and $f,b,F$, if $\nu$ and $\mu$ are $b$-marginally bounded, $d_{\mathrm{par}}(\nu, \mu) < \theta = \frac{1}{10(n+3m)}$, and $\mu$ admits sampling oracle with cost functions $\oracle^{\mathrm{sp}}_G(\cdot)$, 
    then it returns a random number $\hat{D}$ in time $O(T \cdot \oracle^{\mathrm{sp}}_G(\frac{1}{100T}))$, where $T = O(\frac{F(8(n+3m)/b^2)^2(n+m)^2}{\varepsilon^2})$, $n = |V|$, $m = |E|$, such that 
    \begin{align*}
        \Pr{e^{-\varepsilon}\dvg{\nu}{\mu}{f} \leq \hat{D} \leq e^\varepsilon \dvg{\nu}{\mu}{f}} \geq \frac{2}{3}.
    \end{align*}
\end{theorem}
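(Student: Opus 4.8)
The plan is to analyze the estimator from \Cref{sec:small-parameter-distance}: draw $T$ samples $X_1,\dots,X_T$ from $\mu$ using the sampling oracle with error $\tfrac1{100T}$ each, compute $W_i\triangleq w_\nu(X_i)/w_\mu(X_i)=\exp\!\bigl(\tfrac12 X_i^{\top}(J^\nu-J^\mu)X_i+(h^\nu-h^\mu)^{\top}X_i\bigr)$, set $\bar W\triangleq\tfrac1T\sum_{i=1}^T W_i$, and output $\hat D\triangleq\tfrac1T\sum_{i=1}^T f(W_i/\bar W)$ (with negligible rounding to evaluate $\exp(\cdot)$ and $f(\cdot)$). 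First I would collect a few consequences of $d_{\mathrm{par}}(\nu,\mu)<\theta=\tfrac1{10(n+3m)}$. Since $J$ is supported on $E$ and the external-field term of $d_{\mathrm{par}}$ is divided by $\deg(v)+1$, for every $\sigma$ the exponent above has absolute value at most $c\triangleq(n+3m)\,d_{\mathrm{par}}(\nu,\mu)<\tfrac1{10}$; hence every $W_i$ and also $\E[X\sim\mu]{W(X)}=Z_\nu/Z_\mu$ lie in $[\e^{-c},\e^{c}]$, so $\tfrac{\nu(\sigma)}{\mu(\sigma)}=\tfrac{W(\sigma)}{\E[X\sim\mu]{W(X)}}\in[\e^{-2c},\e^{2c}]$, $\rho\triangleq\sup_\sigma\bigl|\tfrac{\nu(\sigma)}{\mu(\sigma)}-1\bigr|\le\e^{2c}-1$, and $\Var[X\sim\mu]{W(X)}\le\tfrac14(\e^{c}-\e^{-c})^2$. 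Put $x_0\triangleq\tfrac{b^2}2 d_{\mathrm{par}}(\nu,\mu)$ and $\zeta\triangleq\tfrac{8(n+3m)}{b^2}\ge1$; the choice of $\theta$ makes $\rho\le\zeta x_0$, $x_0<\tfrac1{2\zeta}$, and $\zeta x_0<1$, so \Cref{cond:general-small-parameter-distance} applies with this $\zeta$.

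I would then split the error using the analysis-only proxy $\hat D_0\triangleq\tfrac1T\sum_{i=1}^T f\!\bigl(\tfrac{\nu(X_i)}{\mu(X_i)}\bigr)$, which is an \emph{unbiased} estimator of $\dvg{\nu}{\mu}{f}$ since $X_i\sim\mu$. For the concentration of $\hat D_0$: each summand lies in $[0,f_{\max}]$ with $f_{\max}\triangleq\max\{f(1+\rho),f(1-\rho)\}$, and convexity gives $f(1\pm\zeta x_0)\le f(1\pm x_0)+\zeta x_0\,|f'(1\pm\zeta x_0)|$, which by \Cref{cond:general-small-parameter-distance} is at most $(1+\zeta F(\zeta))f(1\pm x_0)$; combined with the lower bound $\dvg{\nu}{\mu}{f}\ge\max\{f(1-x_0),f(1+x_0)\}$ of \Cref{lem:f-lower-bound} (and $\rho\le\zeta x_0$), this yields $f_{\max}\le(1+\zeta F(\zeta))\,\dvg{\nu}{\mu}{f}$. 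Hence each summand has variance at most $f_{\max}\cdot\dvg{\nu}{\mu}{f}\le(1+\zeta F(\zeta))\,\dvg{\nu}{\mu}{f}^2$, and Chebyshev's inequality shows $|\hat D_0-\dvg{\nu}{\mu}{f}|\le\tfrac{\varepsilon}4\dvg{\nu}{\mu}{f}$ with probability $\ge\tfrac{99}{100}$ as soon as $T=\Omega((1+\zeta F(\zeta))/\varepsilon^2)$.

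For the remaining term $\hat D-\hat D_0$, I would first bound the multiplicative error of $\bar W$: by Chebyshev, with probability $\ge\tfrac{99}{100}$ the quantity $\delta\triangleq\bar W/\E[X\sim\mu]{W(X)}-1$ satisfies $|\delta|\le O(c/\sqrt T)$, which for $T$ larger than a small absolute constant forces $\rho+2|\delta|\le\zeta x_0$. On this event, writing $\tfrac{W_i}{\bar W}=1+z_i$ with $z_i=\tfrac{x_i-\delta}{1+\delta}$ and $x_i=\tfrac{\nu(X_i)}{\mu(X_i)}-1$, one checks $|z_i-x_i|\le2|\delta|$ and $\max\{|x_i|,|z_i|\}\le\zeta x_0$. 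The elementary inequality $|f(a)-f(b)|\le|a-b|\cdot\max\{|f'(a)|,|f'(b)|\}$ for convex $f$ with minimum at $1$ (obtained from the mean value theorem on each side of $1$, splitting at $1$ and using $f(1)=0$ when $a,b$ straddle $1$), together with the monotonicity of $f'$ and \Cref{cond:general-small-parameter-distance} in the form $|f'(1\pm\zeta x_0)|\le F(\zeta)f(1\pm x_0)/x_0\le F(\zeta)\dvg{\nu}{\mu}{f}/x_0$, gives $|f(1+z_i)-f(1+x_i)|\le2|\delta|\cdot F(\zeta)\dvg{\nu}{\mu}{f}/x_0$ for every $i$. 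Averaging and using $c/x_0=2(n+3m)/b^2=\zeta/4$, this makes $|\hat D-\hat D_0|\le\tfrac{\varepsilon}4\dvg{\nu}{\mu}{f}$ provided $T=\Omega(\zeta^2F(\zeta)^2/\varepsilon^2)=\Omega((n+m)^2F(8(n+3m)/b^2)^2/\varepsilon^2)$. Finally, a union bound over the event that the $T$ oracle samples couple with exact i.i.d.\ $\mu$-samples (failure $\le T\cdot\tfrac1{100T}=\tfrac1{100}$), the $\bar W$-event, and the $\hat D_0$-concentration event gives overall success probability $\ge1-\tfrac3{100}\ge\tfrac23$, on which $|\hat D-\dvg{\nu}{\mu}{f}|\le\tfrac{\varepsilon}2\dvg{\nu}{\mu}{f}$; since $0<\varepsilon<1$ this implies $\e^{-\varepsilon}\dvg{\nu}{\mu}{f}\le\hat D\le\e^{\varepsilon}\dvg{\nu}{\mu}{f}$. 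The running time is $O(T\cdot\oracle^{\mathrm{sp}}_G(\tfrac1{100T}))$ after absorbing the $O(n+m)$ arithmetic per sample.

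The main obstacle is the estimate of $\hat D-\hat D_0$: one has to push the multiplicative error $\delta$ of the estimator $\bar W$ for $Z_\nu/Z_\mu$ through the nonlinear function $f$, and the only structural handle on $f$ is \Cref{cond:general-small-parameter-distance}. The delicate points are choosing the right argument $\zeta$ to feed into that condition, verifying that $1+\zeta x$ stays inside the window where the condition is asserted---exactly what $d_{\mathrm{par}}(\nu,\mu)<\theta=\tfrac1{10(n+3m)}$ buys, via $\rho\le\zeta x_0$ and $x_0<\tfrac1{2\zeta}$---handling the two signs of $W_i/\bar W-1$ symmetrically, and keeping the chain of small-constant estimates ($|\delta|$, $\rho$, $\zeta x_0$) consistent so that $F$ is never evaluated beyond an argument of size $O((n+m)/b^2)$. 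A secondary subtlety, absent for the TV-distance, is that $\hat D$ is genuinely a biased estimator of $\dvg{\nu}{\mu}{f}$; this is why the argument must route through the unbiased proxy $\hat D_0$ rather than applying concentration to $\hat D$ directly.
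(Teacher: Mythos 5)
Your proposal is correct and follows essentially the same route as the paper's proof: the same estimator, the same analysis-only proxy (your $\hat D_0$ is exactly the paper's $\tilde D$, since $\nu(X_i)/\mu(X_i)=W_i/\E[]{W}$), the same two-term error decomposition (concentration of the proxy around $\dvg{\nu}{\mu}{f}$, then comparison of $\hat D$ with the proxy by controlling the error of $\bar W$), the same Chebyshev arguments, the same application of \Cref{lem:f-lower-bound}, and the same instantiation of \Cref{cond:general-small-parameter-distance} with $\zeta = 8(n+3m)/b^2$. The only differences are cosmetic bookkeeping: you track the multiplicative error $\delta=\bar W/\E[]{W}-1$ while the paper tracks the additive error $|\bar W - \E[]{W}|$, and in the proxy-concentration step you use $\E[]{Y^2}\le f_{\max}\E[]{Y}$ to get a slightly sharper variance bound $(1+\zeta F(\zeta))D^2$ where the paper crudely squares $f_{\max}$; neither change affects the final $T=\Theta(\zeta^2F(\zeta)^2/\varepsilon^2)$, which is driven by the $\hat D - \hat D_0$ term in both arguments.
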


\begin{remark}
In \Cref{lem:general-small-parameter-distance}, the algorithm for small parameter case only requires sampling oracles for the input Ising model $(G, J^\mu, h^\mu)$ instead of sampling and approximate counting oracles for the whole family of Ising models $\+F$.
\end{remark}

By the definitions of $f$-divergence and Ising model, we have
    \begin{align*}
        \dvg{\nu}{\mu}{f} = \E[\mu]{f\left( \frac{\nu(\sigma)}{\mu(\sigma)} \right)} = \E[\mu]{f\left( \frac{w_\nu(\sigma)}{w_\mu(\sigma)} \cdot \frac{Z_\mu}{Z_\nu} \right)}.
    \end{align*}    
    Define the parameter 
    \[T \triangleq \left\lceil \frac{2^{12} \cdot 10^3 F(8(n+3m)/b^2)^2 (n+3m)^2}{b^4\varepsilon^2} \right\rceil.\]

    \begin{tcolorbox}[
        title = Algorithm for small parameter distance case, 
        colback=white,
        colframe=black,
        coltitle=black,
        colbacktitle=gray!20,
        fonttitle=\bfseries,
        boxrule=1pt,
        arc=3pt
    ] 
        \begin{itemize}[leftmargin=0.3cm]
            \item Call the sampling oracle of $\mu$ with TV-distance error $\frac{1}{100T}$ to obtain independent random samples $\hat\sigma_1,\hat\sigma_2,\ldots,\hat\sigma_T$. For each $i \in [T]$, compute $\hat{W}_i = \frac{w_\nu(\hat\sigma_i)}{w_\mu(\hat\sigma_i)}$.
            \item Return $\hat{D} = \frac{1}{T} \sum\limits_{i=1}^T f\left( {\hat{W}_i}/{\bar{W}} \right)$, where $\bar{W} = \frac{1}{T} \sum\limits_{i=1}^T \hat{W_i}$.
        \end{itemize}
    \end{tcolorbox}

    To analyze the algorithm, we define the random variable $W \triangleq \frac{w_\nu(X)}{w_\mu(X)}$, where $X \sim \mu$. By the definition of $W$, it is easy to verify that
    \begin{align}\label{eq:general-R-definition}
    \dvg{\nu}{\mu}{f} = \E[\mu]{f\left( \frac{w_\nu(\sigma)}{w_\mu(\sigma)} \cdot \frac{Z_\mu}{Z_\nu} \right)} = \E[]{f\left( \frac{W}{\E[]{W}}  \right)}, \quad \text{where } \E[]{W} = \frac{Z_\nu}{Z_\mu}
    \end{align}
    and we denote the $\mathbf{E}_{\mu}[f( \frac{W}{\E[]{W}})]$ as $D$ to simplify the notation.
    
    The following lemma controls the range and the variance of $W$.
    \begin{lemma}[\text{\cite{feng2025approximating}}]\label{lem:bound-for-W}
    Suppose $d = d_{\mathrm{par}}(\nu, \mu) < \frac{1}{10(n+3m)}$.
    For any $\sigma \in \Omega = \{\pm\}^V$,
    \begin{align*}
        1 -  2(n+3m) d \leq \exp\tp{- (n+3m) d }\leq \frac{w_\nu(\sigma)}{w_\mu(\sigma)} \leq \exp\tp{ (n+3m)d} \leq 1+(n+3m)d.
    \end{align*}
    As a consequence, $\Var[]{W} \leq (3n+9m)^2d^2$.
    \end{lemma}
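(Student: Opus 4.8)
The plan is to prove the pointwise bounds on $w_\nu(\sigma)/w_\mu(\sigma)$ first; the variance bound will then drop out. Writing out the two Ising weight functions, their ratio is exactly the exponential of the difference of the two Hamiltonians,
\begin{align*}
    \frac{w_\nu(\sigma)}{w_\mu(\sigma)} = \exp\tp{\tfrac{1}{2}\sigma^T\bigl(J^\nu - J^\mu\bigr)\sigma + \bigl(h^\nu - h^\mu\bigr)^T\sigma},
\end{align*}
so the entire task reduces to bounding this exponent by $(n+3m)d$ in absolute value, where $d = d_{\mathrm{par}}(\nu,\mu)$.

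I would split the exponent into its quadratic and linear parts and bound each using the definition of $d_{\mathrm{par}}$. Since $J^\nu - J^\mu$ is symmetric with zero diagonal and $\sigma_u\sigma_v\in\{\pm1\}$, the quadratic form collapses to a sum over the $m$ edges: $\bigl|\tfrac12\sigma^T(J^\nu-J^\mu)\sigma\bigr| \le \sum_{\{u,v\}\in E}|J^\nu_{uv}-J^\mu_{uv}| \le m\,\lVert J^\nu - J^\mu\rVert_{\max} \le m d$. For the field part, $|(h^\nu-h^\mu)^T\sigma| \le \sum_{v\in V}|h^\nu(v)-h^\mu(v)| \le d\sum_{v\in V}(\deg(v)+1) = d(n+2m)$, where the middle step is the definition of $d_{\mathrm{par}}$ and the last is the handshake identity $\sum_v\deg(v)=2m$. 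Adding the two estimates gives $(n+3m)d$, so that $e^{-(n+3m)d} \le w_\nu(\sigma)/w_\mu(\sigma) \le e^{(n+3m)d}$. The displayed linear bounds then follow from elementary estimates on $e^{\pm t}$ for $0\le t\le 1$ — namely $1-2t\le 1-t\le e^{-t}$ and $e^t\le 1+2t$ — applied with $t=(n+3m)d<1/10$, which under this hypothesis is tightened to the precise constants stated.

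For the variance, a one-line computation gives $\E[]{W} = \sum_\sigma\mu(\sigma)\frac{w_\nu(\sigma)}{w_\mu(\sigma)} = \frac{1}{Z_\mu}\sum_\sigma w_\nu(\sigma) = \frac{Z_\nu}{Z_\mu}$, so $\E[]{W}$ lies in the same interval $[e^{-(n+3m)d},e^{(n+3m)d}]$ that contains every value of $W$. Hence $|W-\E[]{W}|$ is at most the length of that interval, $e^{(n+3m)d}-e^{-(n+3m)d} = 2\sinh\tp{(n+3m)d} \le 3(n+3m)d = (3n+9m)d$, the last inequality because $(n+3m)d<1/10$. Therefore $\Var[]{W} = \E[]{\bigl(W-\E[]{W}\bigr)^2} \le (3n+9m)^2d^2$.

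There is no real obstacle: the proof is entirely elementary, resting on the triangle inequality for the Hamiltonian difference plus a Taylor/convexity estimate on the exponential. The only delicate point is constant-chasing — making sure the normalization $\deg(v)+1$ in $d_{\mathrm{par}}$ is exactly absorbed by $\sum_v(\deg(v)+1)=n+2m$ so that the interaction and field contributions combine into precisely $(n+3m)d$, and picking the exponential estimates tightly enough (using $(n+3m)d<1/10$) to land on the stated linear inequalities and on the constant $(3n+9m)^2$ in the variance bound.
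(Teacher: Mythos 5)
Your proof is correct in its essentials and is exactly the "verification from the definition of the parameter distance" that the paper points to (the paper itself only cites \cite{feng2025approximating} for this lemma rather than reproving it). The decomposition of $\ln\bigl(w_\nu(\sigma)/w_\mu(\sigma)\bigr)$ into the quadratic and linear parts, the edge-sum bound $\bigl|\tfrac12\sigma^T(J^\nu-J^\mu)\sigma\bigr|\le m\,d$, the handshake identity giving $\sum_v(\deg(v)+1)=n+2m$ for the field part, and the resulting $\exp\bigl(\pm(n+3m)d\bigr)$ bounds are all right. The observation that $\E[]{W}=Z_\nu/Z_\mu$ lies in the same interval, so that $|W-\E[]{W}|\le e^{(n+3m)d}-e^{-(n+3m)d}\le 3(n+3m)d$, cleanly delivers $\Var[]{W}\le (3n+9m)^2d^2$.

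One point to fix: the right-most displayed inequality in the lemma, $\exp\tp{(n+3m)d}\le 1+(n+3m)d$, is false for any $d>0$ since $e^t>1+t$ whenever $t>0$; it should read $1+2(n+3m)d$, matching the asymmetric factor of $2$ on the left. You correctly derive $e^t\le 1+2t$ for $t<1/10$, but then assert this "is tightened to the precise constants stated" — that tightening is impossible, and you should instead flag the $1+(n+3m)d$ as a typo (it is inherited from the source; nothing downstream in this paper uses the linear form, only the exponential bounds and the variance). The rest of your argument, including the variance bound, is unaffected.
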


    The above lemma can be verified easily by the definition of the parameter distance.
    For a formal proof, one can refer to \cite[Proof of Lemma 40]{feng2025approximating}\footnote{Proof of Lemma 37 in the arXiv version of the paper.}.

    To prove the correctness of the algorithm, we define a modified algorithm, which is only used in the analysis.
    Consider a modified algorithm that draw perfect independent samples $W_1,\ldots,W_T$ of $W$ instead of approximate samples $\hat{W}_1,\ldots,\hat{W}_T$. Let $\bar{W} = \frac{1}{T} \sum_{i=1}^T W_i$ and $\hat{D} = \frac{1}{T} \sum_{i=1}^T f\tp{{W_i}/{\bar{W}}}$ be the random variables computed in the same way as the algorithm but using the perfect samples. 
    Define the following random variable $\tilde{D}$ such that 
    \begin{align*}
        \tilde{D} = \frac{1}{T} \sum_{i=1}^T f\tp{\frac{W_i}{\E{W}}}.
    \end{align*}
    The algorithm does not know the exact value of $\E[]{W}$, so that it cannot compute $\tilde{D}$. However, we only use the random variable $\tilde{D}$ in the analysis. By~\eqref{eq:general-R-definition}, it is easy to verify that $\mathbf{E}[{\tilde{D}}] =D$. The following lemma shows the concentration of $\tilde{D}$.
    \begin{lemma}\label{lem:concentration-for-tilde-R}
    $\Pr[]{\vert \tilde{D} - D \vert \geq \frac{\varepsilon}{20} \cdot D } \leq 0.01$.
    \end{lemma}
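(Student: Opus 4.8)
The plan is to prove \Cref{lem:concentration-for-tilde-R} by a second-moment (Chebyshev) argument. Write $Y_i \triangleq f\tp{W_i/\E[]{W}}$, so $\tilde{D} = \frac1T\sum_{i=1}^T Y_i$ is an average of $T$ i.i.d.\ nonnegative random variables with $\E[]{Y_i} = D$ by~\eqref{eq:general-R-definition}. Then $\Var[]{\tilde D} = \Var[]{Y_1}/T \le \E[]{Y_1^2}/T$, and Chebyshev gives $\Pr[]{\abs{\tilde D - D} \ge \frac{\varepsilon}{20}D} \le \frac{400\,\E[]{Y_1^2}}{\varepsilon^2 T D^2}$. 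So it suffices to establish the pointwise bound $Y_1 \le C\cdot D$ almost surely, where $C \triangleq \frac{8(n+3m)}{b^2}\,F\tp{\frac{8(n+3m)}{b^2}}$: then $\E[]{Y_1^2}\le (CD)^2$, and since the stated value satisfies $T \ge \frac{2^{12}\cdot 10^3 F(8(n+3m)/b^2)^2(n+3m)^2}{b^4\varepsilon^2} = 64000\,C^2/\varepsilon^2$, the right-hand side is at most $\frac{400}{64000} = \frac{1}{160} \le \frac{1}{100}$.

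The heart of the matter is thus the pointwise estimate. Fix a sample and set $x \triangleq W_1/\E[]{W} - 1$. By \Cref{lem:bound-for-W}, every ratio $w_\nu(\sigma)/w_\mu(\sigma)$ lies in $[\e^{-(n+3m)d},\e^{(n+3m)d}]$ with $d = d_{\mathrm{par}}(\nu,\mu)$, hence so does $\E[]{W} = Z_\nu/Z_\mu$, and therefore $W_1/\E[]{W}\in[\e^{-2(n+3m)d},\e^{2(n+3m)d}]$; using $d < \frac{1}{10(n+3m)}$ this gives $\abs{x}\le 4(n+3m)d < \frac12$. If $x = 0$ then $Y_1 = f(1) = 0$ and we are done, so assume $x\neq 0$. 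Set $\zeta \triangleq \frac{8(n+3m)}{b^2} \ge 8 \ge 1$; then $\abs{x/\zeta}\le\frac{b^2 d}{2}$, and $\abs{x}<\frac12$ forces $\abs{x/\zeta}<\frac{1}{2\zeta}$, so $x/\zeta$ (and every point between $0$ and $x/\zeta$) lies in the domain where \Cref{cond:general-small-parameter-distance} applies.

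The key step rewrites $f(1+x)$ via the fundamental theorem of calculus (valid since a convex $f$ is locally absolutely continuous, the only non-smooth point $1$ having measure zero) and the substitution $s = \zeta u$:
\begin{align*}
    f(1+x) = \int_0^{x} f'(1+s)\,\mathrm{d}s = \zeta\int_0^{x/\zeta} f'(1+\zeta u)\,\mathrm{d}u .
\end{align*}
Applying \Cref{cond:general-small-parameter-distance} inside the integral bounds $f'(1+\zeta u)$ by $F(\zeta)f(1+u)/u$ (with sign bookkeeping on the branch $x < 0$, where $f'<0$ on $(0,1)$ is used). Then the fact that $u\mapsto f(1+u)/\abs{u}$ is monotone on each side of $0$ — a secant-slope consequence of convexity of $f$ together with $f(1)=0$ — yields $\int_0^{x/\zeta} \frac{f(1+u)}{u}\,\mathrm{d}u \le f(1+x/\zeta)$, hence $f(1+x) \le \zeta F(\zeta)\,f(1+x/\zeta)$. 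Finally, since $\abs{x/\zeta}\le\frac{b^2}{2}d_{\mathrm{par}}(\nu,\mu)$ and $f$ is monotone away from $1$, \Cref{lem:f-lower-bound} gives $f(1+x/\zeta)\le\max\{f(1-\tfrac{b^2}{2}d_{\mathrm{par}}(\nu,\mu)),\,f(1+\tfrac{b^2}{2}d_{\mathrm{par}}(\nu,\mu))\}\le\dvg{\nu}{\mu}{f}=D$. Combining, $Y_1 = f(1+x)\le\zeta F(\zeta)D = CD$, which is exactly the pointwise bound required above.

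I expect the main obstacle to be this key step: correctly transferring the derivative from the ``stretched'' point $1+\zeta u$ to the ``small'' point $1+u$ through \Cref{cond:general-small-parameter-distance}, tracking signs on the negative branch $x<0$, and justifying the monotonicity of $u\mapsto f(1+u)/\abs{u}$ (which is where convexity and $f(1)=0$ genuinely enter). Everything else — the range bound from \Cref{lem:bound-for-W}, the reduction to a pointwise estimate, and the concluding Chebyshev computation with the stated $T$ — is routine.
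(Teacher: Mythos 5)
Your proposal is correct, and its overall skeleton—Chebyshev on $\tilde D = \frac1T\sum Y_i$, range bound from \Cref{lem:bound-for-W}, the tilt from \Cref{cond:general-small-parameter-distance} with $\zeta = 8(n+3m)/b^2$, then \Cref{lem:f-lower-bound} to convert back to $D$—matches the paper's. Where you diverge is the ``key step'' that converts $f$ at the stretched point to $f$ near $1$. The paper applies the mean value theorem once ($f(1+y) \le yf'(1+y)$ for $y>0$, using $f''\ge 0$), plugs in $y = 4(n+3m)d$, and then invokes \Cref{cond:general-small-parameter-distance} at the single point $x = \tfrac{b^2}{2}d$; this bounds $f$ at the extremal point $1+4(n+3m)d$, after which the variance is bounded by the square of that max. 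You instead write $f(1+x) = \zeta\int_0^{x/\zeta} f'(1+\zeta u)\,\-d u$, apply the condition pointwise under the integral, and use monotonicity of the secant slope $u\mapsto f(1+u)/|u|$ to integrate it out. This yields the same functional inequality $f(1+x)\le \zeta F(\zeta) f(1+x/\zeta)$, but now valid pointwise for every $x$ in the admissible range; that lets you deduce the almost-sure bound $Y_1\le CD$ directly and handle both signs of $x$ in one pass, whereas the paper handles the sign dichotomy by saying ``the other case is symmetric.'' The price is a slightly longer argument (you need the secant-slope monotonicity, which the paper never invokes); in fact one can shortcut your key step by MVT as well — $f(1+x)\le xf'(1+x)$ and then the condition at $x/\zeta$ gives $xf'(1+x)\le \zeta F(\zeta)f(1+x/\zeta)$ — recovering your pointwise bound without integration. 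Either way, the arithmetic with $T = \lceil 2^{12}\cdot 10^3 F(\zeta)^2(n+3m)^2/(b^4\varepsilon^2)\rceil$ and $C = \zeta F(\zeta)$ checks out: $400C^2/(\varepsilon^2 T)\le 400/64000 < 0.01$.
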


    Next, since $\hat{D}$ and $\tilde{D}$ are both defined by the same set of random variables $W_1,W_2,\ldots,W_T$, they form a natural coupling. We show the following result in the coupling.
    \begin{lemma}\label{lem:coupling-for-hat-and-tilde-R}
    $\Pr[]{\vert \hat{D} - \tilde{D} \vert \geq \frac{\varepsilon}{20} \cdot D } \leq 0.01$.
    \end{lemma}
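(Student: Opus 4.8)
The plan is to exploit that $\hat D$ and $\tilde D$ are built from the \emph{same} samples $W_1,\dots,W_T$ and differ only by whether one normalizes by the empirical mean $\bar W$ or by $\E[]{W}=Z_\nu/Z_\mu$; since \Cref{lem:bound-for-W} forces $W$, and hence $\bar W$, to be extremely concentrated around $\E[]{W}$, a single mean value estimate on $f$ should transfer this closeness to $|\hat D-\tilde D|$. The degenerate case $d=d_{\mathrm{par}}(\nu,\mu)=0$ is immediate, since then $W_i\equiv\bar W\equiv\E[]{W}=1$ and $\hat D=\tilde D=0$, so assume $d>0$. By Chebyshev's inequality together with $\Var[]{\bar W}=\Var[]{W}/T\le(3n+9m)^2d^2/T$ (\Cref{lem:bound-for-W}), on an event $\+E$ of probability at least $0.99$ one has $|\bar W-\E[]{W}|=O\!\big(\tfrac{(n+3m)d}{\sqrt T}\big)$; and since $\E[]{W}\in[\mathrm{e}^{-(n+3m)d},\mathrm{e}^{(n+3m)d}]\subseteq[\tfrac12,2]$ by \Cref{lem:bound-for-W}, this gives $\delta:=\bar W/\E[]{W}-1$ with $|\delta|=O\!\big(\tfrac{(n+3m)d}{\sqrt T}\big)$. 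It then suffices to establish, on $\+E$, the deterministic bound $|\hat D-\tilde D|\le\tfrac{\varepsilon}{20}D$.

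For that, write $a_i:=W_i/\E[]{W}$ and $b_i:=W_i/\bar W=a_i/(1+\delta)$. By \Cref{lem:bound-for-W} each $a_i$ lies in $[\mathrm{e}^{-2(n+3m)d},\mathrm{e}^{2(n+3m)d}]$, and since $|\delta|$ is tiny the whole segment between $a_i$ and $b_i$ lies in the fixed window $\{t:|t-1|\le S\}$ with $S=O((n+3m)d)<\tfrac14$, independently of $i$. Hence the mean value theorem gives $|f(b_i)-f(a_i)|\le|b_i-a_i|\cdot M^*$, where $M^*:=\sup_{|t-1|\le S}|f'(t)|$ does not depend on $i$. Averaging over $i$ and using the exact identity $\tfrac1T\sum_i|b_i-a_i|=\tfrac1T\sum_i W_i\cdot\tfrac{|\E[]{W}-\bar W|}{\bar W\,\E[]{W}}=\tfrac{|\E[]{W}-\bar W|}{\E[]{W}}=|\delta|$ (which uses $\tfrac1T\sum_iW_i=\bar W$), we obtain the clean estimate
\begin{align*}
 |\hat D-\tilde D|\;\le\;\frac1T\sum_{i=1}^T|f(b_i)-f(a_i)|\;\le\;M^*\,|\delta| .
\end{align*}

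It remains to bound $M^*$ in terms of $D$. First, since $f$ is convex with $f'>0$ on $(1,\infty)$ and $f'<0$ on $(0,1)$ (\Cref{ass:f-assumption}), a one-line argument comparing $f'(1\pm S)$ with divided differences of $f$ over $[1\pm S,\,1\pm2S]$ gives $M^*\le\max\{f(1-2S),\,f(1+2S)\}/S$, using nothing else about $f$. Next I would invoke \Cref{cond:general-small-parameter-distance} in its integrated form $f(1\pm\lambda u)\le\lambda F(\lambda)f(1\pm u)$ (valid for $\lambda\ge1$ with $\lambda u\le\tfrac12$, obtained by integrating $f'$ and applying the condition with $\zeta\in[1,\lambda]$), applied with $u=\tfrac{b^2}{2}d$ and $\lambda=\tfrac{2S}{(b^2/2)d}=O\!\big(\tfrac{n+3m}{b^2}\big)$ (at most $\tfrac{8(n+3m)}{b^2}$ for the threshold $\theta$ in the statement), to conclude $\max\{f(1\pm2S)\}\le\lambda F\!\big(\tfrac{8(n+3m)}{b^2}\big)\max\{f(1-\tfrac{b^2}{2}d),\,f(1+\tfrac{b^2}{2}d)\}\le\lambda F\!\big(\tfrac{8(n+3m)}{b^2}\big)D$, the last step being \Cref{lem:f-lower-bound}. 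Since $\lambda/S=4/(b^2d)$ this yields $M^*\le\tfrac{4F(8(n+3m)/b^2)}{b^2d}D$, whence $|\hat D-\tilde D|\le M^*|\delta|=O\!\big(\tfrac{F(8(n+3m)/b^2)(n+3m)}{b^2\sqrt T}\big)D$. Substituting $T=\Theta\!\big(F(8(n+3m)/b^2)^2(n+3m)^2/(b^4\varepsilon^2)\big)$, the value fixed before the algorithm, makes the right-hand side at most $\tfrac{\varepsilon}{20}D$, which is exactly the deterministic bound needed on $\+E$.

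The crux is the estimation of $M^*=\sup_{|t-1|\le S}|f'(t)|$ against $D$: one must (i) convert a bound on the \emph{slope} of $f$ into a bound on the \emph{values} of $f$ — handled by the convexity comparison above — and (ii) transport $f$ from the scale $S=O((n+m)d)$ down to the ``natural scale'' $\tfrac{b^2}{2}d$ of the divergence, which is precisely where \Cref{cond:general-small-parameter-distance} (hence the appearance of $F(8(n+m)/b^2)$ in $T$) and the lower bound \Cref{lem:f-lower-bound} enter. The smallness of $|\delta|$, and thus of $|\hat D-\tilde D|=M^*|\delta|$, is supplied by the tiny variance of $W$ in \Cref{lem:bound-for-W}; the remaining effort is bookkeeping the powers of $F$, $n$, $m$, $b$ so that they match the prescribed $T$ (and checking that the constant $\tfrac1{10}$ in $\theta$ indeed keeps $S<\tfrac14$ and $\lambda\le\tfrac{8(n+3m)}{b^2}$).
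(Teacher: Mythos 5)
Your proposal is correct in spirit and reaches the same endpoint (Chebyshev on $\bar W$, then a derivative bound transported to the scale $\tfrac{b^2}{2}d$ via \Cref{cond:general-small-parameter-distance}, closed by \Cref{lem:f-lower-bound}), but the route through the middle differs from the paper's in two respects. First, where the paper splits into three cases according to whether $W_i/\bar W$ and $W_i/\E{W}$ straddle $1$ and applies the mean value theorem in each, you replace the case analysis by a single Lipschitz bound $\lvert f(b_i)-f(a_i)\rvert\le M^*\lvert b_i-a_i\rvert$ on the window $[1-S,1+S]$, together with the identity $\tfrac1T\sum_i\lvert b_i-a_i\rvert=\lvert\delta\rvert$. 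That identity is a genuine simplification: the paper instead bounds each $\lvert W_i/\E{W}-W_i/\bar W\rvert$ individually by $\e^{3(n+3m)d}\lvert\bar W-\E{W}\rvert$, which is strictly looser than your exact average. (Calling the Lipschitz bound ``the mean value theorem'' is slightly loose since $f$ need not be differentiable at $1$, but the Lipschitz estimate holds because $\lvert f'\rvert\le M^*$ off the point $1$ and $f$ is continuous, so nothing breaks.)

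Second, your bound on $M^*$ is more indirect than the paper's. You convert the slope $M^*=\sup_{\lvert t-1\rvert\le S}\lvert f'(t)\rvert$ into function values via the divided-difference comparison $M^*\le\max\{f(1-2S),f(1+2S)\}/S$, then invoke an integrated form $f(1\pm\lambda u)\le\lambda F(\lambda)\,f(1\pm u)$ of \Cref{cond:general-small-parameter-distance}. That integrated form is not stated in the paper; it does follow by integrating $f'$ over $\zeta\in[1,\lambda]$ and using the condition pointwise, but one must silently assume (as you may, without loss of generality) that $F$ is non-decreasing and $F\ge1$. The detour is unnecessary: the paper simply applies \Cref{cond:general-small-parameter-distance} \emph{directly} with $x=\tfrac{b^2}{2}d$ and $\zeta$ chosen so that $\zeta x$ equals the relevant scale, which gives $f'(1+\zeta x)\le \tfrac{2F(\zeta)}{b^2 d}\,f(1+\tfrac{b^2}{2}d)$ in one line and bounds $M^*$ (attained, by convexity, at the endpoints $1\pm S$) without ever passing through $f(1\pm2S)$. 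Your route also forces the stronger constraint $2S<\tfrac12$, which is why your $\lambda$ ends up around $16(n+3m)/b^2$ rather than the $8(n+3m)/b^2$ that appears in the paper's $T$; the discrepancy is a constant factor (harmless since $F$ is polynomial), but it means your constants would not literally close with the $T$ fixed in the paper without a small adjustment. None of this is a gap in the argument, only unoptimized bookkeeping, and the two genuinely nice improvements — the single Lipschitz window instead of three cases, and the exact averaging identity — would shorten the paper's proof.
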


    \Cref{lem:concentration-for-tilde-R} and \Cref{lem:coupling-for-hat-and-tilde-R} will be proved later. Let us now assume that they hold. Note that $D = \dvg{\nu}{\mu}{f}$. By two lemmas and a simple union bound, it holds that 
    \begin{align}\label{eq:coupling-for-hat-and-tilde-R}
    \Pr[]{\vert \hat{D} - D \vert \geq \frac{\varepsilon}{10} \cdot \dvg{\nu}{\mu}{f} } \leq 0.02.
    \end{align}

    So far, the analysis is for perfect samples $W_1,W_2,\ldots,W_T$. We now consider the case that the algorithm draws approximate samples $\hat{W}_1,\ldots,\hat{W}_T$. Since our algorithm draws approximate samples from $\mu$ within total variation distance at most $\frac{1}{100 T}$, each $W_i$ and $\hat{W}_i$ can be coupled successfully with probability at least $1 - \frac{1}{100T}$. A simple union bound shows that the random variable $\hat{D}$ returned by algorithm satisfies the condition in the LHS of~\eqref{eq:coupling-for-hat-and-tilde-R} with probability at least $0.97$. This proves the correctness result in \Cref{lem:general-small-parameter-distance}. For the running time, it is easy to see that the algorithm runs in time $O(T \cdot (\oracle^{\mathrm{sp}}_G(\frac{1}{100T}) + n + m))$. By our assumption in \Cref{def:sample-approx-count-oracle}, $\oracle^{\mathrm{sp}}_G(\frac{1}{100T}) > n + m$.
    
    
    Now, we finish the proof of \Cref{lem:general-small-parameter-distance} by proving two technical lemmas.

    \begin{proof}[Proof of \Cref{lem:concentration-for-tilde-R}]
    Recall that $d = d_{\mathrm{par}}(\nu,\mu) < \frac{1}{10(n+3m)}$.
     By \Cref{lem:bound-for-W}, for any $i \in [T]$, it holds that $\e^{-2(n+3m)d} \leq \frac{W_i}{\E{W}} \leq \e^{2(n+3m)d}$ and
    \begin{align*}
        \Var{f\tp{\frac{W_i}{\E{W}}}} \leq \E{f\tp{\frac{W_i}{\E{W}}}^2} &\leq \max \left\{ f\tp{\e^{-2(n+3m)d}}^2, f\tp{\e^{2(n+3m)d}}^2 \right\}\\
        &\leq \max\{ f\tp{1-4(n+3m)d}^2, f\tp{1+4(n+3m)d}^2 \},
    \end{align*}
    where the last two inequalities hold due to the derivative bound in \Cref{ass:f-assumption}.

    Let us assume that $f\tp{1+4(n+3m)d} > f\tp{1-4(n+3m)d}$.  The other case can be solved with a symmetric argument.
    By Chebyshev's inequality and \Cref{cond:general-small-parameter-distance},
    \begin{align}\label{eq:EW}
        \Pr{\left| \tilde{D} - D \right| \geq \frac{\varepsilon}{20} \cdot f\tp{1+\frac{b^2}{2}d}} &\leq \frac{400\Var[]{\tilde{D}}}{\varepsilon^2 f\tp{1+\frac{b^2}{2}d}^2} \notag\\
    \text{$\left(\Var{\tilde{D}} \leq \frac{f\tp{1+4(n+3m)d}^2}{T}\right)$}\quad    &\leq \frac{400f\tp{1+4(n+3m)d}^2}{T\varepsilon^2 f\tp{1+\frac{b^2}{2}d}^2}\notag \\
    (\ast) \quad &\leq \frac{25600 (n+3m)^2 F(8(n+3m)/b^2)^2}{b^4T\varepsilon^2} \notag\\
    &\leq 0.01.
    \end{align}
    We explain how to obtain $(*)$. By mean value theorem, for any $x > 1$, $f(1+x) \leq f(1) + xf'(1 + x)=xf'(1+x)$, where the inequality holds because $f''(x) > 0$ for $x > 1$. Taking $x = 4(n+3m)d$ get the upper bound $\frac{6400(n+3m)^2 d^2 f'(1+4(n+3m)d)^2}{T\varepsilon^2 f(1+b^2d/2)^2}$. Inequalities $(\ast)$ holds by using \Cref{cond:general-small-parameter-distance} with $x = \frac{b^2}{2}d$, and $\zeta = 8(n+3m)/b^2$. Note that $4(n+3m)d < \frac{1}{2}$ and $\zeta \geq 1$.
    

    Thus, with probability at least $0.99$, we have
    \begin{align*}
        \left| \tilde{D} - D \right| \leq \frac{\varepsilon}{20} \cdot f\tp{1+\frac{b^2}{2}d} \overset{\text{\Cref{lem:f-lower-bound}}}{\leq} \frac{\varepsilon}{20} D. 
    \end{align*}

    The other case is $f\tp{1+4(n+3m)d} \leq f\tp{1-4(n+3m)d}$, where by a symmetric argument, we can show that $\Pr[]{|\tilde{D} - D| > \frac{\varepsilon}{20} \cdot f\tp{1-\frac{b^2}{2}d}} \leq 0.01$. The rest of the proof is the same.
    \end{proof}

    \begin{proof}[Proof of \Cref{lem:coupling-for-hat-and-tilde-R}]
    By comparing the definitions of $\hat{D}$ and $\tilde{D}$, to prove the lemma, we need to compare the terms $f\tp{W_i/\E[]{W}}$ and $f\tp{W_i/\bar{W}}$ in the summation. 
    By \Cref{lem:bound-for-W}, every sample $W_i$ is in $[\exp(-(n+3m)d), \exp((n+3m)d)]$. Then,
    \begin{align*}
        \left| \frac{W_i}{\E[]{W}} - \frac{W_i}{\bar{W}}  \right| = \frac{W_i}{\E[]{W} \bar{W}} \left| \bar{W} - \E{W} \right| \leq \e^{3(n+3m)d} \left| \bar{W} - \E{W} \right|.
    \end{align*}
    Note that $f(x)$ is continuous everywhere and differentiable except at $x = 1$.
    Suppose both $W_i/\bar{W}$ and $W_i/\E[]{W}$ are at least 1. Then, by the mean value theorem and the fact $f''(x) > 0$ for $x \neq 1$,
    \begin{align*}
        \left| f\tp{\frac{W_i}{\E[]{W}}} - f\tp{\frac{W_i}{\bar{W}}} \right| &\leq  \left| f'\tp{\e^{2(n+3m)d}} \right| \cdot  \left| \frac{W_i}{\E[]{W}} - \frac{W_i}{\bar{W}}  \right| \leq \left| f'\tp{\e^{2(n+3m)d}} \right|  \e^{3(n+3m)d} \left| \bar{W} - \E{W} \right|.
    \end{align*}
    If both $W_i/\bar{W}$ and $W_i/\E[]{W}$ are at most 1, then by a similar argument, we have
    \begin{align*}
        \left| f\tp{\frac{W_i}{\E[]{W}}} - f\tp{\frac{W_i}{\bar{W}}} \right| \leq \left| f'\tp{\e^{-2(n+3m)d}} \right|  \e^{3(n+3m)d} \left| \bar{W} - \E{W} \right|.
    \end{align*}
    The only remaining case is when $W_i/\bar{W}$ and $W_i/\E[]{W}$ are in different sides of 1. In this case, we can use $1$ as a bridge and apply the mean value theorem twice. Note that $f({W_i}/{\bar{W}}) - f(1) = f({W_i}/{\bar{W}}) \leq |f'({W_i}/{\bar{W}})|\cdot |{W_i}/{\bar{W}} - 1|$. It holds that
    \begin{align*}
        f\left(\frac{W_i}{\bar{W}}\right) \leq  \max\left\{ \left| f'\tp{\e^{-2(n+3m)d}}\right| , \left| f'\tp{\e^{2(n+3m)d}} \right| \right\} \cdot \left| \frac{W_i}{\bar{W}} - 1 \right|,
    \end{align*}
    By a similar argument, we can replace $W_i/\bar{W}$ with $W_i/\E[]{W}$ on both sides and get another inequality for $f(\frac{W_i}{\E[]{W}})$. Note that $|W_i/\bar  W  - W_i/\E[]{W}| = |W_i/\bar{W} - 1| + |W_i/\E[]{W} - 1|$ in this case. Combining the two inequalities, we have
    \begin{align*}
        \left| f\tp{\frac{W_i}{\E[]{W}}} - f\tp{\frac{W_i}{\bar{W}}} \right| &\leq f\tp{\frac{W_i}{\E[]{W}}} + f\tp{\frac{W_i}{\bar{W}}}\\
        & \leq  \max\left\{ \left| f'\tp{\e^{-2(n+3m)d}}\right| , \left| f'\tp{\e^{2(n+3m)d}} \right| \right\}\left| \frac{W_i}{\E[]{W}} - \frac{W_i}{\bar{W}}  \right| \\
        &\leq \max\left\{ \left| f'\tp{\e^{-2(n+3m)d}}\right| , \left| f'\tp{\e^{2(n+3m)d}} \right| \right\} \cdot \e^{3(n+3m)d} \left| \bar{W} - \E{W} \right|.
    \end{align*}
    Hence, combining the three cases, we have
    \begin{align*}
        \left| f\tp{\frac{W_i}{\E[]{W}}} - f\tp{\frac{W_i}{\bar{W}}} \right| &\leq \max\left\{ \left| f'\tp{\e^{-2(n+3m)d}}\right| , \left| f'\tp{\e^{2(n+3m)d}} \right| \right\} \cdot \e^{3(n+3m)d} \left| \bar{W} - \E{W} \right| \\
        &\leq 1.6 \max\left\{ \left| f'\tp{1-4(n+3m)d}\right| , \left| f'\tp{1+4(n+3m)d} \right| \right\} \left| \bar{W} - \E{W} \right|,
    \end{align*}
    where the last inequality holds $f''(x) > 0$ for all $x \neq 1$ and $\e^{3(n+3m)d} \leq 1.6$ by the assumption that $d < \frac{1}{10(n+3m)}$.
    Since the above inequality holds for all $i \in [T]$, we have
    \begin{align*}
        \left\vert \hat{D} - \tilde{D} \right\vert &=   \left| \frac{1}{T} \sum_{i=1}^T f\tp{\frac{W_i}{\bar{W}}} - \frac{1}{T} \sum_{i=1}^T f\tp{\frac{W_i}{\E[]{W}}} \right| \leq \frac{1}{T} \sum_{i=1}^T \left| f\tp{\frac{W_i}{\bar{W}}} - f\tp{\frac{W_i}{\E[]{W}}} \right|\\ 
        &\leq 3.2 \max\left\{ - f'\tp{1-4(n+3m)d} ,  f'\tp{1+4(n+3m)d} \right\} \left| \bar{W} - \E{W} \right|.
    \end{align*}

    Assume the case $f'\tp{1+4(n+3m)d} \geq - f'\tp{1-4(n+3m)d}$.
    We bound the difference between $\bar{W}$ and $\E{W}$. 
    Since $\bar{W}$ is the average of $T$ independent samples of $W$, by \Cref{lem:bound-for-W}, $\textbf{Var}[\bar{W}] \leq \frac{(3n+9m)^2}{T} \cdot d^2$. Recall that $d = d_{\mathrm{par}}(\nu, \mu)$. By using Chebyshev's inequality on $\bar{W}$, we have
    \begin{align*}
        \Pr{\left| \bar{W} - \E{W} \right| \geq \frac{\varepsilon f\tp{1+\frac{b^2}{2}d}}{20 \cdot 1.6 f'(1+4(n+3m)d)} } &\leq \frac{\Var[]{\bar{W}}\cdot 1024 f'(1+4(n+3m)d)^2}{\varepsilon^2 f\tp{1+\frac{b^2}{2}d}^2}\notag\\
        &\leq \frac{(3n+9m)^2d^2\cdot 1024 f'(1+4(n+3m)d)^2}{T\varepsilon^2 f\tp{1+\frac{b^2}{2}d}^2}\\
     \text{$\left(\text{\Cref{cond:general-small-parameter-distance}}: x = \frac{b^2}{2}d, \zeta = 8(n+3m)/b^2\right)$} \quad  &\leq \frac{(3n+9m)^2\cdot 4096 F(8(n+3m)/b^2)^2}{b^4 T\varepsilon^2}\leq \frac{1}{100},    
    \end{align*}
    where the last inequality comes from the definition of $T$.
    Hence,  with probability at least $0.99$, 
    \begin{align*}
        \left\vert \hat{D} - \tilde{D} \right\vert \leq \frac{\varepsilon}{20} \cdot f\tp{1+\frac{b^2}{2}d} \overset{\text{\Cref{lem:f-lower-bound}}}{\leq} \frac{\varepsilon}{20} D.
    \end{align*}

    The other case is $f'\tp{1+4(n+3m)d} < - f'\tp{1-4(n+3m)d}$. By a symmetric argument, we can show that with probability at most $0.01$, $\left| \bar{W} - \E{W} \right| \leq \frac{\varepsilon f(1-\frac{b^2}{2}d)}{20 \cdot 1.6 |f'(1-4(n+3m)d)|}$. The rest of the proof is the same.
    \end{proof}

\section{Algorithm for \texorpdfstring{$\chi^\alpha$}{chi-alpha}-divergence}\label{sec:alg-chi-alpha}

Our algorithm first computes the marginal lower bound $b$ in time $O(n+m)$. Due to the conditional independence, the algorithm only need to enumerate all $v \in V$, any $c \in \{-1,+1\}$, and find a worst pinning $\tau = \tau(v,c)$ on all neighbor of $v$. Specifically, for any neighbor $u$ of $v$, if $J_{uv} > 0$, then fix $\tau_u = -c$; otherwise, fix $\tau_u = c$. The algorithm compute $\mu_v^\tau(c)$. Let $b$ be the minimum value over all $v \in V$ and $c \in \{-1,+1\}$. We now assume that the value $b$ is known by the algorithm.

Let $\theta = \frac{1}{10(n + 3m)}$ be the threshold parameter. We give two algorithms in \Cref{sec:algorithm-odd-alpha} and \Cref{sec:algorithm-small-parameter-distance} depending on whether $d_{\mathrm{par}}(\nu,\mu) > \theta$ or not. We prove \Cref{thm:approx-chi-alpha-divergence} in \Cref{sec:putting-everything-together}.

\subsection{Large parameters distance case}\label{sec:algorithm-odd-alpha}
\begin{lemma}
    Let $\alpha \geq 1$ be an integer and $b \in (0,1)$ be a constant.
    There exists an algorithm such that given two Ising models $(G, J^\nu, h^\nu)$ and $(G, J^\mu, h^\mu)$, any $0 < \varepsilon < 1$, and $\alpha,b$, if $\nu$ and $\mu$ are $b$-marginally bounded, $d_{\mathrm{par}}(\nu, \mu) > \theta$, and every Ising model in $\+F$ admits sampling and approximate counting oracles with cost function $\oracle^{\mathrm{sp}}_G(\cdot)$ and $\oracle^{\mathrm{ct}}_G(\cdot)$ respectively, then it returns a random $\hat{D}$ in time $O_{\alpha,b}\left( \oracle^{\mathrm{ct}}_G(\delta) + T \cdot \left( \oracle^{\mathrm{sp}}_G \left(\frac{1}{200T(\alpha + 1)}\right) \right) \right)$, where $\delta = \Theta_{\alpha,b}(\theta^\alpha\varepsilon)$ and $T = \Theta_{\alpha,b}(\frac{1}{\varepsilon^2 \theta^\alpha})$ such that 
    \begin{align*}
        \Pr{\e^{-\varepsilon}\dvg{\nu}{\mu}{\chi^\alpha} \leq \hat{D} \leq \e^\varepsilon \dvg{\nu}{\mu}{\chi^\alpha}} \geq \frac{2}{3}.
    \end{align*}
\end{lemma}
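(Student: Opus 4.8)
The plan is to turn the binomial expansion sketched in \Cref{sec:hard-parameter-distance} into a concrete plug‑in estimator and then control the three kinds of error it introduces — misclassification of the sign of $\nu-\mu$, inexact partition‑function ratios, and statistical fluctuation — using \Cref{lem:error-bound} to push every error term below a small constant fraction of $\dvg{\nu}{\mu}{\chi^\alpha}$. For $0\le k\le\alpha$ set $c_k\triangleq Z_0^{k-1}Z_k/Z_1^k=\sum_{\sigma\in\Omega}\nu^k(\sigma)/\mu^{k-1}(\sigma)$, the partition‑function ratio of $(G,J^{(k)},h^{(k)})\in\+F(\nu,\mu,\alpha)$, so that a sample $X$ from $(G,J^{(k)},h^{(k)})$ has $\Pr{X=\sigma}\propto\nu^k(\sigma)/\mu^{k-1}(\sigma)$ and the empirical fraction of such samples lying in $\{\sigma:\nu(\sigma)>\mu(\sigma)\}$ estimates $p_k^+:=c_k^{-1}\sum_{\sigma:\,\nu(\sigma)>\mu(\sigma)}\nu^k(\sigma)/\mu^{k-1}(\sigma)$, and likewise for $p_k^-$; note $\dvg{\nu}{\mu}{\chi^\alpha}=\tfrac12\sum_k(-1)^{\alpha-k}\binom\alpha k\tp{c_kp_k^++(-1)^\alpha c_kp_k^-}$. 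With $\delta=\Theta_{\alpha,b}(\varepsilon\theta^\alpha)$ and $T$ polynomial in $n,m,1/\varepsilon$, the algorithm I would analyze: call the approximate counting oracle on each $(G,J^{(k)},h^{(k)})$ with error $\delta$ to obtain $\hat Z_k\in\e^{\pm\delta}Z_k$, set $W_k\triangleq\hat Z_0^{k-1}\hat Z_k/\hat Z_1^k\in\e^{\pm O(\delta)}c_k$ and $R\triangleq\hat Z_0/\hat Z_1\in\e^{\pm O(\delta)}(Z_\mu/Z_\nu)$; for each $k$ draw $T$ approximate samples from $(G,J^{(k)},h^{(k)})$ with TV‑error $\tfrac1{200T(\alpha+1)}$, label a sample $X$ \emph{positive} if $\tfrac{w_\nu(X)}{w_\mu(X)}R>1$ and \emph{negative} otherwise, let $\hat p_k^\pm$ be the empirical positive/negative fractions, put $\hat W_k^\pm=W_k\hat p_k^\pm$, and output $\hat D=\tfrac12\sum_{k=0}^\alpha(-1)^{\alpha-k}\binom\alpha k\tp{\hat W_k^++(-1)^\alpha\hat W_k^-}$.

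Next I bound the systematic (non‑statistical) error in the mean of $\hat D$. \textbf{(i) Misclassification.} Since $\nu(X),\mu(X)$ are \#P‑hard to evaluate exactly, we compare $\tfrac{w_\nu(X)}{w_\mu(X)}R$ with $1$ instead of $\tfrac{\nu(X)}{\mu(X)}$ with $1$; as $R\in\e^{\pm O(\delta)}(Z_\mu/Z_\nu)$, a sample is mislabeled only when $\bigl|\tfrac{\nu(X)}{\mu(X)}-1\bigr|=O(\delta)$, so the induced bias is at most $2\sum_{\sigma:\,|\nu(\sigma)/\mu(\sigma)-1|=O(\delta)}\mu(\sigma)\bigl|\tfrac{\nu(\sigma)}{\mu(\sigma)}-1\bigr|^\alpha=O(\delta^\alpha)$. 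By \Cref{lem:chi-alpha-lower-bound} and $d_{\mathrm{par}}(\nu,\mu)>\theta$ we have $\dvg{\nu}{\mu}{\chi^\alpha}\ge\tfrac{b^{2\alpha}}2\theta^\alpha$, and since $\delta^\alpha\le\delta$ this bias is $\le\tfrac\varepsilon{10}\dvg{\nu}{\mu}{\chi^\alpha}$ for $\delta=\Theta_{\alpha,b}(\varepsilon\theta^\alpha)$ (with polynomial room to spare). \textbf{(ii) Counting‑ratio error.} Replacing $c_k$ by $W_k$ perturbs the output by at most $\tfrac12\sum_k\binom\alpha k|W_k-c_k|\le O(\delta)\sum_k\binom\alpha k c_k=O(\delta)\sum_\sigma\mu(\sigma)\tp{\tfrac{\nu(\sigma)}{\mu(\sigma)}+1}^\alpha\le O(\delta)B_{\alpha,b}(\theta)^{-1}\dvg{\nu}{\mu}{\chi^\alpha}$ by \Cref{lem:error-bound}, again $\le\tfrac\varepsilon{10}\dvg{\nu}{\mu}{\chi^\alpha}$ since $\delta=\Theta(\varepsilon B_{\alpha,b}(\theta))$. \textbf{(iii) Imperfect samples.} Calling the sampling oracle at TV‑error $\tfrac1{200T(\alpha+1)}$ lets all $(\alpha+1)T$ samples be coupled with perfect ones with probability $\ge1-\tfrac1{200}$; from here on I condition on that coupling and treat samples as exact.

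For concentration, conditioned on $\hat Z_0,\dots,\hat Z_\alpha$ the output $\hat D$ is the mean of $T$ i.i.d.\ copies of $U:=\tfrac12\sum_k(-1)^{\alpha-k}\binom\alpha k W_k\bigl(\mathbf 1[X^{(k)}\text{ positive}]+(-1)^\alpha\mathbf 1[X^{(k)}\text{ negative}]\bigr)$ (independent samples across $k$), whose expectation is the quantity bounded in (i)–(ii) and hence within $\tfrac\varepsilon5\dvg{\nu}{\mu}{\chi^\alpha}$ of $\dvg{\nu}{\mu}{\chi^\alpha}$. The two indicators being disjoint, $\Var{U}\le\tfrac14\sum_k\binom\alpha k^2W_k^2\le\tfrac14\bigl(\sum_k\binom\alpha k W_k\bigr)^2=O\bigl(B_{\alpha,b}(\theta)^{-2}\dvg{\nu}{\mu}{\chi^\alpha}^2\bigr)$ by \Cref{lem:error-bound}, so Chebyshev on the $T$‑fold average with target deviation $\tfrac\varepsilon{10}\dvg{\nu}{\mu}{\chi^\alpha}$ gives failure probability $O\bigl(T^{-1}\varepsilon^{-2}B_{\alpha,b}(\theta)^{-2}\bigr)\le\tfrac1{100}$ once $T=\Theta_{\alpha,b}(\varepsilon^{-2}\theta^{-2\alpha})$, which is a polynomial in $n,m,1/\varepsilon$. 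A union bound over the $O(\alpha)$ counting calls, the sample coupling, and this event, plus a triangle inequality over (i)–(iii) and the fluctuation, gives $\bigl|\hat D-\dvg{\nu}{\mu}{\chi^\alpha}\bigr|\le\tfrac\varepsilon2\dvg{\nu}{\mu}{\chi^\alpha}$ with probability $\ge\tfrac23$, hence $\e^{-\varepsilon}\dvg{\nu}{\mu}{\chi^\alpha}\le\hat D\le\e^\varepsilon\dvg{\nu}{\mu}{\chi^\alpha}$ for $\varepsilon<1$. The running time is $O_{\alpha,b}(1)$ counting calls at error $\delta$, $(\alpha+1)T$ sampling calls at error $\tfrac1{200T(\alpha+1)}$, and $O((\alpha+1)T(n+m))$ arithmetic for the weights, i.e.\ $O_{\alpha,b}\bigl(\oracle^{\mathrm{ct}}_G(\delta)+T\cdot\oracle^{\mathrm{sp}}_G(\tfrac1{200T(\alpha+1)})\bigr)$.

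The main obstacle is error source (i): this is exactly where the absolute value in $\chi^\alpha$ genuinely bites (unlike even $\alpha$, one cannot avoid deciding the sign of $\nu-\mu$), and controlling the bias from inexact classification — together with (ii) — is precisely what \Cref{lem:error-bound} is designed for, since each such error term is naturally of the form $(\text{small})\cdot\sum_\sigma\mu(\sigma)\tp{\tfrac{\nu(\sigma)}{\mu(\sigma)}+1}^\alpha$, which \Cref{lem:error-bound} converts into $(\text{small})\cdot B_{\alpha,b}(\theta)^{-1}\dvg{\nu}{\mu}{\chi^\alpha}$, i.e.\ a small multiple of the target.
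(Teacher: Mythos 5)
Your proof takes essentially the same route as the paper: the same binomial expansion with the sign split at $\nu(\sigma)\gtrless\mu(\sigma)$, the same family $\+F$ of tilted Ising models sampled per $k$, the same use of approximate counting to build the plug-in classifier for the sign, and the same division of the error into misclassification, partition-function-ratio, and statistical parts, with \Cref{lem:error-bound} converting each into a small multiple of $\dvg{\nu}{\mu}{\chi^\alpha}$. The paper organizes the estimator slightly differently (it draws $2T$ samples per $k$ and uses disjoint batches for the positive and negative parts, then bounds the concentration per term and sums; you draw one sample per $k$ per trial, condition on the $\hat Z_k$'s, and apply Chebyshev to the aggregated i.i.d. estimator $U$), but these are cosmetic. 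One thing worth flagging: you correctly arrive at $T=\Theta_{\alpha,b}(\varepsilon^{-2}\theta^{-2\alpha})$; the lemma statement says $\theta^{-\alpha}$, but the paper's own definition $T=\lceil 8\cdot 10^4(\alpha+1)/(\varepsilon^2 B_{\alpha,b}(\theta)^2)\rceil$ with $B_{\alpha,b}(\theta)=\Theta_{\alpha,b}(\theta^\alpha)$ gives $\theta^{-2\alpha}$, consistent with the $(n+m)^{2\alpha}$ in \Cref{thm:approx-chi-alpha-divergence} — the exponent in the lemma statement is a typo, and your version is the right one.
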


By the definition of $\chi^\alpha$-divergence, we can rewrite 
\begin{align}\label{eq:expansion-for-odd-alpha}
    \dvg{\nu}{\mu}{\chi^\alpha} &= \frac{1}{2}\sum_{\sigma: \nu(\sigma) > \mu(\sigma)}\mu(\sigma) \tp{\frac{\nu(\sigma)}{\mu(\sigma)} - 1}^\alpha + \frac{1}{2}\sum_{\sigma: \nu(\sigma) < \mu(\sigma)} \mu(\sigma)\tp{1 - \frac{\nu(\sigma)}{\mu(\sigma)}}^\alpha\\
    &=\frac{1}{2} \sum_{k=0}^{\alpha} (-1)^{\alpha - k} \binom{\alpha}{k} \left( \sum\limits_{\sigma: \nu(\sigma) > \mu(\sigma)} \frac{\nu^k(\sigma)}{\mu^{k-1}(\sigma)} + (-1)^\alpha \sum\limits_{\sigma: \nu(\sigma) < \mu(\sigma)} \frac{\nu^k(\sigma)}{\mu^{k-1}(\sigma)} \right).
\end{align}

Recall that $Z_k$ is the partition function of the Ising model $(G, J^{(k)}, h^{(k)})$, where $J^{(k)} \triangleq kJ^\nu - (k-1)J^\mu$ and $h^{(k)} \triangleq kh^\nu - (k-1)h^\mu$. Note that $Z_\mu = Z_0$ and $Z_\nu = Z_1$. Define two random variables $W^+_k$ and $W^-_k$ as follows. Let $\pi^{(k)}$ be the Gibbs distributions of the Ising model $(G, J^{(k)}, h^{(k)})$.
    \begin{align*}
        W^+_k = \mathbf{1}[\nu(X) > \mu(X)] \frac{Z_0^{k-1} \cdot Z_k}{Z_1^k}, \quad\text{where}\quad X \sim \pi^{(k)}.\\
        W^-_k = \mathbf{1}[\nu(Y) < \mu(Y)] \frac{Z_0^{k-1} \cdot Z_k}{Z_1^k}, \quad\text{where}\quad Y \sim \pi^{(k)}.
    \end{align*} 
 The expectation of $W^+_k$ can be calculated as follows
 \begin{align*}
    \E{W^+_k} = \sum\limits_{\sigma: \nu(\sigma) > \mu(\sigma)} \pi^{(k)}(\sigma) \frac{Z_0^{k-1} \cdot Z_k}{Z_1^k} &= \sum\limits_{\sigma: \nu(\sigma) > \mu(\sigma)} \frac{w_\nu^k(\sigma)/w_\mu^{k-1}(\sigma)}{Z_k} \frac{Z_0^{k-1} \cdot Z_k}{Z_1^k}=\sum\limits_{\sigma: \nu(\sigma) > \mu(\sigma)} \frac{\nu^k(\sigma)}{\mu^{k-1}(\sigma)}.
 \end{align*}
 Similarly, we have $\E{W^-_k} = \sum_{\sigma: \nu(\sigma) < \mu(\sigma)} \frac{\nu^k(\sigma)}{\mu^{k-1}(\sigma)}$. In a high level, our algorithm draws approximate samples of $W^+_k$ and $W^-_k$ and estimate $\mathbf{E}[W^+_k] + (-1)^\alpha \mathbf{E}[W^-_k]$. Define the parameter 
    \begin{align*}
        T &\triangleq \left\lceil \frac{8 \cdot 10^4 (\alpha+1)}{\varepsilon^2 B_{\alpha,b}(\theta)^2} \right\rceil \quad \text{and} \quad \delta \triangleq \frac{B_{\alpha,b}(\theta) \varepsilon}{20 (\alpha+1)},
    \end{align*}
    where the function $B_{\alpha,b}(\theta)$ is defined in \Cref{lem:error-bound}.
    
    \begin{tcolorbox}[
        title = Algorithm for large parameter distance case, 
        colback=white,
        colframe=black,
        coltitle=black,
        colbacktitle=gray!20,
        fonttitle=\bfseries,
        boxrule=1pt,
        arc=3pt
        ]
        \begin{itemize}[leftmargin=0.3cm]
            \item For $0\leq k \leq \alpha$, call the approximate counting oracle on Ising model $(G, J^{(k)}, h^{(k)})$ for $O(\log \alpha)$ times independently with relative error $\delta$, and take the median as $\hat{Z}_k$.
            \item For each $k$ from $0$ to $\alpha$ do:
            \begin{enumerate}
                \item Draw $2T$ samples $\hat{\sigma}_1,\ldots,\hat{\sigma}_{2T}\sim \pi^{(k)}$ by the sampling oracle with error $\frac{1}{200T(\alpha + 1)}$. 
                \item For each $i \in [T]$, compute two values 
                $$\hat{W}^+_{k,i} = \mathbf{1}[\hat{\nu}(\hat{\sigma}_i) > \hat{\mu}(\hat{\sigma}_i)] \frac{\hat{Z}_0^{k-1} \cdot \hat{Z}_k}{\hat{Z}_1^k} \text{ and } \hat{W}^-_{k,i} = \mathbf{1}[\hat{\nu}(\hat{\sigma}_{T+i}) < \hat{\mu}(\hat{\sigma}_{T+i})] \frac{\hat{Z}_0^{k-1} \cdot \hat{Z}_k}{\hat{Z}_1^k},$$
                where for any $x \in \Omega$, $\hat{\nu}(x) = \frac{w_\nu(x)}{\hat{Z}_1}$ and $\hat{\mu}(x) = \frac{w_\mu(x)}{\hat{Z}_0}$. 
            \end{enumerate}
            \item Return $\hat{D} = \frac{1}{2} \sum_{k=0}^{\alpha} (-1)^{\alpha - k} \binom{\alpha}{k} \left( \frac{1}{T} \sum_{i=1}^T \hat{W}^+_{k,i} + (-1)^\alpha \frac{1}{T} \sum_{i=1}^T \hat{W}^-_{k,i} \right)$.
        \end{itemize}
    \end{tcolorbox}
    The total running time of the algorithm is $O(\alpha \log \alpha) \oracle^{\mathrm{ct}}_G(\delta) + O(T)\cdot (\oracle^{\mathrm{sp}}_G \left(\frac{1}{200 T} \right) + n + m)$. Note that $\delta = \Theta_{\alpha,b}(\theta^\alpha\varepsilon)$ and $T = \Theta_{\alpha,b}(\frac{1}{\varepsilon^2 \theta^\alpha})$. We may assume that sampling and counting oracle needs to read the whole graph $G$. Hence, $O(n+m)$ is dominated by the time of sampling. This proves the running time of the algorithm.

    Without loss of generality, we can make following assumptions.
    \begin{itemize}
        \item All of $\hat{Z}_k$ satisfies $\e^{-\delta} Z_k \leq \hat{Z}_k \leq \e^\delta Z_k$ (which happens with probability at least $0.99$).
        \item We can replace approximate samples $\hat{\sigma}_1,\ldots,\hat{\sigma}_{2T}$ with perfect samples $\sigma_1,\ldots,\sigma_{2T}$ for any $\pi^{(k)}$ with $0 \leq k \leq \alpha$. Since the sampling TV-distance error is $\frac{1}{200T(\alpha + 1)}$, by a simple coupling argument, all samples $\hat{\sigma}_i$ can be coupled perfectly with $\sigma_i$ with probability at least $0.99$.
    \end{itemize}
    Using a simple union bound, the above two assumption only contribute at most 0.02 failure probability of the algorithm. We always assume that the assumption holds in the following analysis.

    To prove the correctness of the algorithm, we need to analyze two types of errors. The first type of error is from $\hat{\nu}$, $\hat{\mu}$, $\hat{Z}_k$ for $k \in [0, \alpha]$. The second type of error is from using average of $\hat{W}^+_{k,i}$ and $\hat{W}^-_{k,i}$ to estimate the expectation $\mathbf{E}[W^+_k] + (-1)^\alpha \mathbf{E}[W^-_k]$.

    To bound the first type of error, the expectation of the output $\hat{D}$ can be written as
    \begin{align*}
        \E[]{\hat D} = \frac{1}{2} \sum_{k=0}^{\alpha} (-1)^{\alpha - k} \binom{\alpha}{k} \left( \sum\limits_{\sigma: \hat{\nu}(\sigma) > \hat{\mu}(\sigma)} \pi^{(k)}(\sigma) \frac{\hat Z_0^{k-1} \cdot \hat Z_k}{\hat Z_1^k} + (-1)^\alpha \sum\limits_{\sigma: \hat{\nu}(\sigma) < \hat{\mu}(\sigma)} \pi^{(k)}(\sigma) \frac{\hat Z_0^{k-1} \cdot \hat Z_k}{\hat Z_1^k} \right).
    \end{align*}
    We prove two lemmas. The first lemma bounds the error between $\mathbf{E}[\hat D]$ and $\dvg{\nu}{\mu}{\chi^\alpha}$ and the second lemma bounds the concentration error between $\hat D$ and $\mathbf{E}[\hat D]$.

    \begin{lemma}[Error of expectation]\label{lem:error-from-hat-nu-hat-mu}
    \begin{align*}
        \left \vert \E[]{\hat D} - \dvg{\nu}{\mu}{\chi^\alpha} \right \vert \leq \frac{\epsilon}{10} \dvg{\nu}{\mu}{\chi^\alpha}.
    \end{align*}
    \end{lemma}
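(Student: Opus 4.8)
The plan is to bound $\abs{\E[]{\hat D} - \dvg{\nu}{\mu}{\chi^\alpha}}$ by interpolating through an auxiliary quantity, splitting the discrepancy into a \emph{partition-function error} and a \emph{misclassification error}, and charging these two pieces to \Cref{lem:error-bound} and \Cref{lem:chi-alpha-lower-bound} respectively. Write $r^{(k)}(\sigma) \triangleq \frac{\nu^k(\sigma)}{\mu^{k-1}(\sigma)} = \mu(\sigma)\tp{\nu(\sigma)/\mu(\sigma)}^k$ for the exact ratio and $\hat r^{(k)}(\sigma) \triangleq \pi^{(k)}(\sigma)\frac{\hat Z_0^{k-1}\hat Z_k}{\hat Z_1^k}$ for the version the algorithm computes. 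By \eqref{eq:expansion-for-odd-alpha}, $\dvg{\nu}{\mu}{\chi^\alpha}$ is the alternating sum $\frac12\sum_{k=0}^\alpha(-1)^{\alpha-k}\binom{\alpha}{k}\bigl(\sum_{\sigma:\nu(\sigma)>\mu(\sigma)}r^{(k)}(\sigma) + (-1)^\alpha\sum_{\sigma:\nu(\sigma)<\mu(\sigma)}r^{(k)}(\sigma)\bigr)$, while $\E[]{\hat D}$ has the identical shape with $r^{(k)}$ replaced by $\hat r^{(k)}$ and the true classification $\{\nu(\sigma)\gtrless\mu(\sigma)\}$ replaced by the estimated one $\{\hat\nu(\sigma)\gtrless\hat\mu(\sigma)\}$. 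I would introduce the intermediate quantity $\tilde D$ built like $\E[]{\hat D}$ but with the exact weights $r^{(k)}$ in place of $\hat r^{(k)}$, and then use $\abs{\E[]{\hat D} - \dvg{\nu}{\mu}{\chi^\alpha}} \le \abs{\E[]{\hat D} - \tilde D} + \abs{\tilde D - \dvg{\nu}{\mu}{\chi^\alpha}}$.

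For the first (partition-function) term, I would observe that under the granted assumption $\hat Z_j\in\e^{\pm\delta}Z_j$ the ratio $\hat r^{(k)}(\sigma)/r^{(k)}(\sigma) = \frac{Z_1^k}{\hat Z_1^k}\cdot\frac{\hat Z_0^{k-1}}{Z_0^{k-1}}\cdot\frac{\hat Z_k}{Z_k}$ is a product of at most $2k\le 2\alpha$ factors of the form $\hat Z_j/Z_j$ or $Z_j/\hat Z_j$, hence $\abs{\hat r^{(k)}(\sigma)-r^{(k)}(\sigma)}\le(\e^{2\alpha\delta}-1)r^{(k)}(\sigma)$. Applying the triangle inequality termwise in $k$ and $\sigma$, then the binomial identity $\sum_{k=0}^\alpha\binom{\alpha}{k}r^{(k)}(\sigma)=\mu(\sigma)\tp{\nu(\sigma)/\mu(\sigma)+1}^\alpha$, and finally \Cref{lem:error-bound}, I get
\[
\abs{\E[]{\hat D} - \tilde D} \;\le\; \frac{\e^{2\alpha\delta}-1}{2}\sum_{\sigma\in\Omega}\mu(\sigma)\tp{\frac{\nu(\sigma)}{\mu(\sigma)}+1}^\alpha \;\le\; \frac{\e^{2\alpha\delta}-1}{2\,B_{\alpha,b}(\theta)}\,\dvg{\nu}{\mu}{\chi^\alpha}.
\]
Substituting $\delta=\frac{B_{\alpha,b}(\theta)\varepsilon}{20(\alpha+1)}$ and $\e^x-1\le 2x$ (valid since $2\alpha\delta\ll1$), this is at most $\tfrac{\alpha}{10(\alpha+1)}\varepsilon\,\dvg{\nu}{\mu}{\chi^\alpha}$, strictly below $\tfrac{\varepsilon}{10}\dvg{\nu}{\mu}{\chi^\alpha}$ with a positive slack.

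For the second (misclassification) term, I would note that $\sigma$ is misclassified only when $\frac{w_\nu(\sigma)}{w_\mu(\sigma)}$ lies between $\frac{Z_1}{Z_0}$ and $\frac{\hat Z_1}{\hat Z_0}\in\e^{\pm2\delta}\frac{Z_1}{Z_0}$, i.e.\ when $\abs{\nu(\sigma)/\mu(\sigma)-1}\le 3\delta$. For such a $\sigma$, its contributions to $\tilde D$ and to $\dvg{\nu}{\mu}{\chi^\alpha}$ differ only in whether the common weight $\frac12\sum_{k=0}^\alpha(-1)^{\alpha-k}\binom{\alpha}{k}r^{(k)}(\sigma)=\frac12\mu(\sigma)\tp{\nu(\sigma)/\mu(\sigma)-1}^\alpha$ is multiplied by $1$ or by $(-1)^\alpha$, so the per-configuration discrepancy equals $\frac12\mu(\sigma)\abs{\nu(\sigma)/\mu(\sigma)-1}^\alpha\abs{(-1)^\alpha-1}$ --- identically $0$ when $\alpha$ is even, and at most $\mu(\sigma)(3\delta)^\alpha$ when $\alpha$ is odd. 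Summing over all misclassified $\sigma$, whose total $\mu$-mass is at most $1$, yields $\abs{\tilde D - \dvg{\nu}{\mu}{\chi^\alpha}}\le(3\delta)^\alpha$; since $d_{\mathrm{par}}(\nu,\mu)>\theta$ gives $\dvg{\nu}{\mu}{\chi^\alpha}\ge\frac{b^{2\alpha}}{2}\theta^\alpha$ by \Cref{lem:chi-alpha-lower-bound}, and $\delta=\Theta_{\alpha,b}(\theta^\alpha\varepsilon)$ is far smaller than $\theta\varepsilon$, this piece fits inside the slack left by the first term, so the two bounds together stay below $\frac{\varepsilon}{10}\dvg{\nu}{\mu}{\chi^\alpha}$.

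The step I expect to be the main obstacle is precisely the misclassification error for odd $\alpha$: a misclassified $\sigma$ has $\nu(\sigma)/\mu(\sigma)\approx1$, so its weight $\hat r^{(k)}(\sigma)=\Theta(\mu(\sigma))$ is \emph{not} small, and a priori essentially all of the $\mu$-mass could sit near the boundary $\nu(\sigma)/\mu(\sigma)=1$, which would give only a constant error rather than $O(\varepsilon)\cdot\dvg{\nu}{\mu}{\chi^\alpha}$. The resolution is the ordering of the interpolation: passing from $\hat r^{(k)}$ to the exact $r^{(k)}$ \emph{before} flipping the classification moves those $\Theta(\mu(\sigma))$-sized weights into the partition-function term, where \Cref{lem:error-bound} absorbs them against $\dvg{\nu}{\mu}{\chi^\alpha}$; only afterwards do we swap the classification, at which point the alternating-binomial identity collapses each misclassified contribution down to the genuinely tiny $\mu(\sigma)(\nu(\sigma)/\mu(\sigma)-1)^\alpha=O(\mu(\sigma)\delta^\alpha)$ --- or to exactly zero when $\alpha$ is even, since then both sides of the boundary give the same contribution.
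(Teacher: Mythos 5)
Your proof is correct and follows essentially the same route as the paper: the same intermediate quantity (exact weights $r^{(k)}(\sigma)=\nu^k(\sigma)/\mu^{k-1}(\sigma)$ paired with the estimated classification $\hat\nu\gtrless\hat\mu$), the same termwise partition-function bound collapsed through the binomial identity $\sum_k\binom{\alpha}{k}r^{(k)}(\sigma)=\mu(\sigma)(\nu(\sigma)/\mu(\sigma)+1)^\alpha$ and then \Cref{lem:error-bound}, and the same handling of the misclassification error via the alternating-binomial collapse to $\mu(\sigma)\lvert\nu(\sigma)/\mu(\sigma)-1\rvert^\alpha$ (zero for even $\alpha$, $O(\delta^\alpha)$ per configuration for odd $\alpha$) together with \Cref{lem:chi-alpha-lower-bound}. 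Your closing remark on why the interpolation must swap $\hat r^{(k)}\to r^{(k)}$ before flipping the classification is a nice explicit articulation of a point the paper uses implicitly.
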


    \begin{proof}
    To bound the difference, we define a middle term 
    \begin{align*}
     &\tdvg{\nu}{\mu}{\chi^\alpha} = \frac{1}{2} \sum_{k=0}^{\alpha} (-1)^{\alpha - k} \binom{\alpha}{k} \left( \sum\limits_{\sigma: \hat\nu(\sigma) > \hat\mu(\sigma)} \frac{\nu^k(\sigma)}{\mu^{k-1}(\sigma)} + (-1)^\alpha \sum\limits_{\sigma: \hat\nu(\sigma) < \hat\mu(\sigma)} \frac{\nu^k(\sigma)}{\mu^{k-1}(\sigma)} \right)\\
     =\,& \frac{1}{2}\sum_{\sigma: \hat\nu(\sigma) > \hat\mu(\sigma)}\mu(\sigma) \tp{\frac{\nu(\sigma)}{\mu(\sigma)} - 1}^\alpha + \frac{1}{2}\sum_{\sigma: \hat\nu(\sigma) < \hat\mu(\sigma)}\mu(\sigma) \tp{1 - \frac{\nu(\sigma)}{\mu(\sigma)}}^\alpha.
    \end{align*}
    Compared to the true divergence in~\eqref{eq:expansion-for-odd-alpha}, $\tdvg{\nu}{\mu}{\chi^\alpha}$ only replaces $\nu$ and $\mu$ with $\hat\nu$ and $\hat\mu$ in the index of summing notation. We next bound $\vert \dvg{\nu}{\mu}{\chi^\alpha} - \tdvg{\nu}{\mu}{\chi^\alpha} \vert$ and $\vert \tdvg{\nu}{\mu}{\chi^\alpha} - \hdvg{\nu}{\mu}{\chi^\alpha} \vert$.

    The error between $\dvg{\nu}{\mu}{\chi^\alpha}$ and $\tdvg{\nu}{\mu}{\chi^\alpha}$ comes from that $\tdvg{\nu}{\mu}{\chi^\alpha}$ may miss classify some $\sigma \in \Omega$ into the wrong set. By the definition of $\tdvg{\nu}{\mu}{\chi^\alpha}$, we have the following fact.
    If $\nu(\sigma) > \mu(\sigma)$ (resp. $\nu(\sigma) < \mu(\sigma)$), by~\eqref{eq:expansion-for-odd-alpha}, the contribution of $\sigma$ to the divergence $\dvg{\nu}{\mu}{\chi^\alpha}$ is $\mu(\sigma)(\frac{\mu(\sigma)}{\nu(\sigma)}-1)^\alpha$ (resp. $\mu(\sigma)(1-\frac{\mu(\sigma)}{\nu(\sigma)})^\alpha$). However, in $ \tdvg{\nu}{\mu}{\chi^\alpha}$, it may be possible that $\hat\nu(\sigma) < \hat\mu(\sigma)$ (resp. $\hat\nu(\sigma) > \hat\mu(\sigma)$), then the contribution of $\sigma$ to the $\tdvg{\nu}{\mu}{\chi^\alpha}$ is $\mu(\sigma)(1-\frac{\mu(\sigma)}{\nu(\sigma)})^\alpha$ (resp. $\mu(\sigma)(\frac{\mu(\sigma)}{\nu(\sigma)}-1)^\alpha$). 
    In both cases, the additive error is at most 
    \begin{align*}
    \mu(\sigma)\left\vert \tp{\frac{\nu(\sigma)}{\mu(\sigma)} - 1}^\alpha - \tp{1 - \frac{\nu(\sigma)}{\mu(\sigma)}}^\alpha \right\vert \leq 2\mu(\sigma) \left\vert \frac{\nu(\sigma)}{\mu(\sigma)} - 1 \right\vert^\alpha,
    \end{align*}
    where the error occurs only if $\alpha$ is odd.

    For any $\sigma$, $\nu(\sigma) = \frac{w_{\nu}(\sigma)}{Z_\nu}$ and $\mu(\sigma) = \frac{w_{\mu}(\sigma)}{Z_\mu}$. By assumption, our estimation achieves $\e^{-\delta} \nu(\sigma) \leq \hat{\nu}(\sigma) \leq \e^\delta \nu(\sigma)$, and so does $\hat{\mu}(\sigma)$. Hence, the miss classification only happens when $\frac{\nu(\sigma)}{\mu(\sigma)} \in [\e^{-2\delta}, \e^{2\delta}]$. Let $B$ denote the set of all such $\sigma$'s. Note that $|\frac{\nu(\sigma)}{\mu(\sigma)} - 1| \leq \max\{1-\e^{-2\delta},\e^{2\delta}- 1\} \leq 4 \delta$. We have
    \begin{align*}
        \vert \dvg{\nu}{\mu}{\chi^\alpha} - \tdvg{\nu}{\mu}{\chi^\alpha} \vert &\leq 2 \cdot \frac{1}{2} \sum_{\sigma \in B} \mu(\sigma) \left\vert \frac{\nu(\sigma)}{\mu(\sigma)} - 1 \right\vert^\alpha \leq (4 \delta)^\alpha \leq 4 \delta,
    \end{align*}
    where the last inequality follows from the fact that $4\delta < 1$.
    
    Next, we bound the error between $\tdvg{\nu}{\mu}{\chi^\alpha}$ and $\hdvg{\nu}{\mu}{\chi^\alpha}$. 
    By our assumption, we have $\e^{-\delta} Z_k \leq \hat{Z}_k \leq \e^\delta Z_k$ for all $k$. Note that $\pi^{(k)}(\sigma) = \frac{w_\nu^k(\sigma)/w_\mu^{k-1}(\sigma)}{Z_k}$, $Z_0 = Z_\mu$ and $Z_1 = Z_\nu$. The following observation is easy to verify. For any $\sigma$, it holds that
    \begin{align}\label{eq:error-from-hat-nu-hat-mu}
        &\e^{-2\alpha\delta} \frac{\nu^k(\sigma)}{\mu^{k-1}(\sigma)}  \leq  \pi^{(k)}(\sigma) \frac{\hat Z_0^{k-1} \cdot \hat Z_k}{\hat Z_1^k} \leq \e^{2\alpha\delta} \frac{\nu^k(\sigma)}{\mu^{k-1}(\sigma)} \notag\\
   \text{by$\left(\delta < \frac{1}{2\alpha}\right)$}\quad     \implies\,& \left \vert \pi^{(k)}(\sigma) \frac{\hat Z_0^{k-1} \cdot \hat Z_k}{\hat Z_1^k} - \frac{\nu^k(\sigma)}{\mu^{k-1}(\sigma)} \right \vert \leq 4\alpha\delta \frac{\nu^k(\sigma)}{\mu^{k-1}(\sigma)}.
    \end{align}
    By the triangle inequality, the difference $\vert \tdvg{\nu}{\mu}{\chi^\alpha} - \mathbf{E}[\hat D] \vert$ is at most
    \begin{align*}
        & \frac{1}{2}\sum_{k=0}^{\alpha} \binom{\alpha}{k} \left( \sum\limits_{\sigma: \hat\nu(\sigma) > \hat\mu(\sigma)} \left\vert \pi^{(k)}(\sigma) \frac{\hat Z_0^{k-1} \cdot \hat Z_k}{\hat Z_1^k} - \frac{\nu^k(\sigma)}{\mu^{k-1}(\sigma)}  \right\vert + \sum\limits_{\sigma: \hat\nu(\sigma) < \hat\mu(\sigma)} \left\vert \pi^{(k)}(\sigma) \frac{\hat Z_0^{k-1} \cdot \hat Z_k}{\hat Z_1^k} - \frac{\nu^k(\sigma)}{\mu^{k-1}(\sigma)}  \right\vert \right)\\
         &\leq 2 \alpha \delta \sum_{k=0}^\alpha \binom{\alpha}{k} \sum_{\sigma \in \Omega} \frac{\nu^k(\sigma)}{\mu^{k-1}(\sigma)} = 2 \alpha \delta \sum_{\sigma \in \Omega} \mu(\sigma) \left( \frac{\nu(\sigma)}{\mu(\sigma)} + 1 \right)^\alpha \leq \frac{2\alpha \delta}{B_{\alpha,b}(\theta)} \dvg{\nu}{\mu}{\chi^\alpha}.
    \end{align*}
    where the first inequality follows from~\eqref{eq:error-from-hat-nu-hat-mu} and the last inequality follows from \Cref{lem:error-bound}.

    Using the middle term $\tdvg{\nu}{\mu}{\chi^\alpha}$ as a bridge, we have
    \begin{align*}
        \left \vert \E[]{\hat D} - \dvg{\nu}{\mu}{\chi^\alpha}  \right \vert \leq \left(4 \delta + \frac{2 \alpha \delta}{B_{\alpha,b}(\theta)} \right) \dvg{\nu}{\mu}{\chi^\alpha}.
    \end{align*}
    We set $\delta = \frac{B_{\alpha,b}(\theta) \varepsilon}{20 (\alpha+1)}$ and remark that $B_{\alpha,b}(\theta) \leq \frac{1}{2}$ by definition in \Cref{lem:error-bound}, $\tp{4\delta  + \frac{2 \alpha \delta}{B_{\alpha,b}(\theta)}} \leq \frac{\varepsilon}{10}$, and thus we finish the proof.
    \end{proof}

    \begin{lemma}[concentration error]\label{lem:computational-error}
    \begin{align*}
       \Pr[]{\left \vert \hat D - \E[]{\hat D} \right \vert \leq \frac{\varepsilon}{10} \dvg{\nu}{\mu}{\chi^\alpha}} \geq 0.99.
    \end{align*}
    \end{lemma}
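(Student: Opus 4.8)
The plan is to view $\hat D$ as a linear combination of $2(\alpha+1)$ mutually independent sample averages, bound its variance using the boundedness of the $\hat W^{\pm}_{k,i}$, and conclude with Chebyshev's inequality. Throughout we work under the two assumptions already imposed in the analysis: $\e^{-\delta}Z_k \le \hat Z_k \le \e^{\delta}Z_k$ for all $0\le k\le\alpha$, and every approximate sample $\hat\sigma_i$ has been coupled with a perfect sample $\sigma_i \sim \pi^{(k)}$. Under these assumptions, for each fixed $k$ the variables $\hat W^+_{k,1},\dots,\hat W^+_{k,T}$ are i.i.d., the variables $\hat W^-_{k,1},\dots,\hat W^-_{k,T}$ are i.i.d.\ and—being computed from the disjoint halves $\sigma_1,\dots,\sigma_T$ and $\sigma_{T+1},\dots,\sigma_{2T}$ of a single oracle call—independent of the first group, and the $\alpha+1$ oracle calls for different $k$ are independent of one another. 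Writing $A^+_k \triangleq \frac1T\sum_{i=1}^T\hat W^+_{k,i}$ and $A^-_k \triangleq \frac1T\sum_{i=1}^T\hat W^-_{k,i}$, the $2(\alpha+1)$ random variables $\{A^{\pm}_k\}$ are therefore mutually independent, and $\hat D = \frac12\sum_{k=0}^{\alpha}(-1)^{\alpha-k}\binom{\alpha}{k}\tp{A^+_k + (-1)^\alpha A^-_k}$.

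First I would bound the range of the summands. Set $c_k \triangleq \hat Z_0^{k-1}\hat Z_k/\hat Z_1^k$, so that each $\hat W^\pm_{k,i}$ takes values in $\{0,c_k\}$. Combining $\hat Z_j\in[\e^{-\delta}Z_j,\e^{\delta}Z_j]$ with the identity $\sum_\sigma \nu^k(\sigma)/\mu^{k-1}(\sigma) = Z_0^{k-1}Z_k/Z_1^k$ from \eqref{eq:k-ising-model} and the fact that $\delta = B_{\alpha,b}(\theta)\varepsilon/(20(\alpha+1))$ is small enough that $\e^{2\alpha\delta}\le 2$, one gets $c_k \le 2\sum_\sigma \nu^k(\sigma)/\mu^{k-1}(\sigma)$ (the $k=0$ case being trivial, since $c_0=1$). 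Since a $\{0,c\}$-valued random variable has variance at most $c^2/4$, this yields $\Var{A^\pm_k} \le c_k^2/(4T)$.

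Next I would aggregate the variances. By mutual independence, $\Var{\hat D} = \frac14\sum_{k=0}^\alpha\binom{\alpha}{k}^2\tp{\Var{A^+_k}+\Var{A^-_k}} \le \frac{1}{8T}\sum_{k=0}^\alpha\binom{\alpha}{k}^2 c_k^2 \le \frac{1}{8T}\tp{\sum_{k=0}^\alpha\binom{\alpha}{k}c_k}^2$, the last step using $\sum_k a_k^2 \le \tp{\sum_k a_k}^2$ for nonnegative $a_k$. The crucial point is to bound $\sum_k\binom{\alpha}{k}c_k$: since every term of the binomial expansion of $\tp{\nu(\sigma)/\mu(\sigma)+1}^\alpha$ is nonnegative,
\[
\sum_{k=0}^\alpha\binom{\alpha}{k}c_k \le 2\sum_{k=0}^\alpha\binom{\alpha}{k}\sum_{\sigma}\frac{\nu^k(\sigma)}{\mu^{k-1}(\sigma)} = 2\sum_\sigma \mu(\sigma)\tp{\frac{\nu(\sigma)}{\mu(\sigma)}+1}^\alpha \le \frac{2}{B_{\alpha,b}(\theta)}\dvg{\nu}{\mu}{\chi^\alpha},
\]
where the final inequality is \Cref{lem:error-bound}. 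Hence $\Var{\hat D} \le \dvg{\nu}{\mu}{\chi^\alpha}^2/(2T B_{\alpha,b}(\theta)^2)$, and Chebyshev's inequality gives $\Pr{\abs{\hat D - \E{\hat D}} > \tfrac{\varepsilon}{10}\dvg{\nu}{\mu}{\chi^\alpha}} \le \tfrac{100\,\Var{\hat D}}{\varepsilon^2\dvg{\nu}{\mu}{\chi^\alpha}^2} \le \tfrac{50}{T B_{\alpha,b}(\theta)^2\varepsilon^2} \le 0.01$ by the choice $T = \lceil 8\cdot 10^4(\alpha+1)/(\varepsilon^2 B_{\alpha,b}(\theta)^2)\rceil$, which is exactly the claim.

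The step requiring genuine care, rather than routine estimation, is the aggregation of the variances. A naïve bound $\sum_k\binom{\alpha}{k}^2 c_k^2 \le \binom{2\alpha}{\alpha}\max_k c_k^2$ would cost a constant factor $\binom{2\alpha}{\alpha}$ and force a $T$ larger than the one the algorithm uses; the trick is to keep the binomial coefficients attached to the $c_k$'s so that they reassemble, via \Cref{lem:error-bound}, into the quantity $\sum_\sigma\mu(\sigma)(\nu(\sigma)/\mu(\sigma)+1)^\alpha$ that is directly comparable to $\dvg{\nu}{\mu}{\chi^\alpha}$—the same mechanism used in the even-$\alpha$ overview argument, here applied to second moments rather than first.
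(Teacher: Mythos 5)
Your proof is correct, and it takes a genuinely different route from the paper's. The paper applies Chebyshev's inequality separately to each of the $2(\alpha+1)$ sample averages $\bar W^\pm_k$ (with per-term failure probability $\tfrac{1}{200(\alpha+1)}$), unions over all $k$, and then uses the deterministic triangle inequality $|\hat D - \E{\hat D}| \le \tfrac12\sum_k\binom{\alpha}{k}(|\bar W^+_k - \E{\bar W^+_k}| + |\bar W^-_k - \E{\bar W^-_k}|)$; this argument never invokes independence across $k$ or between the $+$ and $-$ halves. You instead observe that, conditioning on the $\hat Z_j$, the $2(\alpha+1)$ averages $A^\pm_k$ are mutually independent (because the $+$ and $-$ families use disjoint halves of the $2T$ samples and the oracle calls for different $k$ are separate), compute $\Var{\hat D}$ in one shot, and apply Chebyshev once. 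The crucial aggregation step $\sum_k\binom{\alpha}{k}^2 c_k^2 \le \bigl(\sum_k\binom{\alpha}{k}c_k\bigr)^2$ lets you reassemble the second-moment bound into the same quantity $\sum_\sigma\mu(\sigma)(\nu(\sigma)/\mu(\sigma)+1)^\alpha$ that \Cref{lem:error-bound} controls, which is exactly the mechanism the paper exploits at the level of first moments. Your route is arguably cleaner and avoids the union bound; the paper's route avoids relying on the independence structure of the algorithm, which makes it slightly more robust if one later replaces the disjoint-halves sampling with something correlated. The numerical verification at the end checks out with the paper's choice of $T$, with room to spare.
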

    \begin{proof}
    In this proof, we fix the values of all $\hat{Z}_k$ returned by the approximate counting oracle at the first step.
    Define the random variable $\hat{W}^+_k \triangleq \frac{1}{T} \sum_{i=1}^T \hat{W}^+_{k,i}$, then $\mathbf{E}[\hat{W}^+_k] = \mathbf{E}[\hat{W}^+_k]$.
    By definition of $\hat{W}_{k,i}^+$, the random variable $\hat{W}^+_k $ is within the range $[0, (\hat{Z}_0^{k-1} \cdot \hat{Z}_k)/\hat{Z}_1^k]$,
    \begin{align*}
        \Var{\hat{W}^+_k} = \frac{1}{T} \Var{\hat{W}^+_{k,i}} \leq \frac{1}{T} \cdot \left( \frac{\hat{Z}_0^{k-1} \cdot \hat{Z}_k}{\hat{Z}_1^k} \right)^2.
    \end{align*}
    By Chebyshev's inequality on $\bar{W}^+_k$, we have
    \begin{align*}
        \Pr{\left| \bar{W}^+_k - \E{\hat{W}^+_k} \right| \geq \frac{\varepsilon B_{\alpha,b}(\theta)}{20} \cdot \frac{\hat{Z}_0^{k-1} \cdot \hat{Z}_k}{\hat{Z}_1^k}} &\leq \frac{400 \cdot \Var{\bar{W}^+_k}}{\varepsilon^2 B_{\alpha,b}(\theta)^2 \cdot \left( {\hat{Z}_0^{k-1} \cdot \hat{Z}_k}/{\hat{Z}_1^k} \right)^2} \leq \frac{1}{200(\alpha + 1)},
    \end{align*}
    where the last inequality due to the definition of $T$.
    Similarly, we have 
    \begin{align*}
        \Pr{\left| \bar{W}^-_k - \E{\hat{W}^-_k} \right| \geq \frac{\varepsilon B_{\alpha,b}(\theta)}{20} \cdot \frac{\hat{Z}_0^{k-1} \cdot \hat{Z}_k}{\hat{Z}_1^k} } \leq \frac{1}{200(\alpha + 1)}.
    \end{align*}
    By a union bound, with probability at least $0.99$, it holds for all $0 \leq k \leq \alpha$,
    \begin{align*}
        \left| \bar{W}^+_k - \E{\hat{W}^+_k} \right| + \left| \bar{W}^-_k - \E{\hat{W}^-_k} \right| \leq \frac{\varepsilon B_{\alpha,b}(\theta)}{10} \cdot \frac{\hat{Z}_0^{k-1} \cdot \hat{Z}_k}{\hat{Z}_1^k}.
    \end{align*}

    Finally, by the triangle inequality, we have that with probability at least $0.99$,
    \begin{align*}
        \left| \hat{D} - \E{\hat{D}} \right| &\leq \frac{1}{2} \sum_{k=0}^\alpha \binom{\alpha}{k} \tp{\left| \bar{W}^+_k - \E{\hat{W}^+_k} \right| + \left| \bar{W}^-_k - \E{\hat{W}^-_k} \right|}\\
        &\leq \frac{\varepsilon B_{\alpha,b}(\theta)}{20} \sum_{k=0}^\alpha \binom{\alpha}{k} \frac{\hat{Z}_0^{k-1} \cdot \hat{Z}_k}{\hat{Z}_1^k}\leq \frac{\varepsilon B_{\alpha,b}(\theta)}{20} \cdot e^{2\alpha \delta} \sum_{k=0}^\alpha \binom{\alpha}{k} \frac{Z_0^{k-1} \cdot Z_k}{Z_1^k}\\ 
        &= \frac{\varepsilon \cdot e^{2\alpha \delta}}{20} \cdot B_{\alpha,b}(\theta) \sum_{\sigma \in \{\pm\}^V} \mu(\sigma) \tp{\frac{\nu(\sigma)}{\mu(\sigma)} + 1}^\alpha\\
   \text{(by \Cref{lem:error-bound})}\quad     &\leq \frac{\varepsilon \cdot e^{2\alpha \delta}}{20} \dvg{\nu}{\mu}{\chi^\alpha} \leq \frac{\varepsilon}{10} \dvg{\nu}{\mu}{\chi^\alpha}. \qedhere
    \end{align*}
    \end{proof}
The correctness of the algorithm follows from \Cref{lem:error-from-hat-nu-hat-mu} and \Cref{lem:computational-error}.

\subsection{Small parameters distance case}\label{sec:algorithm-small-parameter-distance}
With \Cref{lem:general-small-parameter-distance}, we only need to verify \Cref{cond:general-small-parameter-distance} for $f = \frac{1}{2} |x-1|^\alpha$. By definition,  
\begin{align*}
    \frac{xf'(1+\zeta x)}{f(1+x)} = \frac{\frac{\alpha}{2}|x|^\alpha|\zeta|^{\alpha-1}}{\frac{1}{2}|x|^\alpha} = \alpha \zeta^{\alpha-1}.
\end{align*}
Set $F(\zeta) = \alpha \zeta^{\alpha-1}$, $F(\zeta)$ satisfies \Cref{cond:general-small-parameter-distance}. Thus 
\begin{align*}
    F(8(n+3m)/b^2) =\frac{2^{3\alpha-3}\alpha(n+3m)^{\alpha-1}}{b^{2\alpha-2}}.
\end{align*}

Hence, using \Cref{lem:general-small-parameter-distance}, there is an algorithm with running time 
$O(T\cdot \oracle^{\mathrm{sp}}_G (\frac{1}{100T}))$, where $T = O(\frac{2^{6\alpha - 6} \alpha^2 (n+3m)^{2\alpha}}{b^{4\alpha - 4} \varepsilon^2}) = O_{\alpha,b}(\frac{(n+3m)^{2\alpha}}{\varepsilon^2})$, 
for small parameter distance case $d_{\mathrm{par}}(\nu,\mu) < \theta = \frac{1}{10(n+3m)}$.

    \subsection{Putting everything together (Proof of \texorpdfstring{\Cref{thm:approx-chi-alpha-divergence}}{})}\label{sec:putting-everything-together}
    
    \Cref{thm:approx-chi-alpha-divergence} follows from \Cref{lem:error-from-hat-nu-hat-mu} and \Cref{lem:computational-error}. 
    Define parameter $T = \Theta_{\alpha,b}(\frac{(n+m)^{2\alpha}}{\varepsilon^2})$ and $\delta = \Theta_{\alpha,b}(\frac{\varepsilon}{(n+m)^\alpha})$. By choosing constants, we can make $T$ large enough and $\delta$ small enough.
    The preprocessing time for computing $b$ and $d_{\mathrm{par}}(\nu,\mu)$ is $O(n+m)$, which is dominated by the time for sampling and approximate counting.
    The running time of the whole algorithm is at most
    \begin{align*}
        &\max \left\{  O_{\alpha,b}\left( \oracle^{\mathrm{ct}}_G(\delta) + T \cdot \oracle^{\mathrm{sp}}_G \left(\frac{1}{200T(\alpha + 1)}\right) \right), O_{\alpha,b}\tp{T \cdot \oracle^{\mathrm{sp}}_G\left(\frac{1}{100T}\right)}\right\}\\
        \leq\,&  O_{\alpha,b}\tp{\oracle^{\mathrm{ct}}_G\tp{\frac{\eta_{\alpha,b} \cdot \varepsilon}{(n+m)^\alpha}} +\frac{(n+m)^{2\alpha}}{\varepsilon^2} \cdot  \oracle^{\mathrm{sp}}_G\left(\frac{\eta_{\alpha,b} \cdot \varepsilon^2}{(n+m)^{2\alpha}}\right)},
    \end{align*}
    where the last inequality comes from the fact that both $\oracle^{\mathrm{ct}}_G$ and $\oracle^{\mathrm{sp}}_G$ are non-increasing functions and $\eta_{\alpha,b} > 0$ is a small enough constant depending only on $\alpha$ and $b$.

\section{Algorithms for other \texorpdfstring{$f$}{f}-divergences}\label{sec:other-f-divergences}

In this section, we briefly show the algorithms for other $f$-divergences. 
The algorithms prove the results in \Cref{thm:kl-divergence}, \Cref{thm:alpha-divergence}, and \Cref{thm:squared-hellinger-distance}. 

Again, the algorithm first computes the parameter distance $d_{\mathrm{par}}(\nu,\mu)$. Then, it compares $d_{\mathrm{par}}(\nu,\mu)$ to the threshold $\theta = \frac{1}{10(n+3m)}$. If $d_{\mathrm{par}}(\nu,\mu) \leq \theta$, algorithms are given in \Cref{sec:alg-small-parameter-distance}. Otherwise $d_{\mathrm{par}}(\nu,\mu) > \theta$, algorithms are given in \Cref{sec:alg-large-parameter-distance}.

\subsection{Algorithms for small parameter distance}\label{sec:alg-small-parameter-distance}


In this case, we use the abstract algorithm in \Cref{lem:general-small-parameter-distance}. We only need to verify \Cref{cond:general-small-parameter-distance} for $f$-divergences.
The following lemma gives a sufficient condition for \Cref{cond:general-small-parameter-distance}. 

\begin{lemma}\label{lem:sufficient-condition-for-small-parameter-distance}
    Suppose $f: \mathbb{R}_{>0} \to \mathbb{R}_{\geq 0}$ is twice differentiable at every $x > 0$ such that $f''(x) > 0$ and $f'(1) = f(1) = 0$. 
    If there exists two positive constants $L,U > 0$ such that $L \leq f''(x) \leq U$ for any $x \in [\frac{1}{2},\frac{3}{2}]$. Then, the function $f$ satisfies \Cref{cond:general-small-parameter-distance} with $F(\zeta) = \frac{2U}{L}\zeta$.
\end{lemma}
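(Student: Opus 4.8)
The plan is to reduce the quantity appearing in \Cref{cond:general-small-parameter-distance} to a ratio of second derivatives via two applications of Taylor's theorem, and then invoke the two-sided bound $L \le f'' \le U$ on $[\tfrac12,\tfrac32]$. First I would observe that the hypotheses already force $f$ to satisfy \Cref{ass:f-assumption}: since $f''>0$ everywhere, $f'$ is strictly increasing, so $f'(1)=0$ gives $f'(x)<0$ for $0<x<1$ and $f'(x)>0$ for $x>1$; hence $f$ attains its minimum at $x=1$ with value $f(1)=0$, so $f\ge 0$, and convexity and continuity are immediate. It then remains only to verify the displayed inequality with $F(\zeta)=\tfrac{2U}{L}\zeta$, which is a linear (hence polynomial) map from $\mathbb{R}^+$ to $\mathbb{R}^+$ since $U,L>0$.

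Next, fix $\zeta\ge 1$ and $x\in(-\tfrac{1}{2\zeta},0)\cup(0,\tfrac{1}{2\zeta})$. The key elementary observation is that $|x|<\tfrac{1}{2\zeta}\le\tfrac12$ and $|\zeta x|<\tfrac12$, so both $1+x$ and $1+\zeta x$ lie in $(\tfrac12,\tfrac32)$; moreover every intermediate point between $1$ and one of these two arguments also lies in $[\tfrac12,\tfrac32]$, the interval on which $f''$ is controlled. For the numerator, I would apply the mean value form of Taylor's theorem to $f'$ at the base point $1$: since $f'(1)=0$, there is $\xi$ between $1$ and $1+\zeta x$ (hence $\xi\in[\tfrac12,\tfrac32]$) with $f'(1+\zeta x)=\zeta x\, f''(\xi)$, and therefore $x f'(1+\zeta x)=\zeta x^2 f''(\xi)\le \zeta x^2 U$; note $x^2>0$, so this form also makes transparent that the numerator is nonnegative. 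For the denominator, Taylor's theorem applied to $f$ at $1$ with $f(1)=f'(1)=0$ gives $f(1+x)=\tfrac{x^2}{2}f''(\eta)$ for some $\eta\in[\tfrac12,\tfrac32]$, so $f(1+x)\ge\tfrac{L}{2}x^2>0$. Dividing, $\dfrac{x f'(1+\zeta x)}{f(1+x)}\le\dfrac{\zeta x^2 U}{(L/2)x^2}=\dfrac{2U}{L}\zeta=F(\zeta)$, which is exactly the claim.

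There is no genuine obstacle in this argument; the only point requiring (entirely routine) care is the interval bookkeeping — confirming that $1+x$, $1+\zeta x$, and the two intermediate points $\xi,\eta$ all remain inside $[\tfrac12,\tfrac32]$, so that the hypothesis $L\le f''\le U$ is legitimately applicable — which is where the restriction $\zeta\ge 1$ together with $|\zeta x|<\tfrac12$ is used. Everything else is the standard second-order Taylor expansion of $f$ and the first-order expansion of $f'$ at $x=1$.
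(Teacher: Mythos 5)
Your proof is correct and takes essentially the same approach as the paper: both proofs exploit $f(1)=f'(1)=0$ to express $f'(1+\zeta x)$ and $f(1+x)$ in terms of $f''$ on $[\tfrac12,\tfrac32]$ and then apply $L\le f''\le U$ to conclude. The only cosmetic difference is that you use the Lagrange (mean-value) form of the Taylor remainder while the paper uses the integral form; the resulting bounds and constant $\tfrac{2U}{L}\zeta$ are identical.
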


\begin{proof}
Note that $f$ satisfying the condition in the lemma satisfies \Cref{ass:f-assumption}. By condition that $f'(1) = f(1) = 0$, we have the following two equalities ($\int_0^x = -\int_x^0 \text{ if } x < 0$):
    \begin{align}\label{eq:integral-alpha-divergence}
\forall x \in \tp{-\frac{1}{2},\frac{1}{2}},\quad        f'(1+x) = \int_0^x f''(1+u) \,du, \quad 
        f(1+x) = \int_0^x(x-u)f''(1+u) \,du.
    \end{align}
Since $0 < L \leq f''(\zeta x) \leq U$ for $\zeta > 1$ and $x \in [-\frac{1}{2\zeta},0) \cup (0,\frac{1}{2\zeta}]$, we have
    \begin{align*}
        |f'(1+\zeta x)| &= \Big| \int_{0}^{\zeta x} f''(1+u) \,du \Big| \leq \zeta |x| \cdot U,\\
        f(1+x) &= \int_0^x (x-u) f''(1+u) \,du \geq L \int_0^x (x-u) \,du = \frac{L x^2}{2}.
    \end{align*}
    Thus, \Cref{cond:general-small-parameter-distance} can be verify as follows
    \begin{align*}
        \frac{xf'(1+\zeta x)}{f(1+x)} = \frac{|x| \cdot |f'(1+\zeta x)|}{f(1+x)} \leq \frac{|x| \cdot \zeta |x| U}{ {L x^2}/{2}} = \frac{2U}{L}\zeta. &\qedhere
    \end{align*}
\end{proof}

Then, we using \Cref{lem:sufficient-condition-for-small-parameter-distance} verify \Cref{cond:general-small-parameter-distance} for Kullback-Leibler, R\'enyi, Jensen-Shannon, $\alpha$-divergence, and Squared Hellinger divergence. The result is summarized as the following table. 
\begin{table}[H]
    \centering
    \begin{tabular}{|c|c|c|c|c|c|}
        \hline
        Divergence & $f(x)$ & $f''(x)$ & $L$ & $U$ & $F(\zeta)$ \\
        \hline
        Kullback-Leibler & $x\ln x - x + 1$ & $\frac{1}{x}$ & $\frac{2}{3}$ & $2$ & $6\zeta$ \\
        R\'enyi  & $-\ln x + x - 1$ & $\frac{1}{x^2}$ & $\frac{4}{9}$ & $4$ & $18\zeta$ \\
        Jensen-Shannon & $\frac{1}{2}(x\ln x - (x+1)\ln\frac{x+1}{2})$ & $ \frac{1}{2x(x+1)}$ & $\frac{2}{15}$ & $\frac{2}{3}$ & $10\zeta$ \\
        $\alpha$-divergence $(\alpha \notin \{0,1\})$ & $\frac{x^\alpha - \alpha x - (1-\alpha)}{\alpha(\alpha-1)}$ & $x^{\alpha-2}$ & $2^{-|\alpha - 2|}$ & $2^{|\alpha-2|}$ & $2\cdot 4^{|\alpha-2|} \zeta$ \\
        Squared Hellinger & $\frac{1}{2}(\sqrt{x}-1)^2$ & $\frac{1}{2\sqrt{x}}$ & $\frac{1}{\sqrt 6}$ & $\frac{1}{\sqrt 2}$ & $2\sqrt{3} \zeta$ \\
        \hline
    \end{tabular}
\end{table}

We remark that \Cref{lem:sufficient-condition-for-small-parameter-distance} does not work for $\chi^\alpha$-divergence but it works for divergences above. 

The algorithm then follows from \Cref{lem:general-small-parameter-distance} such that to approximate $\dvg{\nu}{\mu}{f}$, it only requires polynomial time sampling oracle for the input distribution $\mu$. For squared Hellinger distance, as it is symmetric $\dvg{\nu}{\mu}{f} = \dvg{\mu}{\nu}{f}$, we can swap the roles of $\nu$ and $\mu$ in the algorithm to make sure $\mu$ admits a polynomial time sampling oracle.

\subsection{Algorithms for large parameter distance}\label{sec:alg-large-parameter-distance}
Now, assume that the parameter distance $d_{\mathrm{par}}(\nu,\mu) > \theta = \frac{1}{10(n+3m)}$. By \Cref{lem:f-lower-bound}, the $f$-divergence is at least 
$$\max \left\{f\left(1-\frac{b^2}{2}d_{\mathrm{par}}(\nu,\mu)\right), f\left(1+\frac{b^2}{2}d_{\mathrm{par}}(\nu,\mu)\right) \right\} \geq \max \left\{f\left(1-\frac{b^2}{2}\theta\right), f\left(1+\frac{b^2}{2}\theta\right) \right\},$$ 
where the inequality holds because of the derivative assumptions in \Cref{ass:f-assumption}.
Let $\delta = \frac{b^2}{2}\theta = \frac{1}{\mathrm{poly}(n)}$. We show that all divergences in the above table are at least $\frac{1}{\mathrm{poly}(n)}$. Kullback-Leibler and R\'enyi divergences can be verified by using Taylor series $\ln(1+\delta) = \delta - \frac{\delta^2}{2} + O(\delta^3)$. For Jensen-Shannon, one can rewrite $f(1+\delta) = (1+\delta)\ln(1+\frac{\delta}{2+\delta}) - \ln(1+\frac{\delta}{2})$ and then Taylor series shows $f(1+\delta) = \Omega(\delta^2)$. For $\alpha$-divergence, expanding $(1+\delta)^\alpha$ for real $\alpha$ implies $f(1+\delta) = \frac{\delta^2}{2} \pm O(\delta^3) = \Omega(\delta^2)$. Finally, it is easy to verify that $f(1+\delta) = \Omega(\delta^2)$ for Squared Hellinger divergence. We have 
\begin{align}\label{eq:lower-bound-for-large-parameter-distance}
\dvg{\nu}{\mu}{f} \geq \frac{1}{\mathrm{poly}(n)}.
\end{align}

\paragraph{Kullback-Leibler, R\'enyi, and Jensen-Shannon divergences.} We give the algorithm for Jensen-Shannon divergence. Similar algorithms can be given for KL and R\'enyi divergences. The Jensen-Shannon divergence can be written as 
\begin{align*}
\frac{1}{2}\sum_{\sigma \in \Omega} \nu(\sigma)\ln\frac{\nu(\sigma)}{\mu(\sigma)} - \frac{1}{2}\sum_{\sigma \in \Omega} \nu(\sigma)\ln\frac{\nu(x) + \mu(x)}{2\mu(x)} - \frac{1}{2}\sum_{\sigma \in \Omega} \mu(x) \ln\frac{\nu(x) + \mu(x)}{2\mu(x)}.
\end{align*}
By~\eqref{eq:lower-bound-for-large-parameter-distance}, it suffices to give an algorithm with additive error $\frac{\varepsilon}{\mathrm{poly}(n)}$. We can estimate each term with an additive error. We give the algorithm for the first term. The other two terms can be estimated similarly. We use the sampling oracle to draw $T$ samples $\sigma$ from $\nu$, then using counting oracles to approximate $h(\sigma) =\ln \frac{\nu(\sigma)}{\mu(\sigma)} = \ln(\frac{w_\nu(\sigma)}{w_\mu(\sigma)} \cdot \frac{Z_\mu}{Z_\nu})$, finally, return the average of $h(\sigma)$. The counting oracles estimate the partition functions with a relative error, and thus we can approximate $h(\sigma)$ with an additive error. Due to the marginal lower bound assumption, $|h(\sigma)| \leq n \log \frac{1}{b} = O(n)$. We can set $T = \mathrm{poly}(n,\frac{1}{\varepsilon})$ to achieve the additive error $\frac{\varepsilon}{\mathrm{poly}(n)}$ and our goal is achieved.

\paragraph{$\alpha$-divergence.} The $\alpha$-divergence can be written as
\begin{align*}
    \frac{1}{\alpha(\alpha-1)}\tp{\sum_{\sigma}\frac{\nu(\sigma)^\alpha}{\mu(\sigma)^{\alpha-1}}} - \frac{1}{\alpha(\alpha-1)} = \frac{1}{\alpha(\alpha-1)} \tp{\frac{Z_\mu^{\alpha-1} \cdot Z_\alpha}{Z_\nu^\alpha} - 1},
\end{align*}
where $Z_\alpha$ is the partition function of the Ising model $(G,J^{(\alpha)}, h^{(\alpha)})$. By calling approximate counting oracles with relative error $\e^{ \pm \Theta_\alpha(\delta)}$, we estimate $\frac{Z_\mu^{\alpha-1}Z_\alpha}{Z_\nu^\alpha}$ with $\e^{\pm \delta}$ relative error, which is equivalent to $\pm O(\delta) \frac{Z_\mu^{\alpha-1}Z_\alpha}{Z_\nu^{\alpha}}$ additive error approximation. When the constant $\alpha >1$ or $\alpha < 0$, by \eqref{eq:lower-bound-for-large-parameter-distance}, $\frac{Z_\mu^{\alpha-1}Z_\alpha}{Z_\nu^{\alpha}}$ is lower bounded by $\frac{1}{\mathrm{poly}(n)} + 1$. We set $\delta = \frac{\varepsilon}{\mathrm{poly}(n)}$, then $\frac{Z_\mu^{\alpha-1}Z_\alpha}{Z_\nu^{\alpha}} \geq \frac{\varepsilon}{\varepsilon - \delta}$, which implies
\begin{align*}
\delta \cdot \frac{Z_\mu^{\alpha-1}Z_\alpha}{Z_\nu^{\alpha}} \leq \varepsilon \cdot \tp{\frac{Z_\mu^{\alpha-1}Z_\alpha}{Z_\nu^{\alpha}} - 1}.
\end{align*}
When the constant $0 < \alpha < 1$, $\alpha(1-\alpha) < 0$. By~\eqref{eq:lower-bound-for-large-parameter-distance}, the $\alpha$-divergence at least $\frac{1}{\mathrm{poly}(n)}$, $\frac{Z_\mu^{\alpha-1}Z_\alpha}{Z_\nu^{\alpha}}$ is upper bounded by $1-\frac{1}{\mathrm{poly}(n)}$. We still set $\delta = \frac{\varepsilon}{\mathrm{poly}(n)}$, and in this case,
\begin{align*}
    \delta \cdot \frac{Z_\mu^{\alpha-1}Z_\alpha}{Z_\nu^{\alpha}} \leq \frac{\varepsilon}{\mathrm{poly}(n)} \leq \varepsilon \cdot \tp{1- \frac{Z_\mu^{\alpha-1}Z_\alpha}{Z_\nu^{\alpha}}}.
\end{align*}
In both cases, our algorithm solves the problem with an additive error $O(\varepsilon)\dvg{\nu}{\mu}{\alpha}$ approximation, which implies a relative error $\e^{\pm \varepsilon}$ approximation.

\paragraph{Squared Hellinger distance.} The Squared Hellinger distance can be written as 
\begin{align*}
1 - \sum_{\sigma \in \Omega} \mu(\sigma) \left( \frac{\nu(\sigma)}{\mu(\sigma)} \right)^{1/2}= 1 - \frac{\bar{Z}}{\sqrt{Z_\nu Z_\mu}},
\end{align*}
where $\bar{Z}$ is the partition function of the averaged Ising model in \Cref{thm:squared-hellinger-distance}. 
Again, by~\eqref{eq:lower-bound-for-large-parameter-distance}, it suffices to given an algorithm with additive error $\frac{\varepsilon}{\mathrm{poly}(n)}$.
We can use approximate counting oracles to estimate $\frac{\bar{Z}}{Z_\nu Z_\mu}$ with a relative error. Note that for any $\delta > 0$, $\e^{\pm \delta}$ relative error approximation is the same as $\pm O(\delta)\frac{\bar{Z}}{Z_\nu Z_\mu}$ additive error approximation. Since $\frac{\bar{Z}}{Z_\nu Z_\mu} \leq 1$ (as the divergence is non-negative), we can achieve $\pm O(\delta)$ additive error. Then the whole term $1 - \frac{\bar{Z}}{Z_\nu Z_\mu}$ can be approximated within an additive error $\pm O(\delta)$. The problem is solved by setting $\delta = \frac{\varepsilon}{\mathrm{poly}(n)}$.

\section{Proof of the hardness result} \label{sec:proof-hardness}


In this section, we only consider Ising model $(G,J,h)$ with zero external field and interaction matrix with unified values. Formally, $h = \boldsymbol{0}$ and $J_{uv} = J_{vu} = \frac{\ln \beta}{2}$ for all $\{u,v\} \in E$. It is easy to see the probability of a configuration $\sigma$ is proportional to $\beta^{m(\sigma)}$, where $m(\sigma)$ is the number of monochromatic edges $m(\sigma) = |\{u,v\} \in E, \sigma_u = \sigma_v|$.
We denote it by $(G,\beta)$.

Let $\alpha \geq 2$, $\Delta \geq 3$, and $\beta_\mu > \beta_\nu \geq \frac{\Delta-2}{\Delta}$ such that $(\frac{\beta_\nu}{\beta_\mu})^{\alpha} \beta_\mu < \frac{\Delta-2}{\Delta}$ be constant parameters in \Cref{thm:hardness-without-F}. We define a constant parameter 
\begin{align}\label{eq:beta-for-hardness}
    \beta = \tp{\frac{\beta_\nu}{\beta_\mu}}^\alpha \beta_\mu < \frac{\Delta - 2}{\Delta}.
\end{align}

The following hardness result for approximating partition function $Z=\sum_{\sigma \in \{-1,+1\}^V} \beta^{m(\sigma)}$ of the anti-ferromagnetic Ising model beyond the uniqueness threshold is well-known.
\begin{lemma}[\text{\cite{SS14, GSV16}}]\label{lem:hardness-below-threshold}
    Fix $\Delta \geq 3$ and $0 < \beta < \frac{\Delta - 2}{\Delta}$.
    There exists $c = c(\Delta,\beta) > 0$ such that unless $\mathsf{NP} = \mathsf{RP}$, there is no polynomial time randomized algorithm to approximate the partition function within a factor of $\e^{cn}$ with probability at least $2/3$ for the zero external field Ising model $(G,\beta)$ on $\Delta$-regular $n$-vertex graphs $G$.
\end{lemma}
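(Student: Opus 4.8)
The plan is to follow the hardness argument of Sly and Sun~\cite{SS14} and of Galanis, \v{S}tefankovi\v{c}, and Vigoda~\cite{GSV16}; since the statement is used only as a black box, I will sketch the strategy rather than carry it out. The goal is a reduction from an NP-hard optimization problem on bounded-degree graphs --- say \textsc{Max-Cut} on $3$-regular graphs --- to the task of approximating $Z = \sum_\sigma \beta^{m(\sigma)}$ for the zero-field Ising model $(G,\beta)$ on $\Delta$-regular graphs within a factor $\e^{cn}$. The engine of the reduction is a \emph{gadget}: a uniformly random $\Delta$-regular bipartite graph on a large constant number $M = M(\Delta,\beta)$ of vertices, equipped with a few designated low-degree ``port'' vertices, whose Ising Gibbs measure --- in the non-uniqueness regime $\beta < \frac{\Delta-2}{\Delta}$ --- concentrates on two symmetric macroscopic \emph{phases}.

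First I would establish the phase structure of a single gadget. For the antiferromagnetic Ising model, uniqueness on the infinite $\Delta$-regular tree fails precisely when $\beta < \frac{\Delta-2}{\Delta}$, where the ratio recursion on the tree acquires a stable period-two orbit; the two $+$-spin densities along that orbit give the two phases, one on each side of the bipartition. On a uniformly random $\Delta$-regular bipartite graph one controls $Z$ by the moment method: $\E{Z}$ is an exponential sum over ``profiles'' (the numbers of $+$-spins on the two sides) that is dominated by the two profiles coming from the period-two orbit, while $\E{Z^2}$ is an exponential sum over pairs of configurations indexed by an overlap matrix. The crucial step is to show that the second-moment rate function is maximized at the ``diagonal'' pairs --- the two configurations perfectly correlated or perfectly anti-correlated --- which yields $\E{Z^2} \le C(\Delta,\beta)\,\E{Z}^2$; Paley--Zygmund then gives $Z \ge \E{Z}/\mathrm{poly}$ with constant probability, and small-subgraph conditioning upgrades this to a factor-$\e^{o(M)}$ concentration of $Z$ around $\E{Z}$ with high probability, with a Gibbs sample lying in one of the two phases and its ``phase bit'' (for instance the majority spin on a fixed side) readable with high probability.

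Next I would assemble the composite instance. Given a $3$-regular graph $H$ on $N$ vertices, form $G_H$ by replacing each vertex of $H$ with an independent copy of the gadget and, for each edge $\{u,v\}$ of $H$, joining a port of $u$'s gadget to a port of $v$'s gadget, padding the construction so that $G_H$ is $\Delta$-regular. Because the interaction is antiferromagnetic, an inter-gadget edge is favored exactly when its two incident gadgets sit in opposite phases, so to leading exponential order $\log Z(G_H) = a(\Delta,\beta) N + b(\Delta,\beta)\,\mathrm{maxcut}(H) + o(N)$ with $b(\Delta,\beta) > 0$, the error term absorbing gadget fluctuations and port corrections. Since $n = |V(G_H)| = M N = \Theta(N)$ and, by APX-hardness of \textsc{Max-Cut} on cubic graphs, it is NP-hard to distinguish $\mathrm{maxcut}(H) \ge k$ from $\mathrm{maxcut}(H) \le (1-\gamma)k$ for a fixed $\gamma = \gamma(\Delta) > 0$, one picks $c = c(\Delta,\beta)$ small enough (given $M$) that an $\e^{cn}$-approximation of $Z(G_H)$ pins down $\mathrm{maxcut}(H)$ finely enough to decide this gap. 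A polynomial-time randomized $\e^{cn}$-approximation of $Z$ would then decide an NP-hard problem, forcing $\mathsf{NP} = \mathsf{RP}$.

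The main obstacle is the second-moment maximization over the overlap matrix: proving that the exponential rate is dominated by the diagonal pairs requires a delicate Lagrange/fixed-point analysis and relies essentially on the exact $\pm$ symmetry of the zero-field model, which makes the two phases genuine mirror images and the competing dominant terms exactly balanced. The remaining technically heavy ingredients --- the small-subgraph-conditioning step that promotes the constant-probability second-moment bound to a high-probability statement, and the accounting showing that the low-degree ports and inter-gadget edges perturb $\log Z$ by only $o(N)$ --- are likewise carried out in~\cite{SS14, GSV16}, and I would invoke them as established.
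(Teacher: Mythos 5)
The paper does not prove \Cref{lem:hardness-below-threshold} — it imports it as a black box from \cite{SS14,GSV16}, so there is no internal proof to compare against. Your sketch is an accurate high-level account of the Sly--Sun / Galanis--\v{S}tefankovi\v{c}--Vigoda argument (random $\Delta$-regular bipartite gadget, phase coexistence in non-uniqueness, second-moment and small-subgraph-conditioning concentration, reduction from \textsc{Max-Cut} on cubic graphs yielding $\e^{\Omega(n)}$ inapproximability), which is exactly what those references establish and what the paper relies on.
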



For any $\Delta$-regular graph $G$, let $Z(G,\beta)$ denote the partition function of the zero external field Ising model $(G,\beta)$, where $\beta$ is defined in \eqref{eq:beta-for-hardness}. We prove the following lemma.
\begin{lemma}\label{lem:hardness-for-chi-alpha-divergence}
Let $\Delta,\alpha,\beta_\nu,\beta_\mu$ be constants. There exists a constant $C = C(\Delta,\alpha,\beta_\nu,\beta_\mu) > 0$ such that for any graph $G$, let $\mu$ and $\nu$ be the Gibbs distributions of the Ising models $(G,\beta_\nu)$ and $(G,\beta_\mu)$,
\begin{align*}
 \frac{1}{C} \cdot Z(G,\beta)  \leq  \dvg{\nu}{\mu}{\chi^\alpha} \cdot \frac{Z(G,\beta_\nu)^\alpha}{Z(G,\beta_\mu)^{\alpha-1}} \leq  C \cdot Z(G,\beta).
\end{align*}
\end{lemma}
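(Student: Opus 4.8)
\noindent The plan is to establish the two inequalities separately: the lower bound follows almost at once from \Cref{lem:error-bound}, while the upper bound follows from an explicit algebraic identity together with one application of Jensen's inequality. Throughout write $r = \beta_\nu/\beta_\mu \in (0,1)$ (recall $\beta_\mu > \beta_\nu$) and $\beta = r^\alpha\beta_\mu$ as in \eqref{eq:beta-for-hardness}. Since $\nu(\sigma) = \beta_\nu^{m(\sigma)}/Z(G,\beta_\nu)$ and $\mu(\sigma) = \beta_\mu^{m(\sigma)}/Z(G,\beta_\mu)$, a short computation (using $\beta_\nu^\alpha/\beta_\mu^{\alpha-1} = \beta$) gives $\nu^\alpha(\sigma)/\mu^{\alpha-1}(\sigma) = \beta^{m(\sigma)}\,Z(G,\beta_\mu)^{\alpha-1}/Z(G,\beta_\nu)^\alpha$, hence
\[
  \sum_{\sigma} \frac{\nu^\alpha(\sigma)}{\mu^{\alpha-1}(\sigma)} \;=\; \frac{Z(G,\beta_\mu)^{\alpha-1}\, Z(G,\beta)}{Z(G,\beta_\nu)^{\alpha}},
\]
which is the $k=\alpha$ case of \eqref{eq:k-ising-model} once one notes that the $k=\alpha$ member of $\+F(\nu,\mu,\alpha)$ is the Ising model $(G,\beta)$. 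Substituting the same expressions into $\dvg{\nu}{\mu}{\chi^\alpha} = \tfrac12\sum_\sigma |\nu(\sigma)-\mu(\sigma)|^\alpha/\mu(\sigma)^{\alpha-1}$ and factoring out $(Z(G,\beta_\mu)/Z(G,\beta_\nu))^\alpha$ yields the identity
\[
  \dvg{\nu}{\mu}{\chi^\alpha}\cdot\frac{Z(G,\beta_\nu)^{\alpha}}{Z(G,\beta_\mu)^{\alpha-1}} \;=\; \frac12 \sum_{\sigma} \beta_\mu^{m(\sigma)}\,\Bigl| r^{m(\sigma)} - \tfrac{Z(G,\beta_\nu)}{Z(G,\beta_\mu)} \Bigr|^{\alpha}.
\]

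\noindent For the lower bound, I would first observe that for these instances $h^\nu = h^\mu = \boldsymbol{0}$ and the interaction matrices have unified entries $\tfrac12\ln\beta_\nu$ and $\tfrac12\ln\beta_\mu$, so $d_{\mathrm{par}}(\nu,\mu) = \tfrac12\ln(\beta_\mu/\beta_\nu) > 0$; put $\theta = \min\{1,\ \tfrac12\ln(\beta_\mu/\beta_\nu)\}$, a positive constant, so $d_{\mathrm{par}}(\nu,\mu) \ge \theta$. Moreover $\nu$ and $\mu$ are $b$-marginally bounded with $b = b(\Delta,\beta_\nu,\beta_\mu) = \Omega(1)$, since the worst-case pinning of a vertex's $\Delta$ neighbours still leaves a constant marginal probability. \Cref{lem:error-bound} then gives $\dvg{\nu}{\mu}{\chi^\alpha} \ge B_{\alpha,b}(\theta)\sum_\sigma\mu(\sigma)\bigl(\nu(\sigma)/\mu(\sigma)+1\bigr)^\alpha$; expanding $(\nu/\mu+1)^\alpha$ by the binomial theorem and discarding all but the nonnegative $k=\alpha$ term, which by the first display equals $Z(G,\beta_\mu)^{\alpha-1}Z(G,\beta)/Z(G,\beta_\nu)^\alpha$, and then multiplying through by $Z(G,\beta_\nu)^\alpha/Z(G,\beta_\mu)^{\alpha-1}$, we obtain $\dvg{\nu}{\mu}{\chi^\alpha}\cdot Z(G,\beta_\nu)^\alpha/Z(G,\beta_\mu)^{\alpha-1} \ge B_{\alpha,b}(\theta)\,Z(G,\beta)$. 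This is the lower bound with $C \ge 1/B_{\alpha,b}(\theta)$.

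\noindent For the upper bound, I would start from the second display, apply the triangle inequality and the elementary bound $(x+y)^\alpha \le 2^{\alpha-1}(x^\alpha+y^\alpha)$ (valid for $\alpha\ge1$, $x,y\ge0$), and use $\beta_\mu r^\alpha = \beta$:
\[
  \sum_{\sigma} \beta_\mu^{m(\sigma)}\Bigl| r^{m(\sigma)} - \tfrac{Z(G,\beta_\nu)}{Z(G,\beta_\mu)} \Bigr|^{\alpha} \le 2^{\alpha-1}\Bigl( \sum_{\sigma}\beta^{m(\sigma)} + \bigl(\tfrac{Z(G,\beta_\nu)}{Z(G,\beta_\mu)}\bigr)^{\alpha}\sum_{\sigma}\beta_\mu^{m(\sigma)} \Bigr) = 2^{\alpha-1}\Bigl( Z(G,\beta) + \frac{Z(G,\beta_\nu)^{\alpha}}{Z(G,\beta_\mu)^{\alpha-1}} \Bigr).
\]
It remains to show $Z(G,\beta_\nu)^\alpha/Z(G,\beta_\mu)^{\alpha-1} \le Z(G,\beta)$. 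By the first display this quotient equals $Z(G,\beta)\big/\sum_\sigma \nu^\alpha(\sigma)/\mu^{\alpha-1}(\sigma) = Z(G,\beta)\big/\E[\mu]{(\nu/\mu)^{\alpha}}$, and $\E[\mu]{(\nu/\mu)^{\alpha}} \ge \bigl(\E[\mu]{\nu/\mu}\bigr)^{\alpha} = 1$ by Jensen's inequality, since $x\mapsto x^\alpha$ is convex for $\alpha\ge1$ and $\E[\mu]{\nu/\mu} = \sum_\sigma\nu(\sigma)=1$. Combining, the whole quantity is at most $2^{\alpha-1}Z(G,\beta)$, so $C = \max\{2^{\alpha-1},\ 1/B_{\alpha,b}(\theta)\}$ works, a constant depending only on $\Delta,\alpha,\beta_\nu,\beta_\mu$.

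\noindent The argument is short, so I do not expect a real obstacle; the only steps needing care are routine bookkeeping: checking that the $k=\alpha$ Ising model in $\+F(\nu,\mu,\alpha)$ coincides with $(G,\beta)$ (so that $Z_\alpha = Z(G,\beta)$, via \eqref{eq:beta-for-hardness}), and verifying the hypotheses of \Cref{lem:error-bound} for these instances, namely that $d_{\mathrm{par}}(\nu,\mu)$ and the marginal lower bound $b$ are both bounded below by constants depending only on $\Delta,\beta_\nu,\beta_\mu$.
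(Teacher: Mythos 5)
Your proof is correct, and its upper bound takes a genuinely different — and simpler — route than the paper's. Both of you handle the lower bound the same way: invoke \Cref{lem:error-bound} (after checking the constant marginal bound $b$ and the constant parameter distance $\theta$), expand $\left(\nu/\mu + 1\right)^\alpha$ binomially, and keep only the nonnegative $k=\alpha$ term, which is $Z_\mu^{\alpha-1}Z(G,\beta)/Z_\nu^\alpha$ by the $k=\alpha$ case of \eqref{eq:k-ising-model}. For the upper bound, the paper routes through \Cref{cor:error-bound-for-chi-alpha-divergence} and then proves a term-by-term domination inequality, $\frac{Z_\mu^k}{Z_\nu^{k+1}}Z_{k+1} \ge \frac{1}{2}\frac{Z_\mu^{k-1}}{Z_\nu^k}Z_k$, via a somewhat delicate pairing of configurations $\{\sigma,\tau\}$ sorted by $m(\sigma)$ (equations \eqref{eq:domination-inequality-for-beta-k}--\eqref{eq:domination-inequality-for-beta-0}); this shows the $k=\alpha$ term dominates geometrically and gives the constant $3^\alpha$. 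You instead start from the exact identity $\dvg{\nu}{\mu}{\chi^\alpha}\cdot Z_\nu^\alpha/Z_\mu^{\alpha-1} = \tfrac12\sum_\sigma\beta_\mu^{m(\sigma)}\bigl|r^{m(\sigma)}-Z_\nu/Z_\mu\bigr|^\alpha$, apply the elementary power-mean bound $|x-y|^\alpha\le (x+y)^\alpha\le 2^{\alpha-1}(x^\alpha+y^\alpha)$ pointwise, recognize $\beta_\mu r^\alpha=\beta$, and finish with one application of Jensen's inequality to get $Z_\nu^\alpha/Z_\mu^{\alpha-1}\le Z(G,\beta)$. This bypasses the domination lemma and \Cref{cor:error-bound-for-chi-alpha-divergence} entirely on the upper-bound side, yields a slightly better constant $2^{\alpha-1}$ versus $3^\alpha C'$, and is arguably cleaner. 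The trade-off is that the paper's pairing argument yields the reusable statement \eqref{eq:domination-inequality} about the whole family $\{Z_k\}$, whereas your argument only bounds the $k=\alpha$ term against the sum. One small bookkeeping point you got right that the paper glosses over: $\theta$ in \Cref{lem:error-bound} must lie in $(0,1]$, so taking $\theta=\min\{1,\tfrac12\ln(\beta_\mu/\beta_\nu)\}$ rather than the raw parameter distance is the correct move.
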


Assume \Cref{lem:hardness-for-chi-alpha-divergence} holds. We prove \Cref{thm:hardness-without-F}.

\begin{proof}[Proof of \Cref{thm:hardness-without-F}]
Fix parameters $\Delta,\alpha,\beta_\nu,\beta_\mu$. Let $\beta$ be defined in \eqref{eq:beta-for-hardness}.

Suppose there exists an FPRAS for $\dvg{\nu}{\mu}{\chi^\alpha}$. By run algorithms independently and take median, we can approximate $\dvg{\nu}{\mu}{\chi^\alpha}$ within the factor of $2$ in polynomial time with probability at least $0.99$. By our assumption, $\beta_\nu,\beta_\mu \geq \frac{\Delta - 2}{\Delta}$. We can compute $Z(G,\beta_\nu)$ within the factor of $2$ by applying the algorithms in~\cite{CCYZ25} (if $\beta_\nu \in [\frac{\Delta - 2}{\Delta}, 1]$) or the algorithms in~\cite{JS93} (if $\beta_\nu > 1$) in polynomial time, where the algorithm succeeds with probability at least $0.99$. Similarly, we can compute an approximation to $Z(G,\beta_\mu)$.

By \Cref{lem:hardness-for-chi-alpha-divergence}, we get a constant approximation to $Z(G,\beta)$ in polynomial time with probability at least $0.99$.  The hardness result follows from \Cref{lem:hardness-below-threshold}.
\end{proof}

The rest of this section is devoted to prove \Cref{lem:hardness-for-chi-alpha-divergence}. To prove the lemma, we consider a middle term  $\sum_{\sigma \in \Omega} \mu(\sigma) ( \frac{\nu(\sigma)}{\mu(\sigma)} + 1 )^\alpha$. The following result is a corollary of \Cref{lem:error-bound}.
\begin{corollary}\label{cor:error-bound-for-chi-alpha-divergence}
    There exists $C' = C'(\Delta,\alpha,\beta_\nu,\beta_\mu)$ such that
    \begin{align*}
            \frac{1}{C'} \cdot \sum_{\sigma \in \{\pm\}^V} \mu(\sigma) \left( \frac{\nu(\sigma)}{\mu(\sigma)} + 1 \right)^\alpha \leq \dvg{\nu}{\mu}{\chi^\alpha} \leq C' \cdot \sum_{\sigma \in \{\pm\}^V} \mu(\sigma) \left( \frac{\nu(\sigma)}{\mu(\sigma)} + 1 \right)^\alpha.
    \end{align*}
    \end{corollary}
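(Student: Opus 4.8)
The plan is to prove the two inequalities separately: the lower bound on $\dvg{\nu}{\mu}{\chi^\alpha}$ is precisely what \Cref{lem:error-bound} delivers once its hypotheses are checked to hold with \emph{constant} parameters, while the upper bound is an elementary triangle inequality requiring nothing beyond the definition of $\chi^\alpha$-divergence.

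First I would verify the hypotheses of \Cref{lem:error-bound}. Since $h^\nu = h^\mu = \boldsymbol 0$, the parameter distance equals $d_{\mathrm{par}}(\nu,\mu) = \lVert J^\nu - J^\mu \rVert_{\max} = \tfrac12\lvert \ln\beta_\nu - \ln\beta_\mu\rvert = \tfrac12\ln\frac{\beta_\mu}{\beta_\nu}$, which is a strictly positive constant because $\beta_\mu > \beta_\nu$ and $G$, being $\Delta$-regular with $\Delta\ge3$, has at least one edge. Set $\theta \triangleq \min\bigl\{\tfrac12\ln(\beta_\mu/\beta_\nu),\,1\bigr\}\in(0,1]$, so that $d_{\mathrm{par}}(\nu,\mu)\ge\theta$. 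For the marginal lower bound, as already observed in the discussion following \Cref{thm:hardness-without-F}, both $\nu$ and $\mu$ are $b$-marginally bounded with a constant $b=b(\Delta,\beta_\nu,\beta_\mu)\in(0,1]$: the worst conditional marginal at a vertex $v$ is attained when all of its at most $\Delta$ neighbours are pinned to the disfavoured spin, giving $\mu_v^\sigma(c)\ge \frac{1}{1+\beta^\Delta}$ when $\beta\ge1$ and $\mu_v^\sigma(c)\ge\frac{\beta^\Delta}{1+\beta^\Delta}$ when $\beta<1$; taking the minimum of these quantities over $\beta\in\{\beta_\nu,\beta_\mu\}$ yields $b$.

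With $\theta$ and $b$ fixed, \Cref{lem:error-bound} gives
\[
\dvg{\nu}{\mu}{\chi^\alpha}\;\ge\;B_{\alpha,b}(\theta)\cdot\sum_{\sigma\in\{\pm\}^V}\mu(\sigma)\Bigl(\tfrac{\nu(\sigma)}{\mu(\sigma)}+1\Bigr)^{\alpha},
\]
and $B_{\alpha,b}(\theta)=\Theta_{\alpha,b}(\theta^\alpha)$ is a positive constant depending only on $\Delta,\alpha,\beta_\nu,\beta_\mu$. For the reverse direction, $\nu(\sigma)/\mu(\sigma)\ge0$ for every $\sigma$, hence $\lvert \nu(\sigma)/\mu(\sigma)-1\rvert^\alpha\le(\nu(\sigma)/\mu(\sigma)+1)^\alpha$, and multiplying by $\tfrac12\mu(\sigma)$ and summing gives
\[
\dvg{\nu}{\mu}{\chi^\alpha}=\tfrac12\sum_{\sigma\in\{\pm\}^V}\mu(\sigma)\Bigl\lvert\tfrac{\nu(\sigma)}{\mu(\sigma)}-1\Bigr\rvert^{\alpha}\;\le\;\sum_{\sigma\in\{\pm\}^V}\mu(\sigma)\Bigl(\tfrac{\nu(\sigma)}{\mu(\sigma)}+1\Bigr)^{\alpha}.
\]
Setting $C' \triangleq \max\bigl\{1,\,1/B_{\alpha,b}(\theta)\bigr\}$ then establishes both bounds simultaneously, completing the proof of the corollary.

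There is no genuine obstacle in this argument; the only point requiring (routine) care is confirming that $\theta$ and $b$ are bona fide constants, i.e.\ bounded away from $0$ as functions of $\Delta,\alpha,\beta_\nu,\beta_\mu$ alone — the exponent $\alpha$ enters only through $B_{\alpha,b}(\theta)$ and the factor $\tfrac12$ is harmless. In particular the corollary uses only that $B_{\alpha,b}(\theta)>0$, not the precise optimized value of this constant obtained in \Cref{lem:error-bound}.
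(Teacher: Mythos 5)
Your proof is correct and follows essentially the same route as the paper: for the lower bound you apply \Cref{lem:error-bound} after observing that $\theta$ and $b$ can be chosen as constants depending only on $\Delta,\beta_\nu,\beta_\mu$, and the upper bound is the elementary triangle inequality $|x-1|^\alpha\le(x+1)^\alpha$ for $x\ge0$ (the paper just calls this direction ``trivial''). You are slightly more careful than the paper in two small respects: you cap $\theta$ at $1$ so that the hypothesis $0<\theta\le1$ of \Cref{lem:error-bound} is actually satisfied, and you write the parameter distance correctly as $\tfrac12\ln(\beta_\mu/\beta_\nu)$, whereas the paper has a sign slip writing $\tfrac12(\ln\beta_\nu-\ln\beta_\mu)$, which is negative under the standing assumption $\beta_\mu>\beta_\nu$.
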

    \begin{proof}
    The second inequality is trivial. For the first inequality, by \Cref{lem:error-bound}, we can set $\theta$ in the lemma to be $\frac{1}{2} \left( \ln \beta_\nu - \ln \beta_\mu \right)$, which is a constant depending on $\beta_\nu,\beta_\mu$.
    Furthermore, the underlying graph $G$ has constant degree $\Delta$, both $\beta_\mu$ and $\beta_\nu$ are constants. Hence, by the conditional independence property of the Ising model, both $\nu$ and $\mu$ have a constant marginal lower bound $b = b(\Delta,\beta_\nu,\beta_\mu)$. 
    Hence $B_{\alpha,b}(\theta) = \Theta_{\alpha,\beta_\nu,\beta_\mu,\Delta}(1)$ is a constant.
    \end{proof}
    With \Cref{cor:error-bound-for-chi-alpha-divergence}, \Cref{lem:hardness-for-chi-alpha-divergence} is a straightforward corollary of the following lemma.
    \begin{lemma}
        It holds that
        \begin{align*}
            {Z(G,\beta)} \leq \frac{Z(G,\beta_\nu)^\alpha}{Z(G,\beta_\mu)^{\alpha-1}} \sum_{\sigma \in \{\pm\}^V} \mu(\sigma) \left( \frac{\nu(\sigma)}{\mu(\sigma)} + 1 \right)^\alpha \leq 3^\alpha \cdot Z(G,\beta).
        \end{align*}
    \end{lemma}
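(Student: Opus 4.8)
The plan is to rewrite the whole inequality in terms of a single scalar random variable and then reduce it to an elementary moment bound. Set $Z_\nu \triangleq Z(G,\beta_\nu)=\sum_{\sigma}\beta_\nu^{m(\sigma)}$ and $Z_\mu \triangleq Z(G,\beta_\mu)$, and for each $\sigma\in\Omega$ let $a_\sigma\triangleq(\beta_\nu/\beta_\mu)^{m(\sigma)}$. As recorded at the start of this section, $\mu(\sigma)=\beta_\mu^{m(\sigma)}/Z_\mu$ and $\nu(\sigma)=\beta_\nu^{m(\sigma)}/Z_\nu$, so $\frac{\nu(\sigma)}{\mu(\sigma)}=a_\sigma\cdot\frac{Z_\mu}{Z_\nu}$. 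A one-line computation gives $\E[\mu]{a}=\frac1{Z_\mu}\sum_\sigma\beta_\nu^{m(\sigma)}=Z_\nu/Z_\mu$, hence $\frac{\nu(\sigma)}{\mu(\sigma)}=a_\sigma/\E[\mu]{a}$; and since $\beta=(\beta_\nu/\beta_\mu)^\alpha\beta_\mu$ we have $\beta^{m(\sigma)}=a_\sigma^\alpha\,\beta_\mu^{m(\sigma)}$, so $Z(G,\beta)=\sum_\sigma a_\sigma^\alpha\beta_\mu^{m(\sigma)}=Z_\mu\,\E[\mu]{a^\alpha}$.

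Plugging these identities in, I would compute
\[
\frac{Z_\nu^\alpha}{Z_\mu^{\alpha-1}}\sum_{\sigma}\mu(\sigma)\Bigl(\tfrac{\nu(\sigma)}{\mu(\sigma)}+1\Bigr)^\alpha=\frac{Z_\nu^\alpha}{Z_\mu^{\alpha-1}(\E[\mu]{a})^\alpha}\,\E[\mu]{(a+\E[\mu]{a})^\alpha}=Z_\mu\,\E[\mu]{(a+\E[\mu]{a})^\alpha},
\]
where the last step uses $(\E[\mu]{a})^\alpha=Z_\nu^\alpha/Z_\mu^\alpha$. Dividing by $Z_\mu>0$ and recalling $Z(G,\beta)=Z_\mu\E[\mu]{a^\alpha}$, the stated double inequality becomes exactly equivalent to $\E[\mu]{a^\alpha}\le\E[\mu]{(a+\E[\mu]{a})^\alpha}\le 3^\alpha\,\E[\mu]{a^\alpha}$.

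The lower bound is immediate from $a_\sigma+\E[\mu]{a}\ge a_\sigma\ge 0$, which gives $(a_\sigma+\E[\mu]{a})^\alpha\ge a_\sigma^\alpha$ pointwise; taking $\E[\mu]{\cdot}$ finishes it. For the upper bound I would apply convexity of $t\mapsto t^\alpha$ twice: first $(x+y)^\alpha\le 2^{\alpha-1}(x^\alpha+y^\alpha)$ for $x,y\ge 0$ yields $\E[\mu]{(a+\E[\mu]{a})^\alpha}\le 2^{\alpha-1}\bigl(\E[\mu]{a^\alpha}+(\E[\mu]{a})^\alpha\bigr)$, and then Jensen's inequality $(\E[\mu]{a})^\alpha\le\E[\mu]{a^\alpha}$ gives $\E[\mu]{(a+\E[\mu]{a})^\alpha}\le 2^\alpha\,\E[\mu]{a^\alpha}\le 3^\alpha\,\E[\mu]{a^\alpha}$, using $2^\alpha\le 3^\alpha$ (so in fact the constant $3^\alpha$ is not tight). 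A naive bound such as $a_\sigma+\E[\mu]{a}\le 2$ does not suffice, since it would require $\E[\mu]{a^\alpha}\ge(2/3)^\alpha$, which can fail; the point is precisely that the factor $\E[\mu]{(a+\E[\mu]{a})^\alpha}/\E[\mu]{a^\alpha}$ is dimensionless and controlled by convexity alone.

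There is no serious obstacle in the argument; the only thing to watch is bookkeeping — confirming that the $Z(G,\cdot)$ appearing in the statement are the combinatorial partition functions $\sum_\sigma\beta^{m(\sigma)}$ (the global normalizing factors cancel in every ratio and in $\mu(\sigma)$), and fixing the correspondence $\nu\leftrightarrow\beta_\nu$, $\mu\leftrightarrow\beta_\mu$ so that $\E[\mu]{a}=Z_\nu/Z_\mu$ and the algebra of the second paragraph goes through. Once that is pinned down, the lemma is just the two displayed substitutions together with the elementary inequality of the third paragraph.
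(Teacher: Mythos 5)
Your proof is correct, and it takes a genuinely different route than the paper's.

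\textbf{What the paper does.} The paper's proof proceeds by binomial expansion: it writes
$\sum_{\sigma}\mu(\sigma)\bigl(\tfrac{\nu(\sigma)}{\mu(\sigma)}+1\bigr)^\alpha = \sum_{k=0}^\alpha \binom{\alpha}{k}\frac{Z_\mu^{k-1}}{Z_\nu^k}Z_k$, where $Z_k$ is the partition function of $(G,\beta_k)$ with $\beta_k=\beta_\nu^k/\beta_\mu^{k-1}$, and then establishes the domination inequality $\frac{Z_\mu^k}{Z_\nu^{k+1}}Z_{k+1}\ge \tfrac12\cdot\frac{Z_\mu^{k-1}}{Z_\nu^k}Z_k$ for $0\le k\le\alpha-1$. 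This is proved by pairing configurations $\{\sigma,\tau\}$ (with a case split on whether $m(\sigma)\ge m(\tau)$ and whether $k\ge 1$ or $k=0$) and crucially uses $\beta_\mu>\beta_\nu$ to make the pairwise comparison go through. Summing with binomial coefficients and the geometric factor $2^{\alpha-k}$ then delivers the $3^\alpha$.

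\textbf{What you do instead.} You absorb all the partition-function bookkeeping into a single nonnegative random variable $a=(\beta_\nu/\beta_\mu)^{m(\cdot)}$ under $\mu$, observe $\E[\mu]{a}=Z_\nu/Z_\mu$, $\tfrac{\nu}{\mu}=a/\E[\mu]{a}$, and $Z(G,\beta)=Z_\mu\E[\mu]{a^\alpha}$, and reduce the whole claim to the scale-free moment inequality $\E[\mu]{a^\alpha}\le\E[\mu]{(a+\E[\mu]{a})^\alpha}\le 3^\alpha\E[\mu]{a^\alpha}$. The lower bound is pointwise monotonicity; the upper bound is the power-mean inequality $(x+y)^\alpha\le 2^{\alpha-1}(x^\alpha+y^\alpha)$ followed by Jensen $(\E[\mu]{a})^\alpha\le\E[\mu]{a^\alpha}$, giving in fact $2^\alpha$. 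This replaces the paper's bespoke combinatorial domination lemma by two one-line convexity facts, sidesteps the ordering hypothesis $\beta_\mu>\beta_\nu$ entirely (your argument only needs $a\ge 0$ and $\alpha\ge 1$), and yields a sharper constant. The paper's approach does have the merit of directly exhibiting the geometric dominance structure among $\{Z_k\}$, which is consonant with its binomial-expansion viewpoint elsewhere, but for this lemma your moment reformulation is cleaner, more general, and tighter.

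One minor bookkeeping note: the paper's statement of the surrounding lemma (``let $\mu$ and $\nu$ be the Gibbs distributions of the Ising models $(G,\beta_\nu)$ and $(G,\beta_\mu)$'') appears to have the labels transposed; your reading $\nu\leftrightarrow\beta_\nu$, $\mu\leftrightarrow\beta_\mu$ is the one the rest of the paper actually uses (e.g.\ $Z_0=Z_\mu$, $Z_1=Z_\nu$, $\beta_0=\beta_\mu$, $\beta_1=\beta_\nu$), so your identification is correct.
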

 \begin{proof}
 To simplify the notation, we fix the graph $G$ in the proof. We denote the partition function $Z(G,\beta_\nu)=Z_\nu$ and $Z(G,\beta_\mu)=Z_\mu$. The family of Ising models in $\+F(\nu,\mu,\alpha)$ in~\eqref{eq:family-of-isings} can be written as $(G,\beta_k)$, where $\beta_k = \frac{\beta_\nu^k}{\beta_\mu^{k-1}}$ for $0 \leq k \leq \alpha$. We use $Z_k$ to denote the partition function of $(G,\beta_k)$. We remark that $Z_0 = Z_\mu$, $Z_1 = Z_\nu$, $\beta$ in~\eqref{eq:beta-for-hardness} is the same as $\beta_\alpha$, and $Z(G,\beta) = Z_\alpha$.

 We use the binomial expansion that
 \begin{align}\label{eq:expansion-for-domination}
     \sum_{\sigma \in \{\pm\}^V} \mu(\sigma) \left( \frac{\nu(\sigma)}{\mu(\sigma)} + 1 \right)^\alpha = \sum_{k=0}^\alpha \binom{\alpha}{k} \sum_{\sigma \in \{\pm\}^V} \frac{\nu^k(\sigma)}{\mu^{k-1}(\sigma)} = \sum_{k=0}^\alpha \binom{\alpha}{k} \frac{Z_\mu^{k-1}}{Z_\nu^k} \cdot Z_k.
 \end{align}
 We claim the following inequality holds. For any $0 \leq k \leq \alpha - 1$,
 \begin{align}\label{eq:domination-inequality}
     \frac{Z_\mu^k}{Z_\nu^{k+1}}Z_{k+1} \geq \frac{1}{2} \frac{Z_\mu^{k-1}}{Z_\nu^k}Z_k.
 \end{align}
 We prove \eqref{eq:domination-inequality} later. With this conclusion, combining with \eqref{eq:expansion-for-domination}, we know that 
 \begin{align}\label{eq:bounded-domination-factor}
     \frac{Z_\mu^{\alpha-1}}{Z_\nu^\alpha} Z_\alpha \leq \sum_{k=0}^\alpha \binom{\alpha}{k} \frac{Z_\mu^{k-1}}{Z_\nu^k} \cdot Z_k \leq \frac{Z_\mu^{\alpha-1}}{Z_\nu^\alpha} Z_\alpha \cdot \sum_{k=0}^\alpha \binom{\alpha}{k} 2^{\alpha-k} = 3^\alpha \frac{Z_\mu^{\alpha-1}}{Z_\nu^\alpha} Z_\alpha,
 \end{align}
 where the first inequality is trivial and the second inequality is due to \eqref{eq:domination-inequality}. Rearranging the terms in~\eqref{eq:bounded-domination-factor}, we get the desired result.

 Finally, we prove \eqref{eq:domination-inequality}. Recall that $m(\sigma)$ denotes the number of monochromatic edges under the configuration $\sigma$.  Since $\beta_\mu > \beta_\nu$, the sequence $\beta_k$ is monotonically decreasing as $\beta_k = \frac{\beta_\nu^k}{\beta_\mu^{k-1}}$. 
 Fix an unordered pair of configurations $\{\sigma,\tau\} \in \{\pm\}^V \times \{\pm\}^V$ (it may be possible that $\sigma = \tau$). Without loss of generality, we assume $m(\sigma) \geq m(\tau)$, otherwise we can swap the two configurations.

We have the following inequality for all $1 \leq k \leq \alpha - 1$, 
 \begin{align}\label{eq:domination-inequality-for-beta-k}
    &\beta_\mu^{m(\sigma)} \beta_{k+1}^{m(\tau)} + \beta_\mu^{m(\tau)} \beta_{k+1}^{m(\sigma)} \overset{(\textsf{I})}{\geq} \beta_\mu^{m(\sigma)} \beta_{k+1}^{m(\tau)}  \overset{(\textsf{II})}{=} \left( \frac{\beta_\mu}{\beta_\nu} \right)^{m(\sigma)} \beta_\nu^{m(\sigma)} \cdot \left( \frac{\beta_\nu}{\beta_\mu} \right)^{m(\tau)} \beta_k^{m(\tau)}\notag\\ 
    = &\left( \frac{\beta_\mu}{\beta_\nu} \right)^{m(\sigma) - m(\tau)} \beta_\nu^{m(\sigma)} \beta_k^{m(\tau)}
    \overset{(\textsf{III})}{\geq} \beta_\nu^{m(\sigma)} \beta_k^{m(\tau)} \overset{(\textsf{IV})}{\geq} \frac{1}{2} \left( \beta_\nu^{m(\sigma)} \beta_k^{m(\tau)} + \beta_\nu^{m(\tau)} \beta_k^{m(\sigma)} \right),
\end{align}
where the inequality $(\textsf{I})$ throws the second term, and the equality $(\textsf{II})$ is due to the definition of $\beta_k$, the inequality $(\textsf{III})$ is due to the assumption $m(\sigma) \geq m(\tau)$ and $\beta_\mu > \beta_\nu$, and the last inequality $(\textsf{IV})$ is due $m(\sigma) \geq m(\tau)$ and $\beta_{\nu} = \beta_1 \geq \beta_k$ for $k \geq 1$.

If $k = 0$, note that $\beta_0 = \beta_\mu$ and $\beta_1 = \beta_\nu$, we have
\begin{align}\label{eq:domination-inequality-for-beta-0}
    \beta_\mu^{m(\sigma)} \beta_{k+1}^{m(\tau)} + \beta_\mu^{m(\tau)} \beta_{k+1}^{m(\sigma)} = \beta_\nu^{m(\sigma)} \beta_k^{m(\tau)} + \beta_\nu^{m(\tau)} \beta_k^{m(\sigma)} \geq \frac{1}{2} \left( \beta_\nu^{m(\sigma)} \beta_k^{m(\tau)} + \beta_\nu^{m(\tau)} \beta_k^{m(\sigma)} \right).
\end{align}

To prove~\eqref{eq:domination-inequality},  we need to prove $Z_\mu Z_{k+1} \geq \frac{1}{2} Z_\nu Z_k$ for all $0 \leq k \leq \alpha - 1$. Recall that the partition function $Z_k = \sum_{\sigma \in \{\pm\}^V} \beta_k^{m(\sigma)}$. We have the following inequality
\begin{align*}
    Z_\mu Z_{k+1} &=  \sum\limits_{\sigma,\tau \in \{\pm\}^V} \beta_\mu^{m(\sigma)} \beta_{k+1}^{m(\tau)} 
     = \sum_{\substack{\text{unordered pair}\{\sigma,\tau\}\\ \sigma \neq \tau}} \tp{\beta_\mu^{m(\sigma)}\beta_{k+1}^{m(\tau)} + \beta_\mu^{m(\tau)}\beta_{k+1}^{m(\sigma)}} + \sum_{\sigma \in \{\pm\}^V} \beta_\mu^{m(\sigma)}\beta_{k+1}^{m(\sigma)}\\
 (\ast) \quad   &\geq \frac{1}{2}  \sum_{\substack{\text{unordered pair}\{\sigma,\tau\}\\ \sigma \neq \tau}} \tp{\beta_\nu^{m(\sigma)}\beta_{k}^{m(\tau)} + \beta_\nu^{m(\tau)}\beta_{k}^{m(\sigma)}} + \frac{1}{2}\sum_{\sigma \in \{\pm\}^V} \beta_\nu^{m(\sigma)}\beta_{k}^{m(\sigma)}  = \frac{1}{2} Z_\nu Z_k,
\end{align*}
where the inequality $(\ast)$ is due to \eqref{eq:domination-inequality-for-beta-k} and \eqref{eq:domination-inequality-for-beta-0} and two inequalities \eqref{eq:domination-inequality-for-beta-k} and \eqref{eq:domination-inequality-for-beta-0} hold even if $\sigma = \tau$.
This proves the inequality in~\eqref{eq:domination-inequality}.
\end{proof}

\ifthenelse{\boolean{DoubleBlind}}{}{
    \section*{Acknowledgements}
    Weiming Feng is supported by the Early Career Scheme of the Hong Kong Research Grants Council under grant number 27202725.
    Yucheng Fu was supported by the 2025 Summer Research Internship Programme in the School of Computing and Data Science at The University of Hong Kong. 
}
    


    
    


\bibliographystyle{alpha}
\bibliography{refs}
\end{document}